\newcommand{\norm}[1]{\left\lVert#1\right\rVert}
\newcommand{\norms}[1]{\Big\lVert#1\Big\rVert}
\newcommand{\etal}{\textit{et al}.}
\newcommand{\mc}{\textsf{Monochromatic-Clique}}
\newcommand{\smc}{\textsf{Strong Monochromatic-Clique}}
\newtheorem{theo}{Theorem}[section]
\newtheorem{theorem}[theo]{Theorem}
\newtheorem{lemma}[theo]{Lemma}
\newtheorem{remark}[theo]{Remark}
\newtheorem{definition}[theo]{Definition}
\newtheorem{proposition}[theo]{Proposition}
\newtheorem{corollary}[theo]{Corollary}
\tikzstyle{edge} = [fill,opacity=.5,fill opacity=.5,line cap=round, line join=round, line width=30pt]
\title{Almost Optimal Inapproximability of Multidimensional Packing Problems}
\author{Sai Sandeep\thanks{{\tt spallerl@andrew.cmu.edu}. Research supported in part by NSF grants CCF-1563742 and CCF-1908125.}}
\date{Computer Science Department \\ Carnegie Mellon University \\ Pittsburgh, PA 15213}
\begin{document}
\maketitle
\thispagestyle{empty}
\begin{abstract}
	Multidimensional packing problems generalize the classical packing problems such as Bin Packing, Multiprocessor Scheduling by allowing the jobs to be $d$-dimensional vectors. While the approximability of the scalar problems is well understood, there has been a significant gap between the approximation algorithms and the hardness results for the multidimensional variants. In this paper, we close this gap by giving almost tight hardness results for these problems.
	\begin{enumerate}
		\item We show that Vector Bin Packing has no $\Omega( \log d)$ factor asymptotic approximation algorithm when $d$ is a large constant, assuming $\textsf{P}\neq \textsf{NP}$. This matches the $\ln d + O(1)$ factor approximation algorithms (Chekuri, Khanna SICOMP 2004, Bansal, Caprara, Sviridenko SICOMP 2009, Bansal, Eli\'{a}\v{s}, Khan SODA 2016) upto constants. 
		\item We show that Vector Scheduling has no polynomial time algorithm with an approximation ratio of $\Omega\left( (\log d)^{1-\epsilon}\right)$ when $d$ is part of the input, assuming $\textsf{NP}\nsubseteq \textsf{ZPTIME}\left( n^{(\log n)^{O(1)}}\right)$. This almost matches the $O\left( \frac{\log d}{\log \log d}\right)$ factor algorithms(Harris, Srinivasan JACM 2019, Im, Kell, Kulkarni, Panigrahi SICOMP 2019). We also show that the problem is NP-hard to approximate within $(\log \log d)^{\omega(1)}$. 
		\item We show that Vector Bin Covering is NP-hard to approximate within $\Omega\left( \frac{\log d}{\log \log d}\right)$ when $d$ is part of the input, almost matching the $O(\log d)$ factor algorithm (Alon \etal, Algorithmica 1998). 
	\end{enumerate} 
	Previously, no hardness results that grow with $d$ were known for Vector Scheduling and Vector Bin Covering when $d$ is part of the input and for Vector Bin Packing when $d$ is a fixed constant. 

\end{abstract}

\newpage

\section{Introduction}
\label{sec:intro}
Bin Packing and Multiprocessor Scheduling (also known as Makespan Minimization) are some of the most fundamental problems in Combinatorial Optimization. 
They have been studied intensely from the early days of approximation algorithms and have had a great impact on the field. 
These two are packing problems where we have $n$ jobs with certain sizes, and the objective is to pack them into bins efficiently. In Bin Packing, each bin has unit size and the objective is to minimize the number of bins, while in Multiprocessor Scheduling, we are given a fixed number of bins and the objective is to minimize the maximum load in a bin. 
A relatively less studied but natural problem is Bin Covering, where the objective is to assign the jobs to the maximum number of bins such that in each bin, the load is at least $1$.
These problems are well understood in terms of approximation algorithms: all the three problems are NP-hard, and all of them have a Polynomial Time Approximation Scheme (PTAS)~\cite{VegaL81, HochbaumS87, CsirikJK01}. 

In this paper, we study the approximability of the multidimensional generalizations of these three problems. The three corresponding problems are Vector Bin Packing, Vector Scheduling, and Vector Bin Covering. Apart from their theoretical importance, these problems are widely applicable in practice~\cite{spieksma1994branch, ShachnaiT12,panigrahy2011heuristics} where the jobs often have multiple dimensions such as CPU, Hard disk, memory, etc.

In the Vector Bin Packing problem, the input is a set of $n$ vectors in $[0,1]^d$ and the goal is to partition the vectors into the minimum number of parts such that in each part, the sum of vectors is at most $1$ in every coordinate. 
The problem behaves differently from Bin Packing even when $d=2$: Woeginger~\cite{Woeginger97} proved\footnote{Very recently, Ray~\cite{ray2021aptas} found an oversight in Woeginger's original proof and gave a revised APX hardness proof for the problem.} that there is no asymptotic\footnote{The asymptotic approximation ratio (formally defined in~\Cref{sec:prelims}) of an algorithm is the ratio of its cost and the optimal cost when the optimal cost is large enough. All the approximation factors mentioned in this paper for Vector Bin Packing are asymptotic.} PTAS for $2$-dimensional Vector Bin Packing, assuming $\textsf{P} \neq \textsf{NP}$. 
On the algorithmic front, the PTAS for Bin Packing~\cite{VegaL81} easily implies a $d+\epsilon$ approximation for Vector Bin Packing. 
When $d$ is part of the input, this is almost tight: there is a lower bound of $d^{1-\epsilon}$ shown by~\cite{CK04}\footnote{\cite{CK04} actually give $d^{\frac 12 -\epsilon}$ hardness, but it has been shown later (see e.g., ~\cite{ChristensenKPT17}) that a slight modification of their reduction gives $d^{1-\epsilon}$ hardness.}.
When $d$ is a fixed constant\footnote{The algorithms are now allowed to run in time $n^{f(d)}$, for some function $f$.}, much better algorithms are known~\cite{CK04,BansalCS09,BEK06} that get $\ln d+O(1)$ approximation guarantee. However, the best hardness factor (for arbitrary constant $d$) is still the APX-hardness result of the $2$-dimensional problem due to Woeginger from 1997. 

Closing this gap, either by obtaining a $O(1)$ factor algorithm or showing a hardness factor that is a function of $d$, has remained a challenging open problem.
It is one of the ten open problems in a recent survey on multidimensional scheduling problems~\cite{ChristensenKPT17}. It also appeared in a recent report by Bansal~\cite{Bansal17} on open problems in scheduling. 
In fact, to the best of our knowledge, super constant integrality gap instances for the configuration LP relaxation of the problem were also not known. For the integer instances i.e. when the vectors are from $\{0,1\}^d$ (which are the hard instances for Vector Scheduling and Vector Bin Covering), there is an asymptotic PTAS since there are $O_d(1)$ item types. 

In the Vector Scheduling problem, given a set of $n$ vector jobs in $[0,1]^d$, and $m$ identical machines, the objective is to assign the jobs to machines to minimize the maximum $\ell_{\infty}$ norm of the load on the machines.
Chekuri and Khanna~\cite{CK04} introduced the problem as a natural generalization of Multiprocessor Scheduling and obtained a PTAS for the problem when $d$ is a fixed constant. 
When $d$ is part of the input, they obtained a $O(\log^2 d)$ factor approximation algorithm. 
They also showed that it is NP-hard to obtain a $C$ factor approximation algorithm for the problem, for any constant $C$.
Meyerson, Roytman, and Tagiku~\cite{MRT13} gave an improved $O(\log d)$ factor algorithm while the current best factor is $O\left( \frac{\log d}{\log \log d}\right)$ due to Harris and Srinivasan~\cite{HarrisS19} and Im, Kell, Kulkarni, and Panigrahi~\cite{IKKP19}. The algorithm of Harris and Srinivasan~\cite{HarrisS19} works for the more general setting of unrelated machines where each job can have a different vector load for each machine. However, no super constant hardness is known even in this unrelated machines setting. 

In the Vector Bin Covering problem, the input is a set of $n$ vectors in $[0,1]^d$. The objective is to partition these into the maximum number of parts such that in each part, the sum of vectors is at least $1$ in every coordinate. 
It is also referred to as ``dual Vector Packing'' in the literature. 
This problem is introduced by Alon \etal ~\cite{AlonACESVW98} who gave a $O(\log d)$ factor approximation algorithm. They also gave a $d$ factor algorithm using a method from the area of compact vector approximation that outperforms the above algorithm for small values of $d$. On the hardness front, Ray~\cite{ray2021aptas} showed that the $2$-dimensional Vector Bin Covering problem is hard to approximate within a factor of $\frac{998}{997}$.

\subsection{Our Results}
We prove almost optimal hardness results for the three multidimensional problems discussed above. 

\subsubsection{Vector Bin Packing}

For the Vector Bin Packing problem, we prove a $\Omega(\log d)$ asymptotic hardness of approximation when $d$ is a large constant, matching the $\ln d+O(1)$ approximation algorithms~\cite{CK04,BansalCS09,BEK06}, up to constants. 
\begin{theorem}
\label{thm:main-vbp}
There exists an integer $d_0$ and a constant $c>0$ such that for all constants $d \geq d_0$, $d$-dimensional Vector Bin Packing has no asymptotic $c \log d$ factor polynomial time approximation algorithm unless $\emph{\textsf{P}}=\emph{\textsf{NP}}$. 
\end{theorem}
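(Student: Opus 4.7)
The plan is to reduce from the \mc{} (Monochromatic-Clique) problem, whose hardness I expect to be established in its own section of the paper (its presence as a macro strongly suggests it is the key intermediate problem). In its gap form, \mc{} should ask, for parameters $k$ and $q$, to distinguish between (YES) graphs that admit a vertex partition into $k$ classes each avoiding a clique of size $q$, and (NO) graphs for which every partition into even $c\,k\log d$ classes leaves some class containing a $q$-clique. Given such a gap-hard starting point, vector bin packing maps onto it almost tautologically: the vectors are the vertices, bins are color classes, and the packing constraint must enforce ``no monochromatic $q$-clique.''

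The encoding step is where the constant dimension $d$ must be paid for. I would try the following. Let $\mathcal{C}$ be the family of $q$-cliques of the graph $G$; partition $\mathcal{C}$ into $d$ buckets $\mathcal{C}_1,\ldots,\mathcal{C}_d$ via a (derandomized) pseudorandom partition. For each vertex $v$ define $x_v\in[0,1]^d$ so that coordinate $i$ grows with $|\{C\in\mathcal{C}_i:v\in C\}|$, scaled so that the full sum over any $C\in\mathcal{C}_i$ just exceeds~$1$. Completeness: a $k$-coloring of $G$ with no monochromatic $q$-clique lifts to a packing into $k$ bins (with additional dummy vectors to make the argument asymptotic). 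Soundness: in any feasible bin no coordinate overflows, hence no bucket contains a full $q$-clique within the bin, so the bin corresponds to a clique-free color class; a packing into $k'$ bins gives a $k'$-coloring, and in the NO case this forces $k'\geq c\,k\log d$.

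The main obstacle, and the technical heart of the argument, will be designing the bucketing so that \emph{both} completeness and soundness hold simultaneously for a fixed constant~$d$. Compression schemes that make soundness easy (concentrating each clique into a single bucket) tend to make completeness fail (independent sets with many shared cliques saturate a coordinate prematurely), and vice versa. I expect the reduction to sidestep this either by using several parallel ``views'' of the graph (each coordinate testing a different projection) or by exploiting structural properties of the NO instances produced by the \mc{} reduction (for instance, bounded degree or clique regularity), so that the bucketing only needs to behave well on a restricted class of graphs. A secondary difficulty is arranging the reduction so that the gap is genuinely asymptotic: the optimum in the YES case must be made large (e.g.\ by taking many disjoint copies or by padding with trivially-packable dummy items), so that the factor $c\log d$ separates the YES and NO optima even after the asymptotic definition of the approximation ratio is applied.

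Finally, to match the theorem's quantification—\emph{for every sufficiently large constant}~$d$—the reduction should be uniform in $d$: a single gadget construction, parameterized by $d$, that takes a \mc{} instance and produces a $d$-dimensional VBP instance whose gap scales as $\Theta(\log d)$. The approximation factor being a constant (because $d$ is constant) means the reduction itself only needs to run in polynomial time for each fixed $d$; however, hiding the dependence on $d$ in constants throughout the gadget (e.g.\ in the threshold $1/q$, in the number of dummy items, and in the bucketing parameters) is where I expect the bookkeeping to get involved.
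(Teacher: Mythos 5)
Your reduction starts from the wrong problem. In this paper, Monochromatic Clique is the intermediate problem for \emph{Vector Scheduling} (\Cref{thm:main-vs}), where the number of machines is fixed and the objective is makespan. For Vector Bin Packing, the paper reduces from \textsf{Set Cover} on simple bounded set systems (\Cref{thm:bounded-set-cover}, due to Kumar, Arya, and Ramesh~\cite{KAR00}), which is NP-hard outright. This matters for the hypothesis of the theorem: the only hardness the paper (or anyone) proves for Monochromatic Clique with $B = (\log n)^{\Omega(1)}$ is under the assumption $\textsf{NP} \nsubseteq \textsf{ZPTIME}(n^{(\log n)^{O(1)}})$ (\Cref{thm:monochromatic-clique-hard}), whereas \Cref{thm:main-vbp} is stated under $\textsf{P}\neq\textsf{NP}$. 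Even if your encoding worked perfectly, you would prove a strictly weaker statement.

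More seriously, you correctly identify the crux of your encoding---designing a bucketing of the $q$-cliques into $d$ buckets, for a \emph{fixed constant} $d$, that simultaneously keeps completeness (no independent set accidentally saturates a coordinate) and soundness (every monochromatic $q$-clique overflows some coordinate)---and then you leave it unresolved, offering only a list of plausible workarounds. This is precisely where the difficulty lives, and there is no reason to expect that a pseudorandom partition of a possibly $n^q$-sized clique family into $d = O(1)$ buckets can certify clique-freeness bin by bin; generically, each coordinate aggregates far too much unrelated information. The paper sidesteps this entire issue by inverting the viewpoint: rather than encoding ``no monochromatic clique'' into $d$ coordinates, it embeds a \emph{set system} $\mathcal{S}$ as a VBP instance by finding $f : V \to [0,1]^d$ with $\norm{f(S)}_\infty \leq 1 \iff S \in \mathcal{S}^\downarrow$ (the \emph{packing dimension} of $\mathcal{S}$), so that feasible bins \emph{are exactly} subsets of sets in $\mathcal{S}$, and minimizing bins \emph{is exactly} set cover on $\mathcal{S}$. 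The technical content is then \Cref{thm:bounded-dimension}: simple set systems with sets of size $\leq k$ and element-frequency $\leq \Delta$ have packing dimension $(k\Delta)^{O(1)}$, proved by decomposing $\mathcal{S}^\downarrow$ into $(k\Delta)^{O(1)}$ sunflower-bouquets (\Cref{lem:structured}) and summing dimensions via \Cref{prop:dim-intersection}. The simplicity (pairwise intersections of size $\leq 1$) is essential---the paper notes that for arbitrary bounded set systems the packing dimension is necessarily exponential in $\Delta$---and that is exactly the structure your clique-bucketing scheme has no way of imposing.
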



We obtain our hardness result via a reduction from the set cover problem on certain structured instances. 
In the set cover problem, we are given a set system $\mathcal{S}\subseteq 2^V$ on a universe $V$, and the goal is to pick the minimum number of sets from $\mathcal{S}$ whose union is $V$.
Observe that Vector Bin Packing is a special case of the set cover problem with the vectors being the elements and every maximal set of vectors whose sum is at most $1$ in every coordinate (known as ``configurations'') being the sets. In fact, in the elegant Round \& Approx framework~\cite{BansalCS09, BEK06}, the Vector Bin Packing problem is viewed as a set cover instance, and the algorithms proceed by rounding the standard set cover LP. Towards proving the hardness of Vector Bin Packing, we ask the converse: \textit{Which families of set cover instances can be cast as $d$-dimensional Vector Bin Packing?}

We formalize this question using the notion of \emph{packing dimension} of a set system $\mathcal{S}$ on a universe $V$: it is the smallest integer $d$ such that there is an embedding $f:V \rightarrow [0,1]^d$ such that a set $S \subseteq V$ is in $\mathcal{S}$ if and only if 
\[
\norm{\sum_{v \in S}f(v)}_{\infty}\leq 1
\]
If a set system has packing dimension $d$, then the corresponding set cover problem can be embedded as a $d$-dimensional Vector Bin Packing instance.
However, it is not clear if the hard instances of the set cover problem have a low packing dimension. 
Indeed the instances in the $(1-\epsilon)\ln n$ set cover hardness~\cite{Feige98} have a large packing dimension that grows with $n$, which we cannot afford as we are operating in the constant $d$ regime. 
We get around this by starting our reduction from highly structured yet hard instances of set cover. In particular, we study simple bounded set systems which satisfy the following three properties: 
\begin{enumerate}
    \item The set system is \textit{simple}\footnote{Simple set families are also known as linear set families.} i.e., every pair of sets intersect in at most one element. 
    \item The cardinality of each set is at most $k$, a fixed constant. 
    \item Each element in the family is present in at most $\Delta =k^{O(1)}$ sets. 
\end{enumerate}
Kumar, Arya, and Ramesh~\cite{KAR00} proved that simple set cover i.e., set cover with the restriction that every pair of sets intersect in at most one element, is hard to approximate within $\Omega(\log n)$. We observe that by modifying the parameters slightly in their proof, we can obtain the $\Omega(\log k)$ hardness of simple bounded set cover.  

We prove that simple bounded set systems have packing dimension at most $k^{O(1)}$. Thus, the $\Omega(\log k)$ simple bounded set cover hardness translates to $\Omega(\log d)$ hardness of Vector Bin Packing when $d$ is a constant. Note that the optimal value of the set cover instances can be made arbitrarily large in terms of $k$ by starting with a Label Cover instance with an arbitrarily large number of edges. Thus, our Vector Bin Packing hardness holds for asymptotic approximation as well. 

Our upper bound on the packing dimension is obtained in two steps: First, we write the given simple bounded set system as an intersection of $(k\Delta)^{O(1)}$ structured simple bounded set systems on the same universe, and then we give an embedding using $(k\Delta)^{O(1)}$ dimensions bounding the packing dimension of these structured simple bounded set systems. 
This idea of decomposition into structured instances is inspired from a work of Chandran, Francis, and Sivadasan ~\cite{ChandranFS08} where an upper bound on the Boxicity of a graph is obtained in terms of its maximum degree. We believe that the packing dimension of set systems is worth studying on its own, especially in light of its close connections to the notions of dimension of graphs such as Boxicity. 

\subsubsection{Vector Scheduling}
For the Vector Scheduling problem, we obtain a $\Omega\left( (\log d)^{1-\epsilon}\right)$ hardness under $\textsf{NP} \nsubseteq \textsf{ZPTIME}\left(n^{(\log n)^{O(1)}}\right)$, almost matching the $O\left( \frac{\log d}{\log \log d}\right)$ algorithms~\cite{HarrisS19,IKKP19}.
\begin{theorem}
\label{thm:main-vs}
For every constant $\epsilon>0$, assuming $\textsf{NP} \nsubseteq \textsf{ZPTIME}\left(n^{(\log n)^{O(1)}}\right)$, $d$-dimensional Vector Scheduling has no polynomial time $\Omega\left( (\log d)^{1-\epsilon}\right)$-factor approximation algorithm when $d$ is part of the input.
\end{theorem}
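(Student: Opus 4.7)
The plan is to identify a purely graph-theoretic intermediate problem that reduces cleanly to Vector Scheduling and then establish inapproximability for that problem. The natural target is the Monochromatic-Clique problem: given a graph $G$ equipped with a designated family $\mathcal{C}$ of cliques and an integer $m$, color $V(G)$ with $m$ colors so as to minimize the size of the largest clique of $\mathcal{C}$ that is monochromatic. Monochromatic-Clique reduces to Vector Scheduling with no gadget overhead: each vertex becomes a job, each clique $C\in\mathcal{C}$ becomes a coordinate, and the vector of vertex $v$ has a $1$ in coordinate $C$ if and only if $v\in C$. For any assignment of the resulting $|V|$ jobs to $m$ machines, the load of machine $i$ in coordinate $C$ is exactly the number of color-$i$ vertices in $C$, so the makespan equals the maximum monochromatic clique size and the number of dimensions is $d=|\mathcal{C}|$.

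It therefore suffices to construct, under the stated complexity hypothesis, a polynomial-time instance $(G,\mathcal{C},m)$ on $n=\mathrm{poly}(d)$ vertices with two gap behaviors: completeness, in which the optimum coloring makes every clique of $\mathcal{C}$ multicolored so that the makespan is $1$; and soundness, in which every $m$-coloring contains a monochromatic clique of size $(\log d)^{1-\epsilon}$. I would start from a bipartite Label Cover instance with the projection property and soundness $s=2^{-(\log n)^{\Omega(1)}}$; reaching such soundness from \textsf{NP} via Raz's parallel repetition blows the instance size up to $n^{(\log n)^{O(1)}}$, which is precisely what forces the quasi-polynomial hypothesis $\textsf{NP}\nsubseteq\textsf{ZPTIME}(n^{(\log n)^{O(1)}})$. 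To each Label Cover variable $u$ I would attach a vertex block $B_u$ of roughly $1/s$ copies indexed by candidate labels, and place cliques in $\mathcal{C}$ along every projection constraint so that two copies are joined exactly when their labels are consistent under the projection. Completeness follows by selecting in each block the copy indexed by a good labeling and coloring these distinguished copies rainbow, distributing the remaining copies so no chosen clique becomes monochromatic. Soundness is a list-decoding step: given an $m$-coloring with no monochromatic clique larger than $t$, treat each color class restricted to block $B_u$ as a list of at most $t$ labels for $u$, and argue via an averaging over colors and over cliques that some pair of lists along a random projection constraint is consistent with probability $\gg s$, contradicting Label Cover soundness whenever $t\le (\log d)^{1-\epsilon}$.

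The main obstacle is the quantitatively tight soundness analysis: turning the absence of large monochromatic cliques into a contradiction with Label Cover soundness requires controlling the list sizes and the consistency probability with only sub-logarithmic loss. Achieving the sharp exponent $(\log d)^{1-\epsilon}$, rather than a weaker bound of the form $\sqrt{\log d}$ or $\log\log d$, forces the block gadget to be a disperser-like or covering structure in which any bounded list of labels per block still yields many consistent pairs across a random projection constraint. Calibrating the Raz-repetition parameter, the block size, and the disperser parameters so that the final instance stays of polynomial size in $n$ while the gap remains essentially $(\log d)^{1-\epsilon}$ is the technical heart of the proof, and I expect it to absorb the bulk of the argument.
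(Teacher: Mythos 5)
You correctly identify the intermediate Monochromatic-Clique problem and the gadget-free reduction to Vector Scheduling (vertices become jobs, cliques become coordinates, machines become colors), and you get the final arithmetic right: with $B=(\log n)^C$ and $d\le n^B$ one has $\log d\le(\log n)^{C+1}$, hence $B\ge(\log d)^{1-1/(C+1)}$. This matches the paper's framing exactly. But your route to the hardness of Monochromatic-Clique itself is not the paper's route, and as sketched it has a real gap. Your soundness step --- ``treat each color class restricted to block $B_u$ as a list of at most $t$ labels'' --- does not follow from the hypothesis that no monochromatic clique has size $>t$: a color class inside a block can be a very large clique-free set (e.g., a large independent set), so the lists are not bounded by $t$ and the averaging over projection constraints breaks down. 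Controlling this with disperser-like cliques across blocks while keeping the instance polynomial-size and the threshold at $(\log d)^{1-\epsilon}$ is precisely the part you flag as ``the technical heart,'' and you have not shown it can be done; I don't see how to make it work without a new idea.

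The paper sidesteps this entirely by \emph{not} engineering a gadget. It starts from Khot's hardness of approximating the chromatic number (completeness $\chi(G)\le k$ vs.\ soundness $\chi(G)> k\cdot N/2^{(\log N)^{1-\gamma}}$ under the same quasi-polynomial assumption), and converts a large chromatic number into a large monochromatic clique via a Ramsey bound: if $\omega(H)\le B$ then $\chi(H)\le O(N^{1-1/B}\log N)$, so a graph with $\chi(G)>k\cdot N/2^{(\log N)^{\alpha}}$ cannot be $k$-colored without a monochromatic clique of size $\Omega((\log N)^{1-\alpha-\epsilon})$. This gives a baseline $(\log n)^{\gamma}$ hardness for Monochromatic-Clique. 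To boost the exponent from a fixed $\gamma$ to an arbitrary constant $C$, the paper observes that Khot's gap is strong enough to prove a ``Strong Monochromatic-Clique'' variant (soundness against $k^C$ colors rather than $k$), and then amplifies via the lexicographic graph product $G^C$: $\chi(G^C)\le k^C$ in the YES case, while in the NO case any $k^C$-coloring of $G^C$ has a monochromatic clique of size $B^C$. This amplification lemma (chromatic number multiplies, monochromatic clique size multiplies) is what replaces your disperser construction, and it is completely elementary. Without the Ramsey step and the lexicographic-product amplification, your proposal does not close the argument.
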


We obtain the hardness result via a reduction from the Monochromatic Clique problem. In the \mc(k,B) problem, given a graph $G=(V,E)$ with $|V|=n$ and parameters $k(n)$ and $B(n)$, the goal is to distinguish between the case when $G$ is $k$-colorable and the case when in any assignment of $k$-colors to vertices of $G$, there is a clique of size $B$ all of whose vertices are assigned the same color. 
When $B=2$, this is the standard graph coloring problem.
Note that the problem gets easier as $B$ increases.
Indeed, when $B >\sqrt{n}$, we can solve the problem in polynomial time by computing the Lov\'{a}sz theta function of the complement graph. 
We are interested in proving the hardness of the problem for $B$ as large a function of $n$ as possible, for some $k$. For example, given a graph that is promised to be $k$ colorable, can we prove the hardness of assigning $k$ colors to the vertices of the graph in polynomial time where each color class has maximum clique at most $B=\log n?$

The Monochromatic Clique problem was defined formally by Im, Kell, Kulkarni, and Panigrahi~\cite{IKKP19} in the context of proving lower bounds for online Vector Scheduling. It was also used implicitly in the $\omega(1)$ NP-hardness of Vector Scheduling by Chekuri and Khanna~\cite{CK04}. 
They proved (implicitly) that Monochromatic Clique is NP-hard when $B$ is an arbitrary constant using a reduction from $n^{1-\epsilon}$ hardness of graph coloring. We observe that the same reduction combined with better hardness of graph chromatic number~\cite{Khot01} proves the hardness of Monochromatic Clique when $B=(\log n)^{\gamma}$, for some constant $\gamma>0$ under the assumption that $\textsf{NP}\nsubseteq \textsf{ZPTIME}\left( n^{(\log n)^{O(1)}}\right)$.

We then amplify this hardness to $B=(\log n)^{C}$, for \textit{every} constant $C >0$.
Our main idea in this amplification procedure is the notion of a stronger form of Monochromatic Clique where given a graph and parameters $k,B,C$, the goal is to distinguish between the case that $G$ is $k$ colorable vs. in any $k^C$ coloring of $G$, there is a monochromatic clique of size $B$. 
It turns out that the graph coloring hardness of Khot~\cite{Khot01} already proves the hardness of this stronger variant of Monochromatic clique when $B=(\log n)^\gamma$ for any constant $C$. We then use lexicographic product of graphs to amplify this result into the hardness of original Monochromatic Clique problem with $B=(\log n)^C$ for any constant $C$ under the same assumption that $\textsf{NP}\nsubseteq \textsf{ZPTIME}\left( n^{(\log n)^{O(1)}}\right)$. 
This directly gives the required hardness of Vector Scheduling using the reduction in ~\cite{CK04}.

\medskip 
The Vector Scheduling problem is also closely related to the Balanced Hypergraph Coloring problem where the input is a hypergraph $H$ and a parameter $k$, and the objective is to color the vertices of $H$ using $k$ colors to minimize the maximum number of times a color appears in an edge. We use this connection to improve upon the NP-hardness of the problem:
\begin{theorem}
\label{thm:main-vs-np}
For every constant $C>0$, $d$-dimensional Vector Scheduling is NP-hard to approximate within $\Omega\left((\log \log d)^C\right)$ when $d$ is part of the input.
\end{theorem}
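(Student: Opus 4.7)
The plan is to exploit the close connection between Vector Scheduling and Balanced Hypergraph Coloring flagged just before the theorem. First I would make the embedding explicit: given a hypergraph $H = (V, E)$ and $k$ colours, form a Vector Scheduling instance with jobs indexed by $V$, dimensions indexed by $E$, and $k$ machines, where the vector for vertex $v$ is its characteristic vector $x_v \in \{0,1\}^{E}$ defined by $(x_v)_e = 1$ iff $v \in e$. Any assignment of jobs to machines is precisely a colouring $\chi : V \to [k]$, and the $\ell_\infty$ makespan equals $\max_{c \in [k],\, e \in E} |e \cap \chi^{-1}(c)|$, the Balanced Hypergraph Coloring objective. Hence Vector Scheduling with $d = |E|$ dimensions is at least as hard as Balanced Hypergraph Coloring on a hypergraph with $|E|$ edges, so it suffices to prove NP-hardness of Balanced Hypergraph Coloring within $\Omega((\log \log n)^{C})$.

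For the latter, I would rerun the Monochromatic Clique amplification used in the proof of Theorem~\ref{thm:main-vs}, but starting from strictly NP-hard ingredients. The base case is Chekuri--Khanna's reduction from the $n^{1-\epsilon}$ NP-hardness of graph chromatic number (Feige--Kilian), yielding NP-hardness of Monochromatic Clique for a constant $B_0$. The amplification step is to take the lexicographic product with small constant-size gadget graphs $t = O(\log \log n)$ times: each product contributes only an $O(1)$ multiplicative factor to the vertex count, so the total blow-up is polylogarithmic (hence polynomial-time), while the Monochromatic Clique product lemma boosts $B$ by a constant factor per iteration, reaching $B = 2^{\Omega(t)} = (\log n)^{\Omega(1)}$, from which choosing constants gives $B = (\log \log n)^{C}$ for any desired $C$.

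The main obstacle is dimension control. The naive Chekuri--Khanna reduction from Monochromatic Clique to Vector Scheduling has one dimension per $B$-clique of the graph, which is $n^{B}$ and thus super-polynomial once $B = \omega(1)$; this is affordable in Theorem~\ref{thm:main-vs} because the underlying hypothesis allows quasi-polynomial $d$, but not here. The hypergraph coloring viewpoint is what circumvents this: the lexicographic-product construction builds its monochromatic cliques from a recursive family of ``canonical'' cliques whose total number remains polynomial. I would identify exactly this family as the hyperedge set of the output hypergraph and argue that its balanced colouring still witnesses the full gap inherited from the amplified Monochromatic Clique instance. Tracking how the hyperedges compose through the $t = O(\log \log n)$ product iterations, and verifying that NO-side Monochromatic Clique guarantees translate into $(\log \log n)^{C}$-imbalance lower bounds on this restricted hypergraph, is the main technical content of the proof; composing with the embedding of the first paragraph then yields the claimed $\Omega((\log \log d)^{C})$ NP-hardness for Vector Scheduling.
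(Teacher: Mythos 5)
Your first step---casting the reduction from Balanced Hypergraph Coloring to Vector Scheduling via characteristic vectors---is exactly Lemma~\ref{lem:coloring-vs} and is correct. The problem is in the second step, where you propose to obtain NP-hardness of Balanced Hypergraph Coloring by amplifying Monochromatic Clique with lexicographic products. This does not work, for two independent reasons, and the paper in fact abandons the Monochromatic Clique route entirely for its NP-hardness result.

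First, taking lexicographic products with ``small constant-size gadget graphs'' cannot amplify the monochromatic clique parameter $B$ when the number of colors $k$ grows with $n$, which it does in all these reductions. The product argument (Lemma~\ref{lem:monochromatic-lexi}) needs the second factor $H$ to itself have the property that every $k$-coloring of $H$ produces a monochromatic clique of size at least $2$; a constant-size $H$ can simply be rainbow-colored with $k$ colors once $k \geq |V(H)|$, so $B$ does not increase at all. The paper's amplification (Lemma~\ref{lem:smc-reduction}) takes the product of $G$ with \emph{itself} a constant number of times, which is why the paper needs the stronger starting point of the Strong Monochromatic Clique variant and why the resulting graph is only polynomial when the number of iterations is $O(1)$---not $O(\log\log n)$. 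Iterating self-products $\omega(1)$ times blows up the vertex count super-polynomially. Either way, the scheme as you describe it does not produce super-constant $B$ in polynomial time from an NP-hard base case.

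Second, even granting a polynomial-size graph with the required Monochromatic Clique gap, your ``canonical cliques'' idea for keeping the dimension $d$ polynomial is not substantiated and appears to fail: in the proof of the product lemma, the monochromatic clique that witnesses the NO case is $\{(v,u): v\in T, u\in S(v)\}$, where $T$ and the sets $S(v)$ are \emph{functions of the adversarial coloring} $f$. To cover all colorings, you would need to include a clique for every possible choice of $T$ and $(S(v))_{v\in T}$, and there are $\binom{N}{B}$-many choices for each, which is far from polynomial. The paper circumvents the $d\leq n^B$ obstruction not by pruning cliques but by constructing the Balanced Hypergraph Coloring instance directly from Label Cover via the Long Code (Lemma~\ref{lem:lc-hypergraph}), with soundness resting on the Borsuk--Ulam-type lemma of Austrin--Bhangale--Potukuchi (Lemma~\ref{lem:exceptional-cooridinates}), and then instantiating near-linear-size Label Cover (Moshkovitz--Raz / Dinur--Steurer) to control the parameters. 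That is an essentially different argument, and it is what makes the NP-hardness (as opposed to a quasi-polynomial-time hardness assumption) go through.
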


Consider the case when each vector job is from $\{ 0,1 \}^d$. In this setting, we can view each coordinate as an edge in a hypergraph, and each vector corresponds to a vertex of the hypergraph. The goal is to find a $m$-coloring of vertices of the hypergraph i.e., an assignment of the vectors to $m$ machines to minimize the maximum number of monochromatic vertices in an edge, which directly corresponds to the maximum load on a machine. 

This problem of coloring a hypergraph to ensure that no color appears too many times in each edge is known as Balanced Hypergraph Coloring. 
Guruswami and Lee~\cite{GL18} obtained strong hardness results for this problem when $k$, the uniformity of the hypergraph is a constant, using the Label Cover Long Code framework combined with analytical techniques such as the invariance principle. 
However, when $k$ is super constant, the invariance principle based methods give weak bounds as the soundness of the Label Cover has to be at least exponentially small in $k$. Recently, using combinatorial tools to analyze the gadgets instead of the standard analytical techniques, improvements have been obtained for various hypergraph coloring problems~\cite{Bhangale18, ABP19} in the super-constant inapproximability regime. We follow the same route and use combinatorial tools to analyze the gadgets in the Label Cover Long Code framework and obtain better hardness of Balanced Hypergraph Coloring in the regime of super-constant uniformity $k$. The key combinatorial lemma used in our analysis was proved recently by Austrin, Bhangale, and Potukuchi~\cite{ABP20} using a generalization of the Borsuk-Ulam theorem.

The NP-hardness of Vector Scheduling follows directly from the hardness of Balanced Hypergraph Coloring using the above-described reduction. 
This NP-hardness result uses near-linear size Label Cover hardness results~\cite{MR10, DinurS14}.
By using the standard Label Cover hardness obtained by combining PCP Theorem and Parallel Repetition in the same reduction, we also prove an intermediate result bridging the above two hardness results for Vector Scheduling.
\begin{theorem}
\label{thm:main-vs-intermediate}
There exists a constant $\gamma >0$ such that assuming $\textsf{NP} \nsubseteq \textsf{DTIME}\left( n^{O(\log \log n)}\right)$, $d$-dimensional Vector Scheduling is hard to approximate within $\Omega\left((\log d)^\gamma\right)$ when $d$ is part of the input.
\end{theorem}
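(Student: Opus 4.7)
My plan is to follow the same two-step pipeline used for Theorem~\ref{thm:main-vs-np}: first prove a hardness for Balanced Hypergraph Coloring via the Label Cover Long Code framework with combinatorial gadget analysis, and then reduce Balanced Hypergraph Coloring to Vector Scheduling by identifying each hyperedge with a coordinate and each vertex with a $\{0,1\}^d$-vector supported on its incident edges. Under this reduction the dimension $d$ equals the number of hyperedges, and the Vector Scheduling approximation ratio is controlled by the Balanced Hypergraph Coloring gap. The only change from Theorem~\ref{thm:main-vs-np} lies in the choice of the outer Label Cover.

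Instead of the near-linear-size Label Cover of \cite{MR10, DinurS14}, I would use the standard Label Cover obtained from the PCP Theorem composed with Raz's parallel repetition: for any repetition parameter $t$, this yields, in time $n^{O(t)}$, instances with alphabet $R = 2^{O(t)}$ and soundness $2^{-\Omega(t)}$. The crucial point is that the soundness is \emph{exponentially} small in $R$, whereas the near-linear Label Cover of \cite{MR10, DinurS14} used in Theorem~\ref{thm:main-vs-np} only provides polynomially small soundness in $R$. Composing this stronger Label Cover with exactly the same Long Code gadget and soundness analysis from Theorem~\ref{thm:main-vs-np} --- in particular the Austrin--Bhangale--Potukuchi combinatorial lemma \cite{ABP20} --- yields a hypergraph of uniformity $k$ polynomial in $t$, together with the gap ``$k$-colorable'' versus ``in every $k$-coloring some edge contains $\mathrm{poly}(k)$ vertices of the same color'', provided the exponentially small Label Cover soundness absorbs the gadget error.

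I would then tune $t$ to be a slowly growing function of $\log \log n$, which keeps the total reduction size within $n^{O(\log \log n)}$ while making both $k$ and the alphabet $R$ polynomial in $\log \log n$. Under the Balanced Hypergraph Coloring to Vector Scheduling reduction, the number of hyperedges, and hence the dimension $d$, is $n^{O(t)} \cdot 2^{O(R)}$, so $\log d$ is polynomial in $t$ and therefore polynomial in $k$. The $\mathrm{poly}(k)$ coloring gap then translates into a Vector Scheduling hardness factor of $(\log d)^{\gamma}$ for a constant $\gamma > 0$ that is determined by the exact exponents produced by the gadget and Label Cover analyses.

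The hard part will be the quantitative parameter balancing. One must simultaneously check that (i)~the exponentially small soundness $2^{-\Omega(t)}$ beats the combinatorial error of the Long Code gadget, which grows with $k$; (ii)~the total instance size $n^{O(t)} \cdot 2^{O(R)}$ stays below $n^{O(\log \log n)}$; and (iii)~the chosen $k$ is large enough that $\mathrm{poly}(k)$ really is $(\log d)^{\Omega(1)}$ under the bound from (ii). This trade-off is exactly what separates Theorem~\ref{thm:main-vs-intermediate} from Theorem~\ref{thm:main-vs-np}: the exponentially better soundness of standard Label Cover buys a polynomially larger hardness factor, at the cost of the mildly stronger complexity assumption $\textsf{NP}\nsubseteq \textsf{DTIME}(n^{O(\log\log n)})$.
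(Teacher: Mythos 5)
Your high-level plan is exactly the paper's: keep Lemma~\ref{lem:lc-hypergraph} and the reduction of Lemma~\ref{lem:coloring-vs} unchanged, and swap the outer Label Cover from the near-linear one (Theorem~\ref{thm:linear-lc}) used for Theorem~\ref{thm:main-vs-np} to the standard PCP + parallel repetition one (Theorem~\ref{thm:lc-parallel}), absorbing the superpolynomial size blow-up into the $\textsf{DTIME}(n^{O(\log\log n)})$ assumption. However, your parameter tracking is off by an exponential level in two places, and as stated it does not deliver $(\log d)^{\gamma}$ for any constant $\gamma>0$.

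First, the comparison between the two Label Covers. Theorem~\ref{thm:lc-parallel} gives alphabet $|\Sigma_L|\le (1/\epsilon)^{O(1)}$, i.e.\ soundness \emph{polynomially} small in the alphabet, while Theorem~\ref{thm:linear-lc} gives $|\Sigma_L|\le 2^{(1/\epsilon)^{O(1)}}$, i.e.\ soundness only \emph{polylogarithmically} small in the alphabet. You claim ``exponentially small'' versus ``polynomially small,'' which is shifted by one level. The gain from the standard Label Cover is precisely that the alphabet is $\textsf{poly}(1/\epsilon)$ rather than $\exp\bigl(\textsf{poly}(1/\epsilon)\bigr)$, since the hypergraph size in Lemma~\ref{lem:lc-hypergraph} scales like $k^{|\Sigma_L| k^2}$.

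Second, and more importantly, the relationship $k = \textsf{poly}(t)$ is wrong. The gadget requires Label Cover soundness $\epsilon = 1/k^8$, and $t$ rounds of parallel repetition give soundness $2^{-\Omega(t)}$, so $k = 2^{\Theta(t)}$, exponential in $t$. With $t = \Theta(\log\log n)$ (the largest choice consistent with instance size $n^{O(t)} = n^{O(\log\log n)}$), this gives $k = (\log n)^{\Theta(1)}$ and $|\Sigma_L| = (\log n)^{O(1)}$ — polynomial in $\log n$, not in $\log\log n$ as you write. With your stated $k = (\log\log n)^{O(1)}$, the hypergraph has $m = n^{o(\log\log n)}$ edges and the gap $\Omega(k)$ is only $(\log\log n)^{O(1)} = (\log\log d)^{O(1)}$, which is just the NP-hardness bound of Theorem~\ref{thm:main-vs-np} and not $(\log d)^{\gamma}$ for any constant $\gamma>0$. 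Once you correct to $k = (\log n)^{c}$ for a sufficiently small constant $c$, you get $|E'| \le n^{O(\log k)}\cdot 2^{k^{O(1)}} = n^{O(\log\log n)}$ and hence $\log d = O(\log n\cdot\log\log n)$, from which $k = (\log d)^{\Omega(1)}$ follows — this is exactly how the paper concludes Theorem~\ref{thm:bhc-parallel} and then Theorem~\ref{thm:main-vs-intermediate}.
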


\subsubsection{Vector Bin Covering}

For the Vector Bin Covering problem, we show $\Omega\left( \frac{\log d}{\log \log  d}\right)$ hardness, almost matching the $O(\log d)$ factor algorithm~\cite{AlonACESVW98}.
\begin{theorem}
\label{thm:main-vbc}
$d$-dimensional Vector Bin Covering is NP-hard to approximate within $\Omega\left( \frac{\log d}{\log\log d}\right)$ factor when $d$ is part of the input. 
\end{theorem}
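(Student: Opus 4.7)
The plan is a direct reduction from the Domatic Number problem, which asks for the maximum partition of a graph's vertex set into dominating sets, and whose NP-hardness of approximation within $\Omega(\log n/\log \log n)$ follows from the work of Feige, Halld\'orsson, Kortsarz, and Srinivasan combined with the NP-hardness of Set Cover.

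Given a graph $G=(V,E)$ with $|V|=n$, for each $v \in V$ let $N[v]$ denote its closed neighborhood. I form a Vector Bin Covering instance in dimension $d=n$: the coordinates are indexed by $V$ and, for each $v \in V$, I create the vector $x_v \in \{0,1\}^d$ with $(x_v)_u = 1$ iff $u \in N[v]$. A subset $D \subseteq V$ is a dominating set of $G$ if and only if $\sum_{v \in D} x_v \ge 1$ in every coordinate, so a partition of $V$ into $k$ dominating sets corresponds exactly to a partition of $\{x_v\}_{v \in V}$ into $k$ valid Vector Bin Covering bins. Hence the Vector Bin Covering optimum equals the Domatic Number of $G$, and the dimension of the produced instance equals the number of vertices. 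Composing the Domatic Number hardness with this reduction then yields the desired $\Omega(\log d/\log \log d)$ NP-hardness, since every vector can be taken in $\{0,1\}^d$.

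Note that this same reduction goes through for the more general \emph{cover-decomposition} problem: given a set system $(U,\mathcal{F})$ with $|U|=d$, partition $\mathcal{F}$ into the maximum number of covers of $U$. Indeed, the Vector Bin Covering instance one obtains from an arbitrary cover-decomposition instance is (again) the indicator embedding in $\{0,1\}^d$. This flexibility is useful because it lets me start from any set-system formulation whose NP-hardness factor is at least $\Omega(\log d / \log \log d)$, rather than being tied specifically to closed-neighborhood systems.

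The main obstacle is less in the reduction itself (which is essentially a syntactic rewriting) and more in invoking or proving a Domatic Number NP-hardness bound of the right form. Feige--Halld\'orsson--Kortsarz--Srinivasan's tight $(1-\epsilon)\ln n$ result is stated under $\textsf{NP} \nsubseteq \textsf{DTIME}(n^{O(\log \log n)})$, so to obtain pure NP-hardness one loses a factor of $\log \log n$ somewhere, which is exactly the source of the $\log \log d$ in the denominator of the theorem. A possible alternative, giving a more self-contained argument, is to adapt the Label Cover + Long Code framework of \Cref{thm:main-vs-np} directly to the cover-decomposition view of Vector Bin Covering, using the Austrin--Bhangale--Potukuchi combinatorial lemma~\cite{ABP20} to analyze the soundness of a ``rainbow'' dictatorship test where each color class must touch every coordinate of the long code.
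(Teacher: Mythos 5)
Your reduction from Domatic Number (and its cover-decomposition generalization) to Vector Bin Covering is correct, and in fact it is the same reduction the paper uses once one observes that cover-decomposition of a $d$-element set system is exactly the rainbow-coloring problem on the dual hypergraph with $d$ edges: the paper's Lemma~\ref{lem:rainbow-vbc} is literally your ``indicator embedding'' applied to a hypergraph. The genuine gap is in the source hardness. You assert that Domatic Number is NP-hard to approximate within $\Omega(\log n/\log\log n)$ and attribute this to Feige--Halld\'{o}rsson--Kortsarz--Srinivasan ``combined with the NP-hardness of Set Cover,'' but no such statement appears in that paper, and the argument you give for it (``to obtain pure NP-hardness one loses a factor of $\log\log n$ somewhere'') is not a proof. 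FHKS prove $(1-\epsilon)\ln n$ hardness \emph{only} under $\textsf{NP}\nsubseteq\textsf{DTIME}(n^{O(\log\log n)})$, because their construction takes a Label Cover with soundness $1/\mathrm{polylog}(n)$, whose size blows up quasi-polynomially; if you instead feed a polynomial-size Label Cover with constant soundness into their partition-system machinery you get only an $O(1)$ gap, not $\log n/\log\log n$. There is also no black-box way to ``combine'' with NP-hardness of Set Cover, since FHKS reduce from Label Cover rather than from Set Cover and Domatic Number is a maximization problem while Set Cover is a minimization problem, so the gaps do not compose. In short, the Domatic Number hardness you invoke is exactly the thing that needs to be proved, and you have not proved it.

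The paper sidesteps this by not trying to source the hardness from an existing coverage problem at all. It reduces directly from Label Cover with \emph{no} soundness gap (just Label-Coverized $3$-SAT, Theorem~\ref{thm:lc-hardness}, with $|\Sigma|=O(1)$), applies a Long Code gadget, and analyzes the gadget with a $1$-fixing polymorphism lemma derived from the Austrin--Bhangale--Potukuchi theorem~\cite{ABP20}. The crucial accounting is that the resulting hypergraph has $m \approx |V|\cdot k^{O(k)}$ edges, so the reduction stays polynomial-time precisely when $k = O(\log n/\log\log n)$, and since the dimension of the Vector Bin Covering instance is $d=m$, the factor $k$ becomes $\Omega(\log d/\log\log d)$. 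This is why the $\log\log d$ loss is there: $k$ appears in the exponent of the edge count, not because an off-the-shelf Domatic Number bound degrades by $\log\log n$. Your last paragraph gestures at exactly this plan as an ``alternative,'' and it is in fact the only route the paper takes; to complete your argument you would need to actually carry out that dictatorship test (the paper does it in Lemma~\ref{lem:gadget} and Theorem~\ref{thm:rainbow-coloring}), at which point the Domatic Number detour is unnecessary.
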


Similar to Vector Scheduling, the hard instances for the Vector Bin Covering are when each vector is in $\{0,1\}^d$. Using the same connection as before, we view this problem as a hypergraph coloring problem where each edge of the hypergraph corresponds to a coordinate, and each vertex corresponds to a vector. Assigning the vectors to the bins such that in each bin, the sum is at least $1$ in every coordinate corresponds to assigning colors to vertices of the hypergraph such that in every edge, all the colors appear. Such a coloring of the hypergraph with $k$ colors where all the $k$ colors appear in every edge is known as a $k$-rainbow coloring of the hypergraph. 

Strong hardness results are known for approximate rainbow coloring~\cite{GL18, ABP20, GS20} when $k$ is a constant. While these results give decent bounds in the super constant regime, they proceed via the hardness of Label Cover whose soundness is an inverse polynomial function of $k$. Because of this, in the NP-hard instances, the number of edges is at least doubly exponential in $k$. 
Instead, by losing a factor of $2$ in the hardness, we give a reduction to the approximate rainbow coloring problem from Label Cover with no gap i.e., just a ``Label Coverized'' $3$-SAT instance. 
In these hard instances, the number of edges is single exponential in $k$, proving that it is NP-hard to distinguish the case that a hypergraph with $m$ edges has a rainbow coloring with $\Omega\left(\frac{\log m}{\log \log m}\right)$ colors vs. it cannot be rainbow colored with $2$ colors\footnote{Note that rainbow coloring with $2$ colors is the same as proper $2$ coloring (Property B) of the hypergraphs.}.
This hardness of approximate rainbow coloring gives the required Vector Bin Covering hardness immediately using the earlier mentioned analogy between rainbow coloring and Vector Bin Covering. 

\medskip  
We summarize our results in~\Cref{table:results}.

\begin{table}
	\centering
\begin{tabular}{@{}crrr@{}}\toprule
Problem & Subcase & Best Algorithm & Best Hardness \\
\midrule
\multirow{3}{*}{VBP\tnote{Both the algorithms and hardness are considered for asymptotic setting.}} & $d=1$ & PTAS~\cite{VegaL81} & NP-Hard~\cite{GareyJ79} \\ 
&Fixed $d$ & $\ln d+O(1)$~\cite{BEK06} & $\Omega(\log d)$ \\ 
&Arbitrary $d$ & $1+\epsilon d+O\left(\ln \frac{1}{\epsilon}\right)$~\cite{CK04} & $d^{1-\epsilon}$~\cite{CK04,ChristensenKPT17} \\
\cmidrule{1-4}
\multirow{7}{*}{VS} & $d=1$ & EPTAS~\cite{Jansen10} & No FPTAS~\cite{FaigleKT89} \\ 
&Fixed $d$ & PTAS~\cite{CK04} & No EPTAS~\cite{BansalOVZ16} \\ \addlinespace[0.01\textwidth]
& \multirow{5}{*}{Arbitrary $d$} & \multirow{5}{*}{$O\left( \frac{\log d}{\log \log d}\right)$~\cite{HarrisS19,IKKP19}} & \multicolumn{1}{l}{$\Omega\left( (\log d)^{1-\epsilon}\right)$}  \\
& & & $\left(\textsf{NP}\nsubseteq \textsf{ZPTIME}\left(n^{(\log n)^{O(1)}}\right)\right)$ \\ \addlinespace[0.01\textwidth]
& & & \multicolumn{1}{l}{$(\log d)^{\Omega(1)}$} \\
& & &  $\left(\textsf{NP}\nsubseteq \textsf{DTIME}\left(n^{O(\log \log n)}\right)\right)$ \\ \addlinespace[0.01\textwidth]
& & & $(\log \log d)^{\omega(1)}$ (NP-Hardness) \\ \addlinespace[0.01\textwidth]
\cmidrule{1-4}
\multirow{2}{*}{VBC} & $d=1$ & FPTAS~\cite{JansenS03} & NP-Hard~\cite{Assman84} \\ 
&Arbitrary $d$ & $O(\log d)$~\cite{AlonACESVW98} & $\Omega\left( \frac{\log d}{\log \log d}\right)$ \\
\bottomrule
\end{tabular}
\caption{Approximation algorithms for the multidimensional packing problems. VBP, VS and VBC stand for Vector Bin Packing, Vector Scheduling and Vector Bin Covering respectively. All the results for VBP and the FPTAS for VBC are asymptotic. The results without citations are our new results.}
\label{table:results}
\end{table}

\subsection{Related Work}

\smallskip \noindent \textbf{Online Algorithms.} Multidimensional packing problems have been extensively studied in the online setting. For the $d$-dimensional Vector Bin Packing, the classical First-Fit algorithm~\cite{GareyGJ76} gives $O(d)$ competitive ratio, and Azar, Cohen, Kamara, and Shepherd~\cite{AzarCKS13} recently gave an almost matching $\Omega\left( d^{1-\epsilon}\right)$ lower bound. For the $d$-dimensional Vector Scheduling, Im, Kell, Kulkarni, and Panigrahi ~\cite{IKKP19} gave a $O\left( \frac{\log d}{\log \log d}\right)$ competitive online algorithm and proved a matching lower bound. For the $d$-dimensional Vector Bin Covering problem, Alon et al.~\cite{AlonACESVW98} gave a $2d$ competitive algorithm and proved a lower bound of $d + \frac 12$. 

\smallskip \noindent \textbf{Geometric variants.} There are various natural geometric variants of Vector Bin Packing that have been studied in the literature. A classical problem of this sort is the $2$-dimensional Geometric Bin Packing, where the input is a set of rectangles that need to be packed into the minimum number of unit squares. After a long line of works, Bansal and Khan~\cite{BansalK14} gave a $1.405$ factor asymptotic approximation algorithm for the problem. On the hardness front, Bansal and Sviridenko~\cite{BansalS04} showed that the problem does not admit an asymptotic PTAS, and this APX hardness result has been generalized to several related problems by Chleb\'{i}k and Chleb\'{i}kov\'{a}~\cite{ChlebikC06}. We refer the reader to the excellent survey~\cite{ChristensenKPT17} regarding the geometric problems.

\subsection{Organization}

We first define the multidimensional problems and the Label Cover problem formally in~\Cref{sec:prelims}. Next, we prove the hardness results for Vector Bin Packing, Vector Scheduling, and Vector Bin Covering in~\Cref{sec:vbp},~\Cref{sec:vs} and~\Cref{sec:vbc} respectively. 
Finally, we conclude by mentioning a few open problems in~\Cref{sec:conclusion}.
\section{Preliminaries}
\label{sec:prelims}

\smallskip \noindent \textbf{Notations.} We use $[n]$ to denote the set $\{1,2,\ldots,n\}$. We use $\textbf{1}^d$ to denote the $d$-dimensional vector $(1,1,\ldots,1)$. For two $d$-dimensional real vectors $\textbf{a}$ and $\textbf{b}$, we say that $\textbf{a} \geq \textbf{b}$ if $\textbf{a}_i \geq \textbf{b}_i$ for all $i \in [d]$. 
For a graph $G$, we let $\omega(G), \alpha(G),\chi(G)$ be the largest clique size,  largest independent size, and the chromatic number of $G$ respectively. A set system or set family $\mathcal{S}$ on a universe $V$ is a collection of subsets of $V$. 

\smallskip \noindent \textbf{Problem Statements.}
We give formal definitions of the three problems that we study. 

\begin{definition}(Vector Bin Packing) In the Vector Bin Packing problem, the input is a set of $n$ rational vectors $v_1, v_2, \ldots, v_n \in [0,1]^d$. The objective is to partition $[n]$ into minimum number of parts $A_1, A_2, \ldots, A_m$ such that 
\[
\Big \lVert\sum_{j \in A_i}v_j \Big \rVert_{\infty} \leq 1 \,\,\forall i \in [m]
\]

\end{definition}

\begin{definition}(Vector Scheduling) In the Vector Scheduling problem, the input is a set of $n$ rational vector jobs $v_1, v_2, \ldots, v_n \in [0,1]^d$, and $m$ identical machines. The objective is to assign the jobs to machines i.e. partition $[n]$ into $m$ parts $A_1, A_2, \ldots, A_m$ to minimize the makespan which is defined as the maximum $\ell_{\infty}$ load on a machine. 
\[
\max_{i \in [m]} \norms{\sum_{j \in A_i}v_j}_{\infty}
\]
\end{definition}

\begin{definition}(Vector Bin Covering)
In the Vector Bin Covering problem, we are given $n$ vectors $v_1, v_2, \ldots, v_n \in [0,1]^d$. The goal is to partition the input vectors into maximum number of parts $A_1, A_2, \ldots, A_m$ such that 
\[
\sum_{j \in A_i}v_j \geq \textbf{1}^d \,\,\forall i \in [m]
\]

\end{definition}

\smallskip \noindent \textbf{Asymptotic Approximation.} 
For the Bin Packing problem, it is NP-Hard to identify if all the vectors can be packed into $2$ bins or need $3$ bins. This already proves that the problem is NP-hard to approximate within $\frac{3}{2}$ as per the usual notion of multiplicative approximation ratio. 
However, this is less interesting as there are much better \textit{asymptotic} approximation algorithms for the problem which get $(1+\epsilon)$-factor approximation when the optimal value is large enough, for every positive constant $\epsilon >0$. 

Even for the Vector Bin Packing problem, the performance of an algorithm is typically measured in the asymptotic setting. 
We give the formal definition~\cite{ChristensenKPT17} of asymptotic approximation ratio of an algorithm $\mathcal{A}$ for the Vector Bin Packing problem.
\begin{definition}(Asymptotic Approximation Ratio)
	The asymptotic approximation ratio $\rho_{\mathcal{A}}^{\infty}$ of an algorithm $\mathcal{A}$ for the Vector Bin Packing problem is 
	\[
	\rho_{\mathcal{A}}^\infty = \limsup\limits_{n \rightarrow \infty} \rho_{\mathcal{A}}^{n},  \,\, \rho_{\mathcal{A}}^n = \sup_{I \in \mathcal{I}} \left\{ \frac{\mathcal{A}(I)}{\emph{OPT}(I)} : \emph{OPT}(I)=n\right\}
	\]
	where $\mathcal{I}$ denotes the set of all possible Vector Bin Packing instances.
\end{definition}
 All the results mentioned in this paper regarding Vector Bin Packing are with respect to the asymptotic approximation ratio.

\smallskip \noindent \textbf{Label Cover.}
We define the Label Cover problem:
\begin{definition}(Label Cover)
In an instance of the Label Cover problem $G=(V=L \cup R, E, \Sigma_L,\Sigma_R,\Pi)$ with $|\Sigma_L| \geq |\Sigma_R|$, the input is a bipartite graph $L \cup R$ with constraints on every edge. The constraint on an edge $e$ is a projection $\Pi_e :\Sigma_L\rightarrow \Sigma_R$. We say a labeling $\sigma : V \rightarrow \Sigma_L \cup \Sigma_R$ satisfies the constraint on the edge $e=(u,v)$ if $\Pi_e(\sigma(u))=\sigma(v)$.
The objective is to find a labeling $\sigma : V \rightarrow \Sigma_L \cup \Sigma_R$ that satisfies as many constraints as possible. 
\end{definition}

By a simple reduction from the $3$-SAT problem, we can prove that Label Cover is NP-hard when $\Sigma_L$ and $\Sigma_R$ are constants (See e.g., Lemma $4.2$ in ~\cite{BG16}). 
\begin{theorem}
\label{thm:lc-hardness}
Given a Label Cover instance when $\Sigma_L=\Sigma_R=[6]$, it is NP-hard to identify if it has a labeling that satisfies all the constraints. 
\end{theorem}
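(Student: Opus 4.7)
The plan is to reduce from Not-All-Equal $3$-SAT (NAE-$3$-SAT), a well-known NP-complete variant of $3$-SAT in which a clause is satisfied iff its three literals are not all assigned the same truth value. This variant is tailor-made for the statement, because each clause has exactly $2^3 - 2 = 6$ NAE-satisfying assignments, which is precisely the alphabet size we are after.

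Given an NAE-$3$-SAT instance $\varphi$ with clauses $C_1, \ldots, C_m$ on variables $x_1, \ldots, x_n$, I would construct a Label Cover instance $G = (L \cup R, E, \Sigma_L, \Sigma_R, \Pi)$ as follows. Put $L = \{C_1, \ldots, C_m\}$ and $R = \{x_1, \ldots, x_n\}$, and for each variable $x_i$ appearing in a clause $C_j$ add an edge $(C_j, x_i) \in E$. Identify $\Sigma_L = [6]$ with the six NAE-satisfying assignments to the three literals of a generic $3$-clause, and let $\Sigma_R = \{0, 1\}$, padded to $[6]$ by four unused dummy symbols. For each edge $e = (C_j, x_i)$, define $\Pi_e : \Sigma_L \to \Sigma_R$ to send an NAE-assignment $\alpha$ of $C_j$'s literals to the value it induces on $x_i$ (taking the negation into account when $x_i$ appears negated in $C_j$).

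For completeness, any NAE-satisfying assignment $\tau$ of $\varphi$ gives a labeling that satisfies every projection: label each $x_i \in R$ by $\tau(x_i)$ and each $C_j \in L$ by the NAE-assignment that $\tau$ induces on $C_j$'s literals. For soundness, suppose $\sigma$ satisfies every constraint. Because $\sigma(C_j) \in \Sigma_L$ is by definition an NAE-satisfying assignment to $C_j$'s literals, and the three projection constraints at $C_j$ force it to agree with $\sigma(x_i)$ on each variable $x_i$ of $C_j$, the restriction $\sigma|_R$ is an NAE-satisfying assignment of $\varphi$ (and automatically uses only $\{0,1\} \subset [6]$, never the dummy symbols).

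The only mildly annoying step is the alphabet padding needed to match the statement's requirement $\Sigma_L = \Sigma_R = [6]$; but since no projection ever outputs a dummy symbol on the right, any satisfying labeling avoids the dummies on $R$ automatically, and the argument goes through unchanged. Everything else is routine, so the NP-completeness of NAE-$3$-SAT transfers directly to the Label Cover decision problem.
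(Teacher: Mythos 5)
Your proof is correct. The paper does not actually spell out an argument here---it simply invokes ``a simple reduction from the 3-SAT problem'' and cites Lemma 4.2 of \cite{BG16}---so what you give is a self-contained instantiation of the same routine clause-variable Label-Coverization. Your choice of NAE-$3$-SAT (rather than plain $3$-SAT) as the source problem is exactly what makes the alphabet come out to $[6]$ on the nose, since an NAE clause on three distinct variables has $2^3-2=6$ satisfying assignments whereas a $3$-SAT clause has $7$; the padding of $\Sigma_R$ from $\{0,1\}$ up to $[6]$ is harmless because the image of every projection stays inside $\{0,1\}$. The one hypothesis you should state explicitly is that every clause contains three \emph{distinct} variables (NAE-$3$-SAT remains NP-complete under this restriction): otherwise the count of NAE-satisfying assignments is not $6$ and the three projections out of a clause vertex are not cleanly defined. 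With that said, the completeness and soundness directions go through exactly as you argue.
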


The real use of Label Cover, however, lies in its strong hardness of approximation. PCP Theorem~\cite{ALMSS98} combined with Raz's parallel repetition~\cite{Raz98} yields the following strong inapproximability of Label Cover problem. 
\begin{theorem}
\label{thm:lc-parallel}
There exists an absolute constant $c>1$ such that for every integer $n$ and $\epsilon>0$, there is a reduction from $3$-SAT instance $I$ over $n$ variables to Label Cover instance $G=(V=L \cup R,E,\Sigma_L,\Sigma_R,\Pi)$ with $|V|\leq n^{O\left(\log\left(\frac{1}{\epsilon}\right)\right)}$, $|\Sigma_L|\leq \left(\frac{1}{\epsilon}\right)^{c}$ satisfying the following: 
\begin{enumerate}
    \item (Completeness.) If $I$ is satisfiable, there exists a labeling to $G$ that satisfies all the constraints. 
    \item (Soundness.) If $I$ is not satisfiable, no labeling can satisfy an $\epsilon$ fraction of the constraints of $G$. 
    \item (Biregularity.) The graph $L \cup R, E$ is biregular with degrees on either side bounded by $\emph{\textsf{poly}}\left(\frac{1}{\epsilon}\right)$. 
\end{enumerate}
Furthermore, the running time of the reduction is $\emph{\textsf{poly}}\left(n,\frac{1}{\epsilon}\right)$. 
\end{theorem}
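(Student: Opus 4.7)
The plan is to combine the PCP theorem with Raz's parallel repetition, followed by a standard conversion into a biregular projection Label Cover instance. First I would invoke the PCP theorem in its gap-MAX-$3$SAT form: from a $3$-SAT instance $\varphi$ on $n$ variables, produce in time $\mathrm{poly}(n)$ an instance $\varphi'$ of size $\mathrm{poly}(n)$ such that either $\varphi'$ is satisfiable or no assignment satisfies more than a $(1-\eta_0)$ fraction of its clauses, where $\eta_0>0$ is an absolute constant. I would then convert $\varphi'$ into a base Label Cover instance $G_0$ via the standard clause--variable projection game: the left side $L_0$ has one vertex per clause with alphabet of size $7$ (the satisfying local assignments), the right side $R_0$ has one vertex per variable with alphabet $\{0,1\}$, and each edge $(C,x)$ carries the projection that reads off the bit assigned to $x$ inside the satisfying assignment to $C$. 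The resulting $G_0$ is biregular with constant degrees on either side, has constant alphabets, and has completeness $1$ versus soundness $s_0 = 1-\Omega(\eta_0)$.

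Next I would apply Raz's parallel repetition theorem to $G_0$, forming the $k$-fold tensor product $G_0^{\otimes k}$: the vertex sets are $L_0^k$ and $R_0^k$, the edges are $k$-tuples of original edges $(e_1,\ldots,e_k)$, and the constraints are coordinatewise products of the original projections, which remain projections. Raz's theorem ensures that the soundness of $G_0^{\otimes k}$ is at most $s_0^{\Omega(k)}$; choosing $k=\Theta(\log(1/\epsilon))$ drives the soundness below $\epsilon$. Tracking parameters: the left alphabet has size $7^k=(1/\epsilon)^{O(1)}$, the right alphabet has size $2^k\le 7^k$, the vertex count is at most $|V_0|^k \le n^{O(\log(1/\epsilon))}$, and the degree on each side is $\deg(G_0)^k = \mathrm{poly}(1/\epsilon)$. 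Completeness is preserved because any labeling that satisfies every constraint of $G_0$ lifts coordinatewise to a labeling satisfying every constraint of $G_0^{\otimes k}$.

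The biregularity required by part (3) is inherited automatically from $G_0$ under the tensor product, since degrees multiply coordinatewise. The total running time is dominated by writing out $G_0^{\otimes k}$, which is polynomial in the output size $n^{O(\log(1/\epsilon))}\cdot (1/\epsilon)^{O(1)}$. The main obstacle is the proof of Raz's parallel repetition theorem itself, which is technically deep and would be invoked as a black box here; a secondary care point is ensuring that the projection structure and the biregularity guarantee are preserved through both the PCP-to-Label-Cover conversion and the tensoring step, but both facts follow from routine bookkeeping on the clause--variable construction and the definition of tensor product games.
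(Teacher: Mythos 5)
Your proposal is correct and matches the standard derivation that the paper invokes purely by citation (PCP theorem plus Raz's parallel repetition applied to the clause--variable projection game); the paper does not prove this statement itself but attributes it to~\cite{ALMSS98,Raz98}. The one detail worth making explicit rather than folding into ``routine bookkeeping'' is that biregularity of the base game $G_0$ requires first normalizing the gap-$3$-SAT instance so that every variable occurs in the same constant number of clauses (a standard expander-based regularization of the PCP output), after which tensoring does preserve biregularity as you say.
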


Moshkovitz-Raz~\cite{MR10} proved the following hardness of near linear size Label Cover.
\begin{theorem}
\label{thm:linear-lc}
     There exist absolute constants $c,c'>1$ such that for every $n$ and $\epsilon>0$, there is a reduction from $3$-SAT instance $I$ over $n$ variables to Label Cover instance $G=(V=L \cup R,E,\Sigma_L,\Sigma_R,\Pi)$ with $|V|\leq n^{1+o(1)}\left(\frac{1}{\epsilon}\right)^{c}$, $|\Sigma_L|\leq 2^{\left(\frac{1}{\epsilon}\right)^{c'}}$ satisfying the following: 
\begin{enumerate}
    \item (Completeness.) If $I$ is satisfiable, there exists a labeling to $G$ that satisfies all the constraints. 
    \item (Soundness.) If $I$ is not satisfiable, no labeling can satisfy an $\epsilon$ fraction of the constraints of $G$. 
    \item (Biregularity.) The graph $L \cup R, E$ is biregular with degrees on either side $\emph{\textsf{poly}}\left(\frac{1}{\epsilon}\right)$. 
\end{enumerate}
Furthermore, when $\epsilon$ is a constant, the running time of the reduction is $\emph{\textsf{poly}}(n)$. 
\end{theorem}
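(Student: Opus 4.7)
The plan is to construct a near-linear size two-query projection PCP for 3-SAT with sub-constant soundness $\epsilon$. The route of~\Cref{thm:lc-parallel} via PCP Theorem plus Raz's parallel repetition already achieves alphabet size $(1/\epsilon)^{O(1)}$ but blows up the instance size by $n^{O(\log(1/\epsilon))}$, which is only polynomial rather than near-linear. To get the $n^{1+o(1)} (1/\epsilon)^{O(1)}$ bound I would bypass parallel repetition entirely, building a PCP with low soundness directly via an algebraic outer construction, and then use a single layer of PCP composition to shrink the alphabet.

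First, I would arithmetize the 3-SAT instance over a finite field $\mathbb{F}_q$ with $q = \textsf{poly}(1/\epsilon)$: identify variables with evaluations of a low-degree polynomial on a product subdomain $H^t \subseteq \mathbb{F}_q^t$ with $t = O(\log n/\log\log n)$, so that satisfiability becomes a low-degree polynomial identity certifiable via a sum-check-style protocol. The outer PCP proof is the evaluation table of this polynomial together with auxiliary tables for the sum-check, and the verifier performs a low-degree test followed by a consistency check. The crucial choice is to use the \emph{manifold-vs-point} (or plane-vs-point) low-degree test: sample a random low-dimensional affine manifold $M \subseteq \mathbb{F}_q^t$ and a point $p \in M$, and check that the polynomial restricted to $M$ agrees at $p$ with the purported point value. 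A careful analysis of this two-query test over large fields gives soundness $\textsf{poly}(1/q) \leq \epsilon$ with proof size $q^{O(t)} = n^{1+o(1)} \cdot \textsf{poly}(1/\epsilon)$, yielding a projection game of the right size but with a huge alphabet $q^{O(t)}$. Next, I would compose this outer PCP with a constant-rate inner PCP of proximity (e.g.\ a Hadamard- or long-code-based gadget) applied edge-by-edge; this shrinks the alphabet to $2^{(1/\epsilon)^{O(1)}}$, preserves the projection structure, and inflates each edge only by a factor of $2^{\textsf{poly}(1/\epsilon)}$, keeping the overall size bound. Finally, a standard right-degree regularization step (cloning each vertex into copies indexed by its incident edges, with equality constraints between copies) enforces biregularity with degrees $\textsf{poly}(1/\epsilon)$ without hurting the other parameters.

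The hard part will be the analysis of the manifold-vs-point test in the sub-constant soundness regime, together with making composition preserve the projection property. Classical plane-vs-point analyses in the style of Arora--Sudan and Raz--Safra give soundness $O(1/q^{\Omega(1)})$, but pushing the error below $1/q$ while the domain contains only $q^t$ points requires a stronger list-decoding-style guarantee for the test; this is the technical heart of the Moshkovitz--Raz construction. On the composition side, the delicate point is ensuring that the inner PCP is itself a projection and is robust enough that the soundness of the composed PCP degrades only by a constant factor; this forces a careful choice of inner code and a tight ``robustness'' formulation of the outer verifier so that the two layers plug together seamlessly.
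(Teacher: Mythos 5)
The paper does not prove this theorem at all; it is stated as a black-box citation of Moshkovitz--Raz~\cite{MR10} (with the subsequent improvement of Dinur--Steurer~\cite{DinurS14} mentioned later for sharpening the alphabet-size exponent). Your proposal, by contrast, attempts to reconstruct the Moshkovitz--Raz proof from scratch. Your high-level outline is faithful to how that construction actually works: a near-linear-size algebraic outer PCP over $\mathbb{F}_q$ via low-degree extension and sum-check, a sub-constant-error two-query low-degree test (their new manifold-vs-point test, which is the technical core), alphabet reduction via robust PCP composition that preserves the projection property, and a final degree-regularization step. You also correctly identify the two places where the real difficulty lies --- pushing the low-degree test's soundness below the classical $1/q^{\Omega(1)}$ threshold via list-decoding-style guarantees, and formulating robustness/composition so that projection structure survives.

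That said, what you have is an itinerary, not a proof: each of these steps is itself a substantial theorem, and the entire argument occupies a long and intricate paper. For the purposes of the present work none of that machinery is reproduced or needed --- the theorem is simply invoked. So the honest assessment is: your route is the \emph{right} route and matches the literature, but it is an exposition of~\cite{MR10} rather than a self-contained argument, and it departs from what this paper does, which is to cite the result and move on. If the goal were to write a genuine proof, you would need to fully specify and analyze the manifold-vs-point test (including the list-decoding lemma over subsets of $\mathbb{F}_q^t$), give a precise robustness definition and a robust-composition theorem compatible with projections, and account quantitatively for how each stage affects size, alphabet, degree, and soundness --- none of which can be waved through in a paragraph.
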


\section{Vector Bin Packing}
\label{sec:vbp}
In this section, we prove the hardness of approximation of Vector Bin Packing. First, we define the packing dimension of a set family and bound the packing dimension of simple set families. Next, we combine this upper bound with the hardness of set cover on simple bounded set systems to prove~\Cref{thm:main-vbp}.

\subsection{Packing Dimension}

For a set family $\mathcal{S}$ on a universe $V$, we define the packing dimension $\textsf{pdim}(\mathcal{S})$ below. For a function $f: V \rightarrow [0,1]^K$ and a set $S \subseteq V$, we let $f(S)$ denote the vector $f(S)=\sum_{v \in S}f(v)$. 
\begin{definition}
For a set family $\mathcal{S}$ on a universe $V$, the packing dimension $\emph{\textsf{pdim}}(\mathcal{S})$ is defined as the smallest positive integer $K$ such that there exists an embedding $f:V\rightarrow [0,1]^K$ that satisfies the following property: For every set $S \subseteq V$, $S$ is in the family $\mathcal{S}$ if and only if 
\[
\norm{ f(S) }_{\infty} \leq 1.
\]
If no such embedding exists, we say that $\emph{\textsf{pdim}}(\mathcal{S})$ is infinite. 
\end{definition}
For a set family $\mathcal{S}$ to have finite packing dimension i.e. for an embedding $f : V \rightarrow [0,1]^K$ realizing the above condition to exist requires two conditions:
\begin{enumerate}
    \item The set family is \textit{downward closed} i.e. for every $S \in \mathcal{S}$ and $T\subseteq S$, $T\in \mathcal{S}$ as well.
    \item For every element $v \in V$, there is a set $S \in \mathcal{S}$ with $v \in S$. We call a set family $\mathcal{S}$ on a universe $V$ non-trivial if for every $v \in V$, there is a set $S \in \mathcal{S}$ with $v \in S$. 
\end{enumerate}
On the other hand, any set system that satisfies the above two conditions i.e. being downward closed and non-trivial has a finite packing dimension. Before proving this statement, we first prove the following simple but useful proposition. 
\begin{proposition}
\label{prop:dim-intersection}
For a pair of set families $\mathcal{S}_1$ and $\mathcal{S}_2$ defined on the same universe $V$ such that $\emph{\textsf{pdim}}(\mathcal{S}_1)$ and $\emph{\textsf{pdim}}(\mathcal{S}_2)$ are finite, 
\[
\emph{\textsf{pdim}}(\mathcal{S}_1 \cap \mathcal{S}_2) \leq \emph{\textsf{pdim}}(\mathcal{S}_1) + \emph{\textsf{pdim}}(\mathcal{S}_2) 
\]
\end{proposition}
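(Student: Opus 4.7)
The plan is to prove this by the natural concatenation construction on the embeddings witnessing $\textsf{pdim}(\mathcal{S}_1)$ and $\textsf{pdim}(\mathcal{S}_2)$. Let $K_1 = \textsf{pdim}(\mathcal{S}_1)$ and $K_2 = \textsf{pdim}(\mathcal{S}_2)$, and let $f_1 : V \to [0,1]^{K_1}$ and $f_2 : V \to [0,1]^{K_2}$ be embeddings realizing these dimensions. I would define $f : V \to [0,1]^{K_1 + K_2}$ by concatenating coordinates, that is, $f(v) = (f_1(v), f_2(v))$, so that the first $K_1$ coordinates of $f(S)$ equal $f_1(S)$ and the remaining $K_2$ coordinates equal $f_2(S)$.

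The key observation, which drives the argument, is that for every $S \subseteq V$, $\|f(S)\|_\infty = \max\bigl(\|f_1(S)\|_\infty, \|f_2(S)\|_\infty\bigr)$, since the $\ell_\infty$ norm of a concatenation is the maximum of the norms of the two blocks. Therefore $\|f(S)\|_\infty \leq 1$ holds if and only if both $\|f_1(S)\|_\infty \leq 1$ and $\|f_2(S)\|_\infty \leq 1$, which by the defining property of $f_1$ and $f_2$ is equivalent to $S \in \mathcal{S}_1$ and $S \in \mathcal{S}_2$, i.e., $S \in \mathcal{S}_1 \cap \mathcal{S}_2$. Hence $f$ is a valid packing embedding for $\mathcal{S}_1 \cap \mathcal{S}_2$ using $K_1 + K_2$ dimensions, which gives the claimed bound.

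There is essentially no obstacle here: the proof is a one-line concatenation, and the only thing to check carefully is the equivalence of the two directions of the ``if and only if'' in the packing dimension definition, both of which follow immediately from the $\ell_\infty$ maximum identity above. The proposition is being set up as a tool so that later one can decompose a complicated simple bounded set family into an intersection of structured pieces, each of which admits a small embedding, and then sum their dimensions; so the cleanness of this lemma is exactly what makes that decomposition strategy go through.
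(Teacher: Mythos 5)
Your proof is correct and is essentially identical to the paper's: both define $f(v) = (f_1(v), f_2(v))$ and use the fact that $\|f(S)\|_\infty = \max(\|f_1(S)\|_\infty, \|f_2(S)\|_\infty)$ to conclude the embedding characterizes $\mathcal{S}_1 \cap \mathcal{S}_2$. Nothing further to add.
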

\begin{proof}
	Let $K_1 = \textsf{pdim}(\mathcal{S}_1)$ and $K_2 = \textsf{pdim}(\mathcal{S}_2)$. 
	Suppose that $f_1 : V \rightarrow [0,1]^{K_1}$ be such that for every set $S \subseteq V$,
	\[
	\norm{f_1(S)}_{\infty} \leq 1
	\]
	if and only if $S \in \mathcal{S}_1$. Similarly, let $f_2 : V \rightarrow [0,1]^{K_2}$ be such that for every set $S \subseteq V$, 
	\[
	\norm{f_2(S)}_{\infty}\leq 1
	\]
	if and only if $S \in \mathcal{S}_2$. 
	Consider the function $f : V \rightarrow [0,1]^{K_1 + K_2}$ defined as $f(v)=(f_1(v),f_2(v))$. Then, for every set $S \subseteq V$, $\norm{f(S)}_{\infty}\leq 1$ if and only if $\norm{f_1(S)}_{\infty}\leq 1$ and $\norm{f_2(S)}_{\infty}\leq 1$. Thus, for every set $S \subseteq V$, $\norm{f(S)}_{\infty}\leq 1$ if and only if $S \in \mathcal{S}_1$ and $S \in \mathcal{S}_2$, or equivalently, if $S \in \mathcal{S}_1 \cap \mathcal{S}_2$. Hence, the packing dimension of $\mathcal{S}_1 \cap \mathcal{S}_2$ is at most $K_1 + K_2$. 
\end{proof}

For a set $S \subseteq V$, let $S^{\uparrow}$ be the family of sets $T \subseteq V$ such that $S \subseteq T$. Similarly, let $S^{\downarrow}$ be the family of sets $T \subseteq V$ such that $T \subseteq S$. For a set system $\mathcal{S}$, we let $\mathcal{S}^{\uparrow}$ (resp. $\mathcal{S}^{\downarrow}$) denote the union of $S^{\uparrow}$ (resp. $S^{\downarrow}$) over all $S \in \mathcal{S}$.

Consider a set $S \subseteq V$ with $|S|>1$. For the set family $2^V\setminus S^\uparrow$, we have the embedding $f:V \rightarrow [0,1]$ defined as 
\[
f(v) = \begin{cases}
\frac{1}{|S|}+\frac{1}{|S|^2}, \text{ if }v\in S \\ 
0 \text{ otherwise. }
\end{cases}
\]
This shows that $\textsf{pdim}(2^V \setminus S^{\uparrow})\leq 1$ for all $S\subseteq V$ with $|S|>1$. Note that we have 
\[
\mathcal{S} = \bigcap_{S \notin \mathcal{S}} 2^V \setminus S^{\uparrow}
\]
for every downward closed set system $\mathcal{S}$.
Combined with~\Cref{prop:dim-intersection}, we obtain that for every non-trivial downward closed family $\mathcal{S}$ on a universe $V$, $\textsf{pdim}(\mathcal{S})\leq 2^{|V|}$.

We are interested in the classes of set families for which there is an efficient embedding with packing dimension being independent of $|V|$. 
In particular, the class of set families that we study are bounded set families where each set has cardinality at most $k$, and each element appears in at most $\Delta$ sets.
We can show that such bounded set families that are downward closed and non-trivial have packing dimension at most $(k\Delta)^{O(\Delta)}$. Together with the $\Omega(\log k)$ hardness~\cite{Trevisan01} of $k$-set cover where each set has cardinality at most $k$, a fixed constant and each element appearing in $(\log k)^{O(1)}$ sets, this packing dimension bound gives the hardness of $(\log d)^{\Omega(1)}$ for the Vector Bin Packing problem when $d$ is a large constant. 
Unfortunately, the exponential dependence on $\Delta$ is necessary for the packing dimension of bounded set systems, and thus, this approach does not yield the optimal $\Omega(\log d)$ hardness of Vector Bin Packing. 

Instead of using arbitrary bounded set families, we bypass this barrier by using simple bounded set families. 
Recall that a set family is called simple if any two distinct sets in the family intersect in at most one element. 
It turns out that for simple bounded set families i.e. simple set families $\mathcal{S}$ where each set has cardinality at most $k$, and each element appears in at most $\Delta$ sets, the packing dimension of $\mathcal{S}^\downarrow$ can be upper bounded by $(k\Delta)^{O(1)}$. Together with the $\Omega(\log k)$ hardness of simple $k$-set cover (proved in~\Cref{sec:simple}), we get the optimal $\Omega(\log d)$ hardness of Vector Bin Packing when $d$ is a large constant. 
In the next subsection, we prove the packing dimension upper bound, and we use this upper bound to prove the hardness of Vector Bin Packing in~\Cref{subsec:vbp-hardness}. 

\subsection{Packing Dimension of Simple Bounded Set Families}
The main embedding result that we prove is that the downward closure of simple set systems where each set has cardinality $k$ and each element appears in at most $\Delta$ sets has packing dimension at most polynomial in $k,\Delta$. 
\begin{theorem}
	\label{thm:bounded-dimension}
	Suppose that $\mathcal{S}$ is a simple non-trivial set system on a universe $V$ where each set has cardinality at most $k \geq 2$ and each element appears in at most $\Delta$ sets. Then, 
	\[
	\emph{\textsf{pdim}}(\mathcal{S}^{\downarrow}) \leq (k\Delta)^{O(1)}
	\]
	Furthermore, an embedding realizing the above can be found in time polynomial in $|V|$.
\end{theorem}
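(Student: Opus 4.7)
The plan is to use Proposition~\ref{prop:dim-intersection} to decompose $\mathcal{S}^{\downarrow}$ as an intersection of $\mathrm{poly}(k,\Delta)$ structured downward-closed families, each of which admits an embedding into $\mathrm{poly}(k,\Delta)$ coordinates. The decomposition is driven by two greedy colourings, mirroring the Chandran--Francis--Sivadasan argument that bounds boxicity by a polynomial in the maximum degree. First I properly colour the intersection graph of $\mathcal{S}$: vertices are the sets in $\mathcal{S}$ and edges join sets sharing an element, so each vertex has degree at most $k(\Delta-1)$ and a greedy colouring uses $c_{1}\le k\Delta$ colours, yielding $\mathcal{S}=\mathcal{S}_{1}\sqcup\cdots\sqcup\mathcal{S}_{c_{1}}$ with each $\mathcal{S}_{i}$ a matching. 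Simplicity then delivers the structural fact driving the proof: if $T\subseteq S^{\ast}\in\mathcal{S}$ with $S^{\ast}$ of colour $i^{\ast}$, then for every other colour $i$ and every $S\in\mathcal{S}_{i}$, $|T\cap S|\le|S^{\ast}\cap S|\le 1$. Accordingly, I set
\[
\mathcal{T}_{i} = \bigl\{\,T\subseteq V \,:\, T\subseteq S \text{ for some } S\in\mathcal{S}_{i},\ \text{or}\ |T\cap S|\le 1\text{ for every } S\in\mathcal{S}_{i}\,\bigr\},
\]
which is downward-closed and satisfies $\mathcal{S}^{\downarrow}\subseteq\mathcal{T}_{i}$ for every $i$.

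Next I show that everything in $\bigcap_{i}\mathcal{T}_{i}\setminus\mathcal{S}^{\downarrow}$ must contain a forbidden pair. A short induction using simplicity shows that every minimal forbidden set has size at most three (for $|F|\ge 4$, any two $3$-subsets of $F$ sharing a pair must lie in the same set by simplicity, which propagates to give $F\subseteq S$, a contradiction), and a minimal forbidden triple $\{u,v,w\}$ is witnessed by three sets $S_{uv},S_{uw},S_{vw}$ that pairwise share exactly one of $u,v,w$ and therefore receive three distinct colours, each one already violating both clauses of the corresponding $\mathcal{T}_{i}$. To kill forbidden pairs I greedily colour the co-occurrence graph on $V$ (whose maximum degree is $\le k\Delta$) with $c_{2}\le k\Delta+1$ colours, giving $V=V_{1}\sqcup\cdots\sqcup V_{c_{2}}$: a monochromatic pair in $V_{a}$ is automatically forbidden and is caught by the one-dimensional coordinate $f_{a}(v)=\mathbf{1}[v\in V_{a}]$, while for each pair $a\neq b$ I introduce a structured family $\mathcal{R}_{a,b}$ forbidding bi-chromatic non-co-occurring pairs and exploit that the bipartite co-occurrence graph between $V_{a}$ and $V_{b}$ has maximum degree $\le\Delta$. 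Combining everything gives $\mathcal{S}^{\downarrow}=\bigl(\bigcap_{i}\mathcal{T}_{i}\bigr)\cap\bigl(\bigcap_{a}\{T:|T\cap V_{a}|\le 1\}\bigr)\cap\bigl(\bigcap_{a\neq b}\mathcal{R}_{a,b}\bigr)$.

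For the packing dimension of each piece I build an explicit embedding. The matching structure of $\mathcal{S}_{i}$ lets me embed $\mathcal{T}_{i}$ using weights of the form $1/|S|$ on each $S\in\mathcal{S}_{i}$ together with $O(k^{2})$ transversals from a perfect hash family that witness any $T$ with two elements in a common $S\in\mathcal{S}_{i}$ but extra elements outside $S$. For $\mathcal{R}_{a,b}$ the $\Delta$-bounded bipartite degree allows one to group the forbidden bi-chromatic pairs into $\mathrm{poly}(\Delta)$ structured layers, each captured by a single coordinate respecting $f(S)\le 1$ for every $S\in\mathcal{S}$. Summing via Proposition~\ref{prop:dim-intersection} then yields $\textsf{pdim}(\mathcal{S}^{\downarrow})\le c_{1}\cdot\mathrm{poly}(k,\Delta)+c_{2}+c_{2}^{2}\cdot\mathrm{poly}(\Delta)=(k\Delta)^{O(1)}$, and all colourings and weight vectors are constructed greedily in time polynomial in $|V|$, giving the algorithmic claim.

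The most delicate step will be bounding $\textsf{pdim}(\mathcal{R}_{a,b})$ independently of $|V|$: a naive bipartite-separation or binary-encoding approach to the forbidden-pair bipartite graph pays a $\log|V|$ factor that would ruin the $(k\Delta)^{O(1)}$ bound. Getting around this requires using the sparsity of the co-occurrence graph directly, grouping forbidden pairs by the at most $\Delta$ sets through which each vertex is connected to the opposite colour class, rather than trying to separate all forbidden pairs simultaneously.
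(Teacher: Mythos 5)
Your high-level plan (greedy colourings plus Proposition~\ref{prop:dim-intersection}) is the right skeleton, and the observation that every inclusion-minimal set outside $\mathcal{S}^\downarrow$ has size at most three is a correct and attractive consequence of simplicity. However, the specific decomposition is broken at the embedding step: the factors $\mathcal{T}_i$ and $\mathcal{R}_{a,b}$ do \emph{not} have packing dimension $(k\Delta)^{O(1)}$, and the obstruction is not the $\log|V|$ loss you flag for $\mathcal{R}_{a,b}$ but something more fundamental. The family $\mathcal{T}_i$ (and likewise $\mathcal{R}_{a,b}$) contains arbitrarily large sets: any partial transversal of the matching $\mathcal{S}_i$, of size up to $|\mathcal{S}_i|$, is allowed. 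In any embedding $f:V\to[0,1]^K$ realizing $\mathcal{T}_i$ this forces, for every coordinate $c$, the inequality $\sum_{S\in\mathcal{S}_i}\max_{v\in S}f(v)_c\le 1$. Take $k=2$, $\Delta=1$, $\mathcal{S}_i=\{\{a_1,b_1\},\ldots,\{a_m,b_m\}\}$ and $V=\bigcup\mathcal{S}_i$. For each coordinate $c$, at most $3$ indices $j$ can have $\max(f(a_j)_c,f(b_j)_c)>1/4$, so with $K$ coordinates at most $3K$ indices are ever ``heavy.'' If $m>3K+1$ there are two light indices $j,j'$, and then $T=\{a_j,b_j,a_{j'}\}\notin\mathcal{T}_i$ has $\|f(T)\|_\infty\le 3/4<1$ in every coordinate, a contradiction. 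Hence $\textsf{pdim}(\mathcal{T}_i)=\Omega(m)=\Omega(|V|)$, while $(k\Delta)^{O(1)}=O(1)$ here, so no embedding for $\mathcal{T}_i$ can be found, and the proposed ``$O(k^2)$ transversals from a perfect hash family'' cannot work (perfect hash families also intrinsically pay a $\log|V|$ factor, but the real issue is the one above).

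The paper sidesteps this by making two choices that you did not make. First, it colours the \emph{square} of the co-occurrence graph (elements $u,v$ adjacent if some set through $u$ meets some set through $v$), not the co-occurrence graph itself; this finer, distance-$2$ colouring is exactly what makes each colour class $U_j$ a legitimate sunflower core, so that the sets through $U_j$ form a sunflower-bouquet. Second, and crucially, the structured factor is taken to be $\mathcal{S}_j = \{S\in\mathcal{S}:S\cap U_j\neq\emptyset\}\cup\{S\subseteq V\setminus U_j:|S|\le k\}$, which has the cardinality cap $|S|\le k$ built in: $\mathcal{S}_j^\downarrow$ contains no set of size larger than $k$, in particular no large transversal. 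This bound is what lets the paper's two-coordinate $\alpha_i$ construction work, because the ``petal'' weights can safely be as large as $\Theta(1/k)$ instead of $\Theta(1/|V|)$. To salvage your argument you would essentially have to replace $\mathcal{T}_i$ with a family having a distinguished core and a hard size cap, and that is not a local patch --- it is the paper's decomposition.
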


We prove the embedding result by writing the set family $\mathcal{S^\downarrow}$ as an intersection of $(k\Delta)^{O(1)}$ structured set families each of which has packing dimension at most $(k\Delta)^{O(1)}$. We can then upper bound the packing dimension of $\mathcal{S}^\downarrow$ using~\Cref{prop:dim-intersection}. 
The structured set systems we study are \textit{sunflower-bouquets}, which are a disjoint union of sunflowers\footnote{A sunflower is a collection of sets $S_1, S_2, \ldots, S_r$ whose pairwise intersection is constant i.e., there exists $U$ such that $U=S_i \cap S_j$ for all $i,j \in [r], i \neq j$. This constant intersection $U$ is called the kernel of the sunflower.} that have a single element as the kernel. The formal definition of the sunflower-bouquet set families is below. See~\Cref{fig:simple-set} for an illustration. 
\begin{definition}(Sunflower-bouquets)
A simple set system $\mathcal{S}$ on a universe $V$ is called a sunflower-bouquet with core $U \subseteq V, U \neq \phi$ if the following hold. 
\begin{enumerate}
    \item Every set $S \in \mathcal{S}$ satisfies $|S \cap U|=1$. Furthermore, for every $u \in U$, there is a set $S \in \mathcal{S}$ with $u \in S$. 
    \item For any pair of sets $S_1, S_2 \in \mathcal{S}$ with $S_1 \cap S_2 \neq \emptyset$, we have $S_1 \cap U = S_2 \cap U = S_1 \cap S_2$.
\end{enumerate}
\end{definition}

\smallskip 

We now give an efficient embedding for a sunflower-bouquet $\mathcal{S}$ on a universe $V$ with core $U \subseteq V, U \neq \phi$. 
The motivation behind this lemma is to upper bound the packing dimension of the set system $\mathcal{T^\downarrow} = \mathcal{S}^\downarrow \cup \{ S \subseteq V \setminus U : |S| \leq k \}$. 

\begin{lemma}
\label{lem:structured}
Fix an integer $k \geq 2$. Let $\mathcal{S}$ be a simple set family defined on a universe $V$ that is a sunflower-bouquet with core $U$. Furthermore, each set in the family has cardinality at most $k$ and each element appears in at most $\Delta$ sets. 
Then, there exists an embedding $f: V \rightarrow [0,1]^K$ that satisfies
\begin{enumerate}
    \item[(A)] For every set $S \in \mathcal{S}$, 
    \[
    \norm{f(S)}_{\infty} \leq 1.
    \]
    \item[(B)] For every set $S \notin \mathcal{S}^{\downarrow}$ with $S \cap U \neq \emptyset$, 
    \[
    \norm{f(S)}_{\infty}>1.
    \]
    \item[(C)] For every set $S \subseteq V$ with $S \cap U = \emptyset$ and $|S|\leq k$, 
    \[
    \norm{f(S)}_{\infty} \leq 1.
    \]
    \item[(D)] For every set $S\subseteq V$ with $|S|>k$,
    \[
    \norm{f(S)}_{\infty}>1.
    \]
\end{enumerate}
with $K=(k\Delta)^{O(1)}$. Furthermore, such an embedding can be found in time polynomial in $|V|$ given $\mathcal{S}$.
\end{lemma}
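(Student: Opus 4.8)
The plan is to construct the embedding $f$ coordinatewise, grouping the $K = (k\Delta)^{O(1)}$ coordinates into a few blocks, each responsible for enforcing one of the four conditions (A)--(D). Let me set up notation: write $U = \{u_1, \ldots, u_t\}$ for the core, and for each $u_j$ let $P_j^1, \ldots, P_j^{r_j}$ be the ``petals'', i.e.\ the sets $S \setminus \{u_j\}$ over all $S \in \mathcal{S}$ with $S \cap U = \{u_j\}$; by the sunflower-bouquet property these petals are pairwise disjoint (even across different $j$, since two sets sharing an element must share their entire core intersection), and each has size at most $k-1$. Note $t \le \Delta$ is false in general, but each $u_j$ lies in at most $\Delta$ sets, so $r_j \le \Delta$; and each non-core element lies in at most $\Delta$ sets. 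The key structural fact is that the sets of $\mathcal{S}$ form, up to the shared core elements, a disjoint packing of $V$.

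First I would handle condition (D) with a single coordinate: set $f(v)_1 = \tfrac{1}{k} + \tfrac{1}{k^2}$ for every $v \in V$. Then $\|f(S)\|_\infty \ge |S|(\tfrac1k + \tfrac1{k^2}) > 1$ exactly when $|S| > k$, while for $|S| \le k$ this coordinate contributes at most $1$; this is the same trick used in the excerpt for $2^V \setminus S^\uparrow$. Next, condition (C): for any $S$ disjoint from $U$ with $|S| \le k$ we must keep $\|f(S)\|_\infty \le 1$, so every coordinate we design for (A)/(B) must be ``safe'' on such sets. Since the non-core elements are partitioned by the petals (an element outside $U$ that is in no set of $\mathcal S$ can be sent to $0$ in all coordinates except the first), and each petal has $< k$ elements, I would make the contribution of petal-elements small enough that any $k$ of them sum to at most $1$ in each coordinate — this is automatic if each non-core element gets value $O(1/k)$ in the relevant coordinates. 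The heart of the construction is conditions (A) and (B) simultaneously: within a single sunflower (fixed $u_j$), I want $f(\{u_j\} \cup P) $ to have $\infty$-norm $\le 1$ for $P \in \{P_j^1, \ldots, P_j^{r_j}\}$ but $> 1$ for any other set $S$ with $u_j \in S$, $|S| \le k$, that is not contained in some petal. I would use a block of coordinates indexed by the $r_j \le \Delta$ petals: give $u_j$ value $1 - \tfrac{1}{k}$ in the coordinate for petal $P_j^i$, give each element of $P_j^i$ a value like $\tfrac{1}{k(k-1)}$ in that same coordinate and a slightly larger value $\tfrac{1}{k}$ in every \emph{other} petal-coordinate of $u_j$; then adding $u_j$ to a set of non-core elements stays $\le 1$ only if all those elements come from a single petal, and is $>1$ otherwise (an element from a foreign petal pushes that coordinate over $1-\tfrac1k + \tfrac1k$... — the exact constants need tuning so that ``all-same-petal'' is the unique feasible case, including the requirement that $S$ must actually equal, not merely be contained in, a full set of $\mathcal S$ versus a subset, which is fine since we only need $\mathcal S^\downarrow$-membership). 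One must also ensure elements from two \emph{different} sunflowers ($u_j$ and $u_{j'}$ both in $S$) force $\|f(S)\|_\infty > 1$; since $|S \cap U| \ge 2$ means $S \notin \mathcal S^\downarrow$ unless... actually $\mathcal S^\downarrow$ can contain sets meeting $U$ in two elements only if some $S \in \mathcal S$ does, which the definition forbids ($|S \cap U| = 1$), so any $S$ with $|S \cap U| \ge 2$ is outside $\mathcal S^\downarrow$ and must be pushed over $1$ — a dedicated coordinate giving each core element value $> \tfrac12$ does this.

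Assembling: $K$ is $1$ (for D) plus $1$ (for the two-core-element case) plus $\sum_j r_j \le t\Delta \le |V|\cdot\Delta$ — that's too many, so instead I would reuse coordinates across sunflowers whose element sets are disjoint, or more simply bound the number of distinct petal-coordinates needed by a proper edge-coloring / greedy argument: since each element lies in $\le \Delta$ sets, the ``conflict graph'' on sunflowers (adjacent if they share an element — but they don't, by disjointness of petals and distinctness of cores) is in fact edgeless except through core elements, so actually a constant number of reused coordinate-blocks, each of size $O(\Delta)$, suffices, giving $K = O(\Delta^2)$ or at worst $(k\Delta)^{O(1)}$ after accounting for the petal-internal coordinates. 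I expect the main obstacle to be choosing the numerical values in the petal-block so that all four conditions hold \emph{simultaneously} and \emph{exactly} at the threshold $1$ — in particular making sure (i) a subset of a petal stays feasible (needed for $\mathcal S^\downarrow$), (ii) a petal union a foreign element becomes infeasible, (iii) condition (C) sets stay feasible in these same coordinates, and (iv) rational values with polynomially-bounded denominators suffice — and then verifying the coordinate count stays polynomial in $k,\Delta$ after the reuse/coloring step. Everything else is routine bookkeeping and an appeal to Proposition~\ref{prop:dim-intersection}-style direct-sum reasoning.
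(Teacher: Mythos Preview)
Your overall strategy---building $f$ as a concatenation of coordinate blocks, each enforcing one failure mode---is exactly the paper's plan. But the concrete blocks you sketch do not work, and the trouble is structural rather than a matter of tuning constants. First a small slip: your single coordinate for (D), giving every $v$ the value $\tfrac{1}{k}+\tfrac{1}{k^2}$, already violates (A), since a set $S\in\mathcal{S}$ with $|S|=k$ sums to $1+\tfrac1k>1$ there; one must use $\tfrac1k$ instead. The real gap is in your petal-coordinate block. You put $u_j$ at $1-\tfrac1k$ in the coordinate for petal $P_j^i$, and put elements of every \emph{other} petal $P_j^{i'}$ at $\tfrac1k$ in that same coordinate. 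Now take a legitimate $S=\{u_j\}\cup P_j^1\in\mathcal S$ with $|P_j^1|=k-1$ and evaluate the coordinate for $P_j^2$: the sum is $(1-\tfrac1k)+(k-1)\cdot\tfrac1k=2-\tfrac2k>1$, so (A) fails. The point is that $u_j$ is large in \emph{every} petal-coordinate of its sunflower, and the elements of a full legitimate petal are ``foreign'' (hence large) in all the other petal-coordinates; no rescaling avoids this. Separately, sending $V_0$-elements (those in no set of $\mathcal S$) to $0$ everywhere except the (D)-coordinate means $\{u_j,v_0\}$ with $v_0\in V_0$ has $\ell_\infty$-norm $\le 1$ despite lying outside $\mathcal S^\downarrow$, violating (B).

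The paper avoids these issues by splitting (B) into two sub-cases handled by genuinely different gadgets. For \emph{cross-sunflower} sets ($u_i\in S$ but $S$ meets some $V_{i'}$, $i'\neq i$) it uses $2k\Delta$ coordinates: for each position $l\in[k\Delta]$ in an ordering of each $V_i$, a two-dimensional map sends $u_i\mapsto(\alpha_i,\,2-\tfrac1k-\alpha_i)$ and the $l$-th element of $V_i$ to $(1-\alpha_i,\,\alpha_i+\tfrac1k-1)$, with the $\alpha_i\in(1-\tfrac1k,1)$ distinct; then $u_i$ plus an element of its own $V_i$ sums to exactly $(1,1)$, while $u_i$ plus an element of a different $V_{i'}$ gives $(1+\alpha_i-\alpha_{i'},\,1+\alpha_{i'}-\alpha_i)$, pushing one coordinate over $1$. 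For \emph{intra-sunflower} sets ($u_i\in S\subseteq\{u_i\}\cup V_i$, $S\notin\mathcal S^\downarrow$) the paper notes that the minimal offenders are triples $\{u_i,x,y\}$ with $x,y$ in different petals, and dedicates one coordinate to each such pair $\{x,y\}\subseteq V_i$ (at most $(k\Delta)^2$ in total), setting $u_i\mapsto 1-\tfrac2k+\tfrac1{k^2}$ and $x,y\mapsto\tfrac1k$ there; a genuine $S\in\mathcal S$ contains at most one of $x,y$ and so never exceeds $1$ in these coordinates. The $V_0$ issue is handled by a two-dimensional $f_0$ that gives core elements value $1$ and $V_0$-elements value $\tfrac1k$ in the same coordinate. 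Your proposal conflates the cross- and intra-sunflower mechanisms into a single block, and that is precisely where it breaks.
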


\begin{proof}
Let $U=\{u_1, u_2, \ldots, u_m\}$. We can partition $V\setminus U$ into $V_0, V_1, \ldots, V_m$ with
\[
V_i = \bigcup_{S \in \mathcal{S} : u_i \in S} S \setminus \{ u_i \} 
\]
for all $i \in [m]$. 
Here, $\mathcal{S}$ restricted to $\{u_i\} \cup V_i$ is a sunflower set system with a single element $u_i$ as the kernel for every $i \in [m]$.
As each set in $\mathcal{S}$ has cardinality at most $k$ and each element appears in at most $\Delta$ sets, we get that $|V_i|\leq k\Delta$ for all $i \in [m]$. For every $i \in [m]$, we order the elements of $V_i$ as $ \{ v_{i,1}, v_{i,2}, \ldots, v_{i,k\Delta}\}$ (with repetitions if needed).

\begin{figure}
	\includegraphics{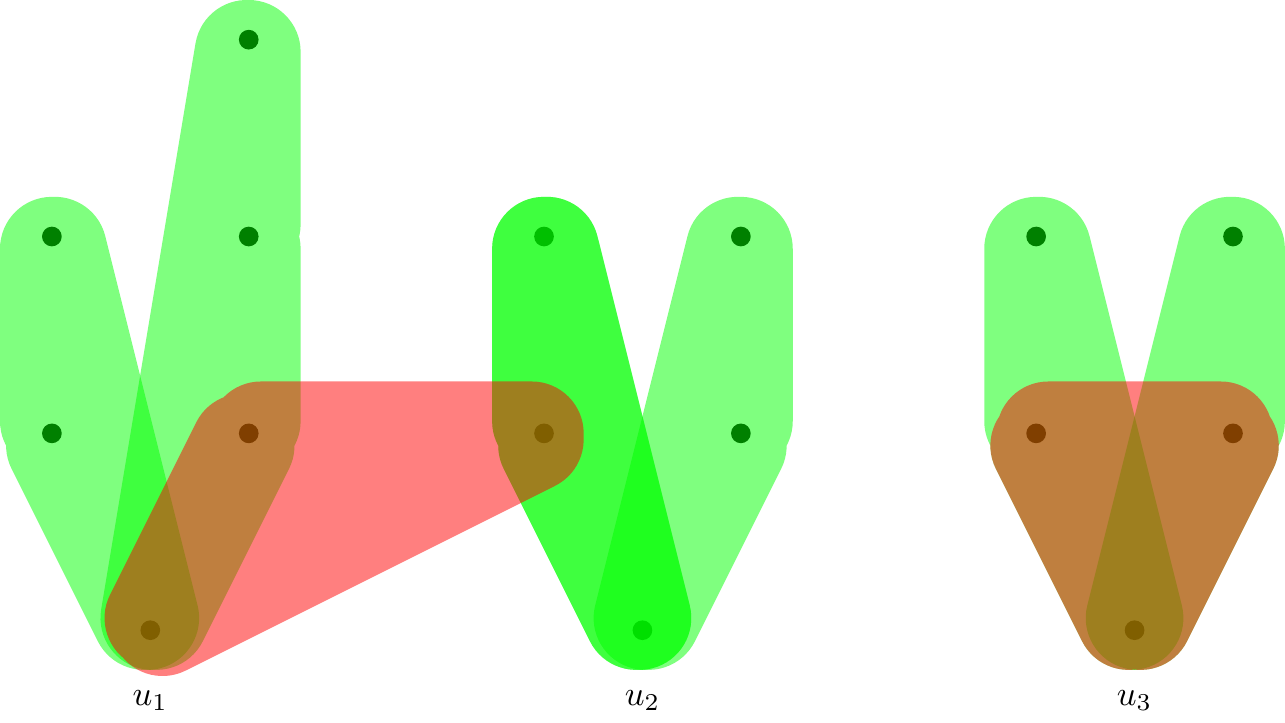}
	\centering
	\caption{An illustration of a sunflower-bouquet set family. Here, $\mathcal{S}$ is the family of all the green colored sets. It is a sunflower-bouquet with core $U=\{u_1, u_2, u_3\}$. 
		In the embedding, we ensure that the $\ell_{\infty}$ norm of the left red set is greater than $1$ in the first step while the right side red set is handled in the second step.}
	\label{fig:simple-set}
\end{figure}

We construct the final embedding $f$ as a concatenation of embeddings with smaller dimensions $f := (f_0, g, g')$ where $f_0 : V \rightarrow [0,1]^2$, $g: V \rightarrow [0,1]^{K_1}$ and $g' : V \rightarrow [0,1]^{K_2}$ all satisfy the conditions $(A)$ and $(C)$. In other words, for every set $S \subseteq V$ satisfying either $S \in \mathcal{S}$ or $S \cap U = \phi$ and $|S| \leq k$, we have $\norm{f_0(S)}_{\infty} \leq 1$, $\norm{g(S)}_{\infty}\leq 1$, and $\norm{g'(S)}_{\infty}\leq 1$. 
Note that for a set $S \subseteq V$, we have 
\[
\norm{f(S)}_{\infty} = \max (\norm{f_0(S)}_{\infty}, \norm{g(S)}_{\infty}, \norm{g'(S)}_{\infty})
\]
Thus, if $f_0, g, g'$ satisfy the conditions $(A)$ and $(C)$, the final embedding $f$ also satisfies the conditions $(A)$ and $(C)$.
Furthermore, the parameters $K_1$ and $K_2$ are chosen such that the final dimension of $f$, $2+K_1+K_2$ is at most $(k\Delta)^{O(1)}$. 


First, we define the embedding $f_0 : V\rightarrow [0,1]^2$ that satisfies the conditions $(A)$ and $(C)$ with the additional property that for every set $S \subseteq V$ with $|S| > k$, or if $S \cap U \neq \phi$ and $S \cap V_0 \neq \phi$, or if $|S \cap U|>1$, we have $\norm{f_0(S)}_{\infty}>1$. 
We obtain this by a simple two-dimensional embedding as follows:
\[
f_0(v)=
\begin{cases}
\left(1,\frac 1k\right), \text{ if } v \in U \\
\left(\frac 1k,\frac 1k\right), \text{ if } v \in V_0\\
\left(0,\frac 1k\right) \text{ otherwise.} 
\end{cases}
\]
We can verify that $f_0$ satisfies the conditions $(A)$ and $(C)$. Furthermore, suppose that $\norm{f_0(S)}_{\infty} \leq 1 $ for a set $S\subseteq V$. Then, we can obtain the following observations that we will use later. 
\begin{enumerate}
    \item As $f_0(v)_1 = 1 $ for all $v \in U$, $|S \cap U| \leq 1$. As $f_0(v)_1 = \frac 1k$ for all $v \in V_0$, if $S \cap U \neq \emptyset$, then $S \cap V_0 = \emptyset$.
    \item As $f_0(v)_2 = \frac 1k$ for all $v \in V$, $|S|\leq k$. Thus, for every set $S \subseteq V$ such that $|S|>k$, we have $\norm{f_0(S)}_{\infty}>1$, and hence, $\norm{f(S)}_{\infty}>1$. This already proves the condition $(D)$ of the lemma. 
\end{enumerate}

\smallskip \noindent \textbf{Overview of rest of the proof.} We now restrict our attention to sets $S \subseteq V$ such that $|S|\leq k, S \cap V_0 = \phi$, and $|S \cap U| \leq 1$. 
Our goal is to find an embedding for these sets that satisfies the conditions $(A)$, $(B)$, and $(C)$.
This is the technically challenging part of the proof and requires setting various coordinates carefully to encode the properties of the set system. 
We break this down into two steps: eliminating the ``cross-sunflower'' sets, and pinning down the ``intra-sunflower'' sets. We give an overview of the ideas used in the two steps before presenting the full proof.  
\begin{enumerate}
    \item The cross-sunflower sets are the sets $S \subseteq V$ that contain $u_i$ for some $i \in [m]$, but also intersect another sunflower i.e., $S \cap V_{i'} \neq \phi$ for an $i' \neq i$. Note that such a set $S$ satisfies $S \notin \mathcal{S}^{\downarrow}$ and $S \cap U \neq \phi$, and thus, to satisfy the condition $(B)$, we need to ensure that $\norm{f(S)}_{\infty}>1$ for such sets. We achieve this by constructing an embedding $g:V\rightarrow [0,1]^{K_1}$ that satisfies the conditions $(A)$ and $(C)$ and has $\norm{g(S)}_{\infty}>1$ for all cross-sunflower sets.
    
    We illustrate the idea used in constructing this embedding using a toy example. Suppose that we have a set of pairs of elements $(u_1, v_1), (u_2, v_2), \ldots, (u_n,v_n)$, and let their union be denoted by $W = \{u_1, v_1, u_2, v_2, \ldots, u_n, v_n \}$. Our goal is to find an embedding $g : W \rightarrow [0,1]^2$ such that 
    \begin{enumerate}
        \item $\norm{g(u_i)+g(v_i)}_{\infty}\leq 1$ for all $i \in [n]$. 
        \item $\norm{g(u_i)+g(v_j)}_{\infty}>1$ for all $i, j \in [n], i \neq j$. 
    \end{enumerate}
    We construct this embedding by choosing $n$ distinct real numbers $\alpha_1, \alpha_2, \ldots, \alpha_n \in (0,1)$ and setting $g(u_i)=(\alpha_i,1-\alpha_i)$ and $g(v_i)=(1-\alpha_i,\alpha_i)$ for all $i \in [m]$. Note that $\norm{g(u_i)+g(v_j)}_1=2$ for all $i,j$, and thus, $\norm{g(u_i)+g(v_j)}_{\infty}=1$ if and only if $g(u_i)+g(v_j)=(1,1)$, or equivalently, when $i=j$. The actual construction extends this idea with two differences: first, the $V_i$s could have more than one element, and we use the pairs idea multiple times to account for this, and second, we need to ensure that the sum of the embedding of any $k$ elements in $V \setminus (U \cup V_0)$ has $\ell_{\infty}$ norm at most $1$, and thus, we need to choose the embedding of $u_i$ to be $(\alpha_i, 2-\frac 1k - \alpha_i)$ and set $\alpha_i \in (1-\frac 1k, 1)$. 
    \item The intra-sunflower sets are the sets $S \subseteq V$ such that $u_i \in S$ and $S  \subseteq V_i\cup \{ u_i \}$ for some $i \in [m]$. We need to ensure that every intra-sunflower set $S$ such that $S \notin \mathcal{S}^\downarrow$ satisfies $\norm{f(S)}_{\infty}>1$. We achieve this by constructing an embedding $g' : V \rightarrow [0,1]^{K_2}$ that satisfies the conditions $(A)$, $(C)$ and has $\norm{g'(S)}_{\infty}>1$ for every intra-sunflower set $S$ such that $S \notin \mathcal{S}^\downarrow$.
    
    Fix an $i \in [m]$. For every intra-sunflower set $S \subseteq \{u_i \} \cup V_i$ with $u_i \in S$ and $S \notin \mathcal{S}^\downarrow$, we use a single dimensional embedding $g_S : V \rightarrow [0,1]$ that satisfies the conditions $(A)$ and $(C)$ but has $\norm{g_S(S)}_{\infty}>1$ . We achieve this by setting $g_S(u_i)=1-\frac{|S|-1}{k}+\epsilon$, and $g_S(v)=\frac{1}{k}$ for every $v \in V_i \cap S$, and $g_S(v)=0$ for all $v \in V \setminus S$, where $ \epsilon < \frac 1k$ is a small positive constant. Note that $\norm{g_S(T)}\leq 1$ for every set $T$ such that $T\nsupseteq S$, and since $g_S(v) \leq \frac 1k$ for all $v \in V \setminus U$, the embedding $g_S$ satisfies the conditions $(A)$ and $(C)$.
    
    We can construct such a single dimensional embedding for every intra-sunflower set $S \notin \mathcal{S}^\downarrow$ and take their concatenation to obtain the required embedding $g'$. However, there could be exponential (in $k,\Delta$) number of such intra-sunflower sets $S \subseteq \{u_i\} \cup V_i$, $u_i \in S$ such that $S \notin \mathcal{S}^\downarrow$. We get around this issue by observing that we need the single dimensional embedding only for \textit{minimal} intra-sunflower sets that don't belong to $\mathcal{S}^\downarrow$. In fact, as the set system $\mathcal{S}^\downarrow$ restricted to $\{u_i\}\cup V_i$ is a sunflower with single element $u_i$ as the kernel, we can deduce that every minimal intra-sunflower set $S$ with $S \notin \mathcal{S}^\downarrow$ is of the form $\{u_i, x, y\}$ where $x,y \in V_i$. Thus, we can construct the single dimensional embedding for such sets and take their concatenation to obtain the required embedding $g'$ with dimension at most $|V_i|^2 \leq (k\Delta)^2$. 
\end{enumerate}
As every set $S \subseteq V$ such that $S \cap V_0 = \phi$, $|S \cap U| = 1$ is either a cross-sunflower set or an intra-sunflower set, the two steps together prove that $f = (f_0,g,g')$ satisfies the conditions $(A)$, $(B)$ and $(C)$. 

We now present the full formal proof of the two steps. 

\smallskip \noindent \textbf{Step-1. Eliminating the cross-sunflower sets}. In the first step, our goal is to find an embedding $g : V \rightarrow [0,1]^{K_1}$ with $K_1=2k\Delta $ such that 
\begin{enumerate}
    \item $g$ satisfies the conditions $(A)$ and $(C)$ i.e. for every set $S \in \mathcal{S}$, $\norm{g(S)}_{\infty}\leq 1$, and for every set $S \subseteq V$ with $S \cap U = \phi$ and $|S| \leq k, \norm{g(S)}_{\infty}\leq 1.$
    \item For every ``cross-sunflower'' set $S \subseteq V$ with $u_i \in S$ for some $i \in [m]$, and $S \cap V_{i'} \neq \phi$ for $i' \in [m], i' \neq i$, we have $\norm{g(S)}_{\infty} >1$.
\end{enumerate}
We achieve this by setting $g=(f_1,\ldots,f_{k\Delta})$, where each $f_l : V \rightarrow [0,1]^2, l \in [k\Delta]$ satisfies the conditions $(A)$ and $(C)$, and overall, the embedding $g$ satisfies the second condition above. 
\smallskip

We choose $m$ distinct rational numbers $ \alpha_1, \ldots, \alpha_m$ with $1-\frac 1k < \alpha_i < 1$ for all $i \in [m]$. 
We define the embeddings $f_l: V \rightarrow [0,1]^2$, $l \in [k\Delta]$ as follows.
Consider an $l \in [k\Delta]$. 
\begin{enumerate}
\item For $i \in [m]$, we set 
\[
f_l(u_i) = \left( \alpha_i, 2-\frac 1k - \alpha_i \right)
\]
\item For $i \in [m]$ and $v_{i,j} \in V_i$, we set $f_l(v_{i,j})=(0,0)$ if $v_{i,j} \neq v_{i,l}$. We set 
\[
f_l(v_{i,l})=\left( 1 - \alpha_i, \alpha_i + \frac 1k - 1 \right)
\]
\item For $v \in V_0$, we set $f_l(v)=(0,0)$.
\end{enumerate}
We verify that these embeddings satisfy the conditions $(A)$ and $(C)$. Fix an $l \in [k\Delta]$.

\begin{enumerate}
    \item[(A)] Consider a set $S \in \mathcal{S}$. Let $i \in [m]$ be such that $\{u_i\} = S \cap U$.
    We have 
    \begin{align*}
    f_l(S) &= \sum_{v \in S}f_l(v) \\ 
    &\leq \sum_{v \in \{u_i\}\cup V_i} f(v) \\ 
    &= f_l(u_i)+f_l(v_{i,l}) \\
    &= \left( \alpha_i, 2-\frac 1k - \alpha_i \right) + \left( 1 - \alpha_i, \alpha_i + \frac 1k - 1 \right) = (1,1).
    \end{align*}
    \item[(C)] This follows directly from the fact that $\norm{f_l(v)}_1\leq \frac 1k$ for all $l \in [k\Delta]$ and $v \in V \setminus U$. 
\end{enumerate}
Let $g:V \rightarrow [0,1]^{2k\Delta}$ be defined as $g=(f_1,\ldots, f_{k\Delta})$. As each of the individual embeddings satisfies $(A)$ and $(C)$, $g$ also satisfies the conditions $(A)$ and $(C)$. 

Let $S \subseteq V \setminus V_0$, $|S \cap U |=1$ be such that 
\[
\norm{g(S)}_{\infty} \leq 1.
\]
i.e. $\norm{f_l(S)}_{\infty}\leq 1$ for all $l \in [k\Delta]$.
Suppose that $S \cap U = \{u_i\}$. Then, we claim that $S \subseteq \{u_i\} \cup V_i$. Suppose for contradiction that this is not the case, and there exists $v_{i',l} \in V_{i'}$ with $i' \neq i, i' \in [m]$ and $l \in [k\Delta]$ such that $v_{i',l} \in S$. We have 
    \begin{align*}
        f_l(S) &= \sum_{v \in S}f_l(v) \\ 
        &\geq f_l(u_i) + f_l(v_{i',l})\\ 
        &= \left( \alpha_i, 2-\frac 1k - \alpha_i \right) + \left( 1-\alpha_{i'}, \alpha_{i'}+\frac 1k - 1 \right) \\ 
        &= ( 1+ \alpha_i - \alpha_{i'}, 1 + \alpha_{i'}-\alpha_i )
    \end{align*}
    As $\alpha_i \neq \alpha_{i'}$, $\norm{f_l(S)}_{\infty}>1$, a contradiction.
Thus, for every set $S \subseteq V$ such that $u_i \in S$, $S \cap V_{i'}\neq \phi$ for some $i' \neq i$, we have $\norm{g(S)}_{\infty}>1$. 

\smallskip \noindent \textbf{Step 2. Pinning down the intra-sunflower sets.} In the second step, our goal is to find an embedding $g' : V \rightarrow [0,1]^{K_2}$ with $K_2=(k\Delta)^2$ such that 
\begin{enumerate}
    \item $g'$ satisfies the conditions $(A)$ and $(C)$. 
    \item For every $i \in [m]$ and ``intra-sunflower'' set $S \subseteq \{u_i\} \cup V_i$ such that $u_i \in S$ and $S \notin \mathcal{S}^\downarrow$, we have $\norm{g'(S)}_{\infty}>1$.
\end{enumerate}
We achieve this by setting $g'=(g_1, g_2, \ldots, g_{(k\Delta)^2)})$ where each $g_l$, $l \in [(k\Delta)^2]$ satisfies the conditions $(A)$ and $(C)$, and the overall function $g'$ satisfies the second condition above. 

\smallskip 

For every $i \in [m]$, we order all the pairs of distinct elements  $x,y \in V_i$ as $\{ V_{i,1}, V_{i,2}, \ldots, V_{i,(k\Delta)^2}\}$ (with repetitions if needed). The upper bound on the number of such pairs is obtained using the fact that $|V_i|\leq k\Delta$ for all $i \in [m]$.

We define the embeddings $g_l:V \rightarrow [0,1]$, $l \in [(k\Delta)^2]$ below. Fix an $l \in [(k\Delta)^2]$. 
\begin{enumerate}
    \item Consider an $i \in [m]$. We have two different cases:
    \begin{enumerate} 
    \item If $V_{i,l} \cup \{u_i\} \in \mathcal{S}^{\downarrow}$, we set $g_l(u_i)=0$ and $g_l(v)=0$ for all $v \in V_i$.
    \item If $V_{i,l} \cup \{u_i\} \notin \mathcal{S}^{\downarrow}$, we set $g_l(v)=\frac{1}{k}$ for all $v \in V_{i,l}$, and $g_l(v)=0$ for all $v \in V_i \setminus V_{i,l}$. We set 
    \[
    g_l(u_i) = 1 - \frac{2}{k} + \frac{1}{k^2}
    \]
    \end{enumerate}
    \item For all $v \in V_0$, we set $g_l(v)=0$. 
\end{enumerate}

We now verify that these embeddings satisfy the conditions $(A)$ and $(C)$. Fix an integer $l \in [(k\Delta)^2]$.
\begin{enumerate}
    \item[(A)] Consider a set $S \in \mathcal{S}$. Let $\{u_i\}=S \cap U$. If $\{u_i\} \cup V_{i,l} \in \mathcal{S}^{\downarrow}$, $g_l(v)=0$ for all $v \in S$, and thus we have $|g_l(S)|\leq 1$. 
    Now suppose that $\{u_i\} \cup V_{i,l} \notin \mathcal{S}^{\downarrow}$. This implies that $V_{i,l}$ is not a subset of $S$. As $|V_{i,l}|=2$, $|V_{i,l} \cap S|\leq 1$. We get 
    \begin{align*}
        \sum_{v \in S}g_l(v) &= g_l(u_i)+\sum_{v \in S \cap V_i} g_l(v) \\ 
        &= g_l(u_i) + \sum_{v \in S \cap V_{i,l}} g_l(v) \\ 
        &\leq g_l(u_i) + \frac{1}{k} \\
        &= 1- \frac{2}{k}+\frac{1}{k^2} + \frac{1}{k} \leq 1
    \end{align*}
    \item[(C)] This follows from the fact that $g_l(v)\leq \frac 1k$ for all $v \in V \setminus U$.
\end{enumerate}

Suppose that a set $S\subseteq V$ satisfies $S \subseteq \{u_i\} \cup V_i$ for some $i \in [m]$, and $u_i \in S$, $S \notin \mathcal{S}^\downarrow$. Then, we claim that $\norm{g'(S)}_{\infty}>1$. Suppose for the sake of contradiction that $\norm{g'(S)}_{\infty}\leq 1$. Then, we have $g_l(S)\leq 1$ for all $l \in [(k\Delta)^2]$.
Let $S = \{ u_i, s_1, s_2, \ldots, s_p \}$ where $s_j\in V_i$ for all $j \in [p]$. Note that for every $v \in V_i$, there is exactly one set $S(v) \in \mathcal{S}$ such that $v \in S(v)$ and this set $S(v)$ satisfies $u_i \in S(v)$. This follows from the definition of $V_i$ and the fact that the set family $\mathcal{S}$ is a sunflower-bouquet.

We now claim that $S(s_{j_1})=S(s_{j_2})$ for all $j_1, j_2 \in [p]$.
Suppose for contradiction that there exist $j_1, j_2 \in [p]$ with $S(s_{j_1}) \neq S(s_{j_2})$.
This implies that $\{u_i, s_{j_1}, s_{j_2}\} \notin \mathcal{S}^{\downarrow}$ as otherwise, if there exists $T \in \mathcal{S}$ such that $\{ u_i, s_{j_1}, s_{j_2} \} \subseteq T$, we have $S(s_{j_1})=S(s_{j_2})=T$.
Let $l \in [(k\Delta)^2]$ be such that $V_{i,l} = \{ s_{j_1}, s_{j_2}\}$. As  $V_{i,l} \cup \{ u_i\} \notin \mathcal{S}^{\downarrow}$, we have $g_l(v)=\frac{1}{k}$ for all $v \in V_{i,l}$ and 
\[
g_l(u_i)=1-\frac{2}{k}+\frac{1}{k^2} 
\]
Thus, we get that 
\begin{align*}
\sum_{v \in S}g_l(v) &= g_l(u_i) + \sum_{v \in S \setminus \{u_i\}} g_l(v) \\ 
                    &= g_l(u_i) + \sum_{v \in V_{i,l}} g_l(v)\\
                    &=1 - \frac{2}{k}+\frac{1}{k^2} + \frac{2}{k} = 1 + \frac{1}{k^2}
\end{align*}
contradicting the fact that $g_l(S)\leq 1$. 
This completes the proof that $S(s_{j_1})=S(s_{j_2})$ for all $j_1, j_2 \in [p]$. 
Thus, there exists  a set $S(s_1) \in \mathcal{S}$ such that $S \subseteq S(s_1)$, which implies that $S \in \mathcal{S}^{\downarrow}$, a contradiction. Thus, for every set $S \subseteq V$ such that $u_i \in S$, $S \subseteq \{u_i\} \cup V_i$ for some $i \in [m]$ and $\norm{g'(S)}_{\infty} \leq 1$, we have 
$S \in \mathcal{S}^\downarrow$.

\smallskip \noindent \textbf{Final embedding.}
We define the final embedding $f:V \rightarrow [0,1]^{2+2k\Delta + (k\Delta)^2}$ as $f=(f_0,g,g')$. As each of these embeddings satisfies the conditions $(A)$ and $(C)$, the final embedding $f$ also satisfies the conditions $(A)$ and $(C)$.

Suppose that $\norm{f(S)}_{\infty}\leq 1$ for a set $S \subseteq V$. Then, $\norm{f_0(S)}_{\infty}\leq 1$, $\norm{g(S)}_{\infty}\leq 1$ and $\norm{g'(S)}_{\infty}\leq 1$. Condition $(D)$ follows immediately as $\norm{f_0(S)}_{\infty}\leq 1$ implies that $|S|\leq k$.

We now return to condition $(B)$. 
Suppose that $S \subseteq V$ with $S \cap U \neq \emptyset$ satisfies $\norm{f(S)}_{\infty} \leq 1$. Our goal is to show that $S \in \mathcal{S}^{\downarrow}$. 
We have already deduced from $\norm{f_0(S)}_{\infty}\leq 1$ that $|S \cap U|\leq 1$. As $S \cap U \neq \phi$, we have $|S \cap U|=1$, and by using $\norm{f_0(S)}_{\infty}\leq 1$ again, we get that $S \cap V_0 = \phi$. 
Let $S \cap U =\{u_i\}$. As $\norm{g(S)}_{\infty}\leq 1$, using the argument in the first step, we can conclude that $S \cap V_{i'} = \phi$ for all $i' \neq i$. Thus, $S \subseteq \{u_i\} \cup V_i$. By using the argument in the second step, $\norm{g'(S)}_{\infty}\leq 1$ implies that $S \in \mathcal{S}^\downarrow$.

Note that our construction is explicit, and we have a polynomial time algorithm to output the required embedding. 
The dimension of the embedding is $2+2k\Delta + (k\Delta)^2$, which is at most $(k\Delta)^{O(1)}$.
\end{proof}

As a corollary, we bound the packing dimension of the set family 
\[
\mathcal{T}^\downarrow = \mathcal{S}^\downarrow \cup \{ S \subseteq V \setminus U : |S|\leq k\}.
\]
\begin{corollary}
\label{cor:dimension}
Suppose that $\mathcal{T}$ is a set family defined on a universe $V$ with 
\[
\mathcal{T} = \mathcal{S} \cup \{ S \subseteq V \setminus U : |S| \leq k \}
\]
where $\mathcal{S} \subseteq 2^V$ is a sunflower-bouquet with core $U$. Furthermore, each set in $\mathcal{S}$ has cardinality at most $k\geq 2$ and each element appears in at most $\Delta$ sets in $\mathcal{S}$.
Then, 
\[
\emph{\textsf{pdim}}(\mathcal{T}^{\downarrow})\leq (k\Delta)^{O(1)}
\]
Furthermore, an embedding realizing this packing dimension can be found in time polynomial in $|V|$ given $\mathcal{S}$.
\end{corollary}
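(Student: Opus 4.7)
The plan is to show that the very same embedding furnished by \Cref{lem:structured} already realizes the packing dimension of $\mathcal{T}^{\downarrow}$, so that no further construction is required. Applying \Cref{lem:structured} to the sunflower-bouquet $\mathcal{S}$ produces, in polynomial time, an embedding $f : V \rightarrow [0,1]^K$ with $K = (k\Delta)^{O(1)}$ satisfying conditions $(A)$--$(D)$. It then suffices to verify that for every $S \subseteq V$,
\[
\norm{f(S)}_{\infty} \leq 1 \iff S \in \mathcal{T}^{\downarrow}.
\]

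For the forward implication, take $S \in \mathcal{T}^{\downarrow}$. By definition, either $S \in \mathcal{S}^{\downarrow}$ or $S \subseteq V \setminus U$ with $|S| \leq k$. In the first case, $S$ is contained in some $S' \in \mathcal{S}$; since $f$ is non-negative coordinate-wise, $f(S) \leq f(S')$, and condition $(A)$ applied to $S'$ gives $\norm{f(S)}_{\infty} \leq 1$. In the second case, $S \cap U = \emptyset$ and $|S| \leq k$, so condition $(C)$ directly yields $\norm{f(S)}_{\infty} \leq 1$.

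For the reverse implication, suppose $\norm{f(S)}_{\infty} \leq 1$. Condition $(D)$ forces $|S| \leq k$. If $S \cap U = \emptyset$, then $S \subseteq V \setminus U$ has $|S| \leq k$, placing $S$ in $\mathcal{T}^{\downarrow}$ by construction. Otherwise $S \cap U \neq \emptyset$, and the contrapositive of condition $(B)$ gives $S \in \mathcal{S}^{\downarrow} \subseteq \mathcal{T}^{\downarrow}$.

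Since the four conditions of \Cref{lem:structured} were designed precisely to separate $\mathcal{S}^{\downarrow}$ and the small subsets of $V \setminus U$ from everything else, there is no real obstacle here; the corollary is a bookkeeping step that repackages the lemma. The dimension bound $(k\Delta)^{O(1)}$ and the polynomial-time computability of $f$ are inherited verbatim from \Cref{lem:structured}.
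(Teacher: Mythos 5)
Your proof is correct and follows essentially the same route as the paper: apply \Cref{lem:structured} to $\mathcal{S}$ and check that the four conditions $(A)$--$(D)$ characterize exactly $\mathcal{T}^{\downarrow}$. You spell out slightly more explicitly than the paper does the monotonicity step (non-negativity of $f$ gives $\norm{f(S)}_{\infty} \leq 1$ for $S \in \mathcal{S}^{\downarrow}$, not just $S \in \mathcal{S}$), but this is a helpful clarification rather than a different argument.
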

\begin{proof}
As $\mathcal{S}$ is a sunflower-bouquet, from ~\Cref{lem:structured}, there exists an embedding $f:V\rightarrow [0,1]^K$ that satisfies the conditions $(A),(B),(C)$ and $(D)$ with $K=(k\Delta)^{O(1)}$. Conditions $(A)$ and $(C)$ together imply that 
\[
\norm{f(S)}_{\infty} \leq 1
\]
for all $S \in \mathcal{T}$. 
Note that 
\[
\mathcal{T}^{\downarrow} = \mathcal{S}^{\downarrow} \cup \{ S \subseteq V \setminus U : |S| \leq k \}.
\]
Suppose that $S \subseteq V$ is a subset of $V$ with $S \notin \mathcal{T}^{\downarrow}$. If $S \cap U = \phi$, then $|S|>k$, which implies that $\norm{f(S)}_{\infty}>1$ using condition $(D)$. If $S \cap U \neq \phi$, then $S \notin \mathcal{S}^{\downarrow}$ which implies that $\norm{f(S)}_{\infty}>1$ using condition $(B)$. Thus, $\norm{f(S)}_{\infty}\leq 1$ if and only if $S \in \mathcal{T}^{\downarrow}$. 
\end{proof}

We are now ready to prove our main embedding result i.e.~\Cref{thm:bounded-dimension}.

\begin{proof}[Proof of Theorem~\ref{thm:bounded-dimension}]
We define a graph $G=(V, E)$ as follows: two elements $u,v \in V$ are adjacent in $G$ if there exist sets $S_1, S_2 \in \mathcal{S}$ (not necessarily distinct) such that $u \in S_1, v \in S_2, S_1 \cap S_2 \neq \emptyset$.
As the cardinality of each set in $\mathcal{S}$ is at most $k$ and each element of $V$ is present in at most $\Delta$ sets, the maximum degree of a vertex in $G$ can be bounded above as 
\[
\Delta(G) \leq k(k-1)\Delta^2
\]
Thus, the chromatic number of $G$ is at most $L=\chi(G)\leq k(k-1)\Delta^2+1 \leq k^2\Delta^2$. Using the greedy coloring algorithm, we can partition $V$ into $L$ non-empty parts $U_1, U_2, \ldots, U_L$ such that each $U_j$ is a independent set in $G$.
For every $j \in [L]$, as $U_j$ is an independent set in $G$, we have
\begin{enumerate}
    \item For every set $S \in \mathcal{S}$, $|S \cap U_j|\leq 1$. 
    \item Any two sets $S_1, S_2 \in \mathcal{S}$ with $S_1 \cap U_j \neq \emptyset$, $S_2 \cap U_j \neq \emptyset$ and $S_1 \cap S_2 \neq \emptyset$ satisfy $S_1 \cap U_j = S_2 \cap U_j = S_1 \cap S_2$.
\end{enumerate}

We now define the set families $\mathcal{S}_1, \mathcal{S}_2, \ldots, \mathcal{S}_L$ as follows: 
\[
\mathcal{S}_j = \{ S \in \mathcal{S} : S \cap U_j \neq \emptyset \} \cup \{ S \subseteq V \setminus U_j : |S| \leq k \}
\]
We claim that $\bigcap_{j \in [L]}\mathcal{S}_j^{\downarrow} = \mathcal{S}^{\downarrow}$. 
First, consider an arbitrary set $S \in \mathcal{S}^{\downarrow}$ and an integer $j \in [L]$.
As $|S|\leq k$, irrespective of $S$ intersects $U_j$ or not, $S \in \mathcal{S}_j^{\downarrow}$.
Thus, $\mathcal{S}^{\downarrow} \subseteq \mathcal{S}_j^\downarrow$ for all $j \in [L]$. Consider a non-empty set $S \notin \mathcal{S}^{\downarrow}$. As $U_1, U_2, \ldots, U_L$ is a partition of $V$, there exists a $j \in [L]$ such that $S \cap U_j \neq \emptyset$. As $S \notin \mathcal{S}^{\downarrow}$, $S \notin \mathcal{S}_j^{\downarrow}$. This implies that 
\[
\bigcap_{j \in [L]}\mathcal{S}_j^{\downarrow} = \mathcal{S}^{\downarrow}
\]
Using~\Cref{prop:dim-intersection}, in order to bound the packing dimension of $\mathcal{S}^{\downarrow}$, it suffices to bound the packing dimension of $\mathcal{S}_j^{\downarrow}$, $j \in [L]$. 

Fix an integer $j \in [L]$ and consider the set family $\mathcal{S}_j^{\downarrow}$. It is defined on the universe $V$ and there exists a non-empty subset $U_j \subseteq V$ such that 
\[
\mathcal{S}_j = \mathcal{S}'_j \cup \{ S \subseteq V \setminus U_j : |S| \leq k \}
\]
 with 
 \[
 \mathcal{S}'_j = \{ S \in \mathcal{S} : S \cap U_j \neq \emptyset \}.
 \]
Here, $\mathcal{S}'_j$ is a simple set system which satisfies the following properties:
\begin{enumerate}
    \item Each set in $\mathcal{S}'_j$ has cardinality at most $k\geq 2$ and each element appears in at most $\Delta$ sets in $\mathcal{S}'_j$.
    \item Every set $S \in \mathcal{S}'_j$ satisfies $|S \cap U_j|=1$. As $\mathcal{S}$ is non-trivial, for every $u \in U_j$, there exists a set $S \in \mathcal{S}'_j$ with $u \in S$. 
    \item For every pair of sets $S_1, S_2 \in \mathcal{S}'_j$ with $S_1 \cap S_2 \neq \phi$, $S_1 \cap U_j = S_2 \cap U_j=S_1 \cap S_2$.
 \end{enumerate}

In other words, the set family $\mathcal{S}_j'$ is a sunflower-bouquet with core $U_j$. 
Using~\Cref{cor:dimension}, we get that $\textsf{pdim}(\mathcal{S}_j^{\downarrow}) \leq (k\Delta)^{O(1)}$ for all $j \in [L]$, which completes the proof. 
\end{proof}

\subsection{Hardness of Vector Bin Packing}
\label{subsec:vbp-hardness}
We show that for large enough constant $d$, Vector Bin Packing is hard to approximate within $\Omega(\log d)$. Our hardness is obtained via the hardness of set cover on simple bounded instances. 

In the set cover problem, the input is a set family $\mathcal{S}$ on a universe $V$ with $|V|=n$. The objective is to pick the minimum number of sets $\{ S_1, S_2, \ldots, S_m\} \subseteq \mathcal{S}$ from the family such that their union is equal to $V$. The greedy algorithm where we repeatedly pick the set that covers the maximum number of new elements achieves a $\ln n$ approximation factor. Fiege~\cite{Feige98} proved a matching hardness of $(1-\epsilon)(\ln n)$. On set systems where each pair of sets intersect in at most one element i.e. simple instances, $\Omega(\log n)$ hardness of set cover is proved by Kumar, Arya, and Ramesh~\cite{KAR00}. We observe that by changing the parameters slightly, their reduction also implies the same hardness on instances where the maximum set size is bounded:

\begin{theorem}(Set Cover on simple bounded instances) 
\label{thm:bounded-set-cover}
There exists an integer $B_0$ such that for every constant $B \geq B_0$, the Set Cover problem on simple set systems in which each set has cardinality at most $B$ is NP-hard to approximate within $\Omega(\log B)$. Furthermore, in the hard instances, each element occurs in at most $O(B)$ sets. 
\end{theorem}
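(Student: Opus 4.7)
The plan is to replay the reduction of Kumar, Arya, and Ramesh~\cite{KAR00} that establishes $\Omega(\log n)$-hardness of simple set cover, but with the internal parameters retuned so that every produced set has cardinality at most $B$ and the inapproximability factor becomes $\Omega(\log B)$ instead of $\Omega(\log n)$. The KAR00 reduction starts from a constant-soundness Label Cover and, at every vertex, plants a copy of a partition-system gadget whose block-cover cost governs the hardness gap and whose block size governs the set size; separating these two parameters is exactly what the theorem asks for.

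First I would take a Label Cover instance from~\Cref{thm:lc-parallel} with a sufficiently small constant soundness $\epsilon > 0$ and the resulting constant label set $\Sigma$ of size $|\Sigma| = L(\epsilon)$, retaining biregularity with constant degree. On top of each vertex $v$ I attach a copy of the KAR00 gadget on a ground set $\mathcal{U}_v$ with $|\mathcal{U}_v| = m$, equipped with, for each label $\ell \in \Sigma$, a partition of $\mathcal{U}_v$ into $h = \Theta(\log m)$ blocks. The gadget is chosen so that (i) any cover of $\mathcal{U}_v$ using blocks drawn from more than one label requires $\Omega(h)$ blocks, and (ii) any two blocks in the whole instance intersect in at most one element. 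I choose $m = \Theta(B)$ so each block has size $m/h = \Theta(B/\log B) \leq B$, and I let $B_0$ be the threshold below which the gadget construction breaks down. The universe of the final instance is the disjoint union of the $\mathcal{U}_v$, and the sets are the blocks, one for each triple (vertex, label, partition index).

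Completeness is standard: a satisfying Label Cover assignment $\sigma$ yields, for every $v$, $h$ consistent blocks of label $\sigma(v)$ covering $\mathcal{U}_v$, producing a cover of total size $h\cdot|V|$. Soundness is the usual Feige-style averaging: if no Label Cover assignment satisfies even an $\epsilon$ fraction of constraints, then for a constant fraction of vertices any cheap cover must mix blocks of at least two labels, which by (i) forces $\Omega(h)=\Omega(\log B)$ extra blocks per such vertex and hence an overall gap of $\Omega(\log B)$. Simplicity is inherited verbatim from (ii). For the per-element degree, each element of $\mathcal{U}_v$ lies in exactly $h \cdot |\Sigma| = O(\log B)\cdot O(1)$ blocks of its own gadget, plus $O(1)$ contributions propagated through the biregular constant-degree Label Cover, giving $O(B)$ in total as required.

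The step I expect to be the main obstacle is enforcing property (ii) globally once the same gadget structure is reused across all vertices of the Label Cover. In~\cite{KAR00} this is accomplished by encoding each gadget copy with a codeword in an auxiliary combinatorial design, and the design size must now be polynomial in $B$ (and in $|\Sigma|$) rather than in $n$ to keep block sizes at most $B$. One then has to recount and verify that Feige's averaging over the gadget copies still delivers the correct $\Omega(\log B)$ factor under the smaller design; this is the genuine re-derivation step in the otherwise routine parameter retuning, but it goes through once the Label Cover alphabet is fixed as a constant depending only on $\epsilon$ and independent of $n$.
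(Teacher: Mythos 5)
Your plan---replay Kumar--Arya--Ramesh (KAR00) with the partition gadget planted on a constant-size universe so that the hardness scales with $\log B$ rather than $\log n$---is the same route the paper takes in Appendix~\ref{sec:simple}, but your description of the gadget has a parameterization error that eliminates the hardness gap. You take, for each label, a partition of the ground set $\mathcal{U}_v$ of size $m=\Theta(B)$ into $h=\Theta(\log m)$ blocks, and your property (i) asserts that covers mixing labels need $\Omega(h)$ blocks. But the YES case already spends $h=\Theta(\log B)$ blocks per vertex (one full partition), so both completeness and soundness come out to $\Theta(|V|\log B)$: the ratio is $O(1)$, not $\Omega(\log B)$. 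You appear to have taken the Lund--Yannakakis count for the \emph{number of partitions}, which is $\Theta(\log N)$, and installed it as the number of \emph{parts per partition}; that is not the KAR00 structure. In KAR00 and in the paper's retuned version, each partition of the size-$B$ universe has $m=B^{1-\epsilon}$ parts (polynomially many), and the covering property states that any family of fewer than $\beta m\log B$ parts whose union is the universe must contain at least $\tfrac{3m}{4}$ parts of a single partition. That makes the completeness cost $\Theta(|V|m)$ and the soundness cost $\Omega(|V|m\log B)$, yielding the multiplicative $\Omega(\log B)$ gap; your $\Omega(h)$ lower bound gives no gap at all.

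Two smaller mismatches. With the correct gadget, the per-element frequency in the final system works out to $md = B^{1-\epsilon}\cdot(\log B)^{O(1)} = O(B)$, not the $h|\Sigma|$ you computed (and in any case an element lies in exactly one part of each partition, not in $h$ of them). Simplicity is not enforced by tuning an auxiliary combinatorial design to polynomial size, as you suggest at the end; it follows directly from the structure of KAR00's partition system (their Lemma~1). The paper's only structural change relative to KAR00 is to shrink the partition-system universe from the number of Label Cover vertices down to the constant $B$, which requires starting the reduction from a constant-soundness Label Cover with $\gamma=\Theta(1/\log^2 B)$, whose alphabet size and degree are then $(\log B)^{O(1)}$.
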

The details of the parameter modification appear in~\Cref{sec:simple}.
\vspace{0.02\textwidth}

We combine this set cover hardness with the bound on the packing dimension of simple set systems to prove the hardness of Vector Bin Packing. 
\begin{proof}[Proof of Theorem~\ref{thm:main-vbp}]
We prove the result by giving an approximation preserving reduction from the NP-hard problem of set cover on simple bounded set systems. Let $\mathcal{S}$ be the set system from~\Cref{thm:bounded-set-cover} defined on a universe $V$. Note that each set in the family has cardinality at most $k=B$ and each element in the universe appears in at most $\Delta = O(B)$ sets. 
We now output a set $\mathcal{V}$ of $|V|$ vectors in $[0,1]^d$ such that 
\begin{enumerate}
    \item (Completeness.) If there is a set cover of size $m$ in $\mathcal{S}$, there is a packing of $\mathcal{V}$ using $m$ bins. 
    \item (Soundness.) If there is no set cover of size $m'$ in $\mathcal{S}$, there is no packing of $\mathcal{V}$ using $m'$ bins.  
\end{enumerate}

We use~\Cref{thm:bounded-dimension} to compute an embedding $f: V \rightarrow [0,1]^d$ in polynomial time such that
\[
\norm{f(S)}_{\infty} \leq 1
\]
if and only if $S \in \mathcal{S}^{\downarrow}$, with $d =(k\Delta)^{O(1)}=B^{O(1)}$. 
Our output Vector Bin Packing instance is the set of vectors $f(v), v \in V$. 
\[
\mathcal{V} = \{ f(v) : v \in V\}
\]

\paragraph{Completeness.} Suppose that there exist sets $S_1, S_2, \ldots, S_m\in \mathcal{S}$ whose union is $V$. Then, we use $m$ bins with the vectors $\{f(v_j) :  j \in S_i\}$ in the $i$th bin. A vector might appear in multiple bins, but we can arbitrarily pick one bin for each vector while still maintaining the property that in each bin, the $\ell_{\infty}$ norm of the sum of the vectors is at most $1$. 

\paragraph{Soundness.} Suppose that the minimum set cover in $\mathcal{S}$ has cardinality at least $m'+1$. Then, we claim that the set of vectors $\mathcal{V}$ needs $m'+1$ bins to be packed. Suppose for contradiction that there is a vector packing with $m'$ bins. In other words, there exists a partition of $V$ into $B_1, B_2, \ldots, B_{m'}$ such that $\norm{f(B_i)}_{\infty} \leq 1$ for all $i \in [m']$. As $\norm{f(B_i)}_{\infty}\leq 1$, $B_i \in \mathcal{S}^{\downarrow}$ for all $i \in [m']$. That is, for every $i \in [m']$, there exists a set $S_i \in \mathcal{S}$ such that $B_i \subseteq S_i$. This implies that $\{ S_1, S_2, \ldots, S_{m'} \}$ is a set cover of $V$, a contradiction. 

As the original bounded simple set cover problem is hard to approximate within $\Omega(\log B)=\Omega(\log d)$, the resulting Vector Bin Packing is hard to approximate within $\Omega(\log d)$. Furthermore, in the hard instances, the optimal value i.e. the minimum number of bins needed to pack the vectors can be made arbitrarily large, and thus, the hardness applies to the asymptotic approximation ratio.
\end{proof}
\section{Vector Scheduling}
\label{sec:vs}
\subsection{Monochromatic Clique}
In the Monochromatic Clique problem, given a graph $G=([n],E)$ and a parameter $k(n)$, the objective is to assign $k$ colors to the vertices of $G$ so as to minimize the largest monochromatic clique.  
More formally, we study the following decision version of the problem. 
\begin{definition}(\emph{\mc}$(k,B)$)
	In the \emph{\mc}$(k,B)$  problem, given a graph $G=(V,E)$ with $|V|=n$ and parameters $k(n),B(n)$, the goal is to distinguish between the following: 
	\begin{enumerate}
		\item (YES case) The chromatic number of $G$ is at most $k$.
		\item (NO case) In any assignment of $k$ colors to the vertices of $G$, there is a clique of size $B$, all of whose vertices are assigned the same color.
	\end{enumerate}
\end{definition}
It generalizes the standard $k$-Coloring problem, which corresponds to the case when $B=2$. Note that the problem gets easier as $B$ increases. 
Indeed, when $B > \sqrt{n}$, we can solve the problem in polynomial time using the canonical SDP relaxation. 
We present this algorithm and an almost matching integrality gap in~\Cref{sec:appendix-vs}. 

On the hardness front, we now prove that \mc$(k,B)$ is hard when $B=(\log n)^{C}$, for any constant $C$. We achieve this in two steps: First, we observe that the existing chromatic number hardness results already imply the hardness of monochromatic clique when $B=(\log n)^{\gamma}$ for some constant $\gamma>0$. Next, we amplify this hardness by using lexicographic graph product.  

\subsubsection{Basic Hardness}
We start with a couple of  basic Ramsey theoretic lemmas from~\cite{CK04}. 
\begin{lemma}
	\label{lem:ramsey-independent}
	For a graph $G=(V,E)$ with $|V|=n$, if $\omega(G)\leq B$, then $\alpha(G)\geq n^{\frac{1}{B}}$.
\end{lemma}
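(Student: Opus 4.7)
The plan is to prove the bound by induction on $B$. The base case $B=1$ is immediate: if $\omega(G)\le 1$ then $G$ is edgeless, so $\alpha(G)=n \ge n^{1/1}$.

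For the inductive step, assume the claim for all graphs with clique number at most $B-1$, and let $G$ have $\omega(G)\le B$ on $n$ vertices. I would proceed by a case split on whether $G$ has a vertex of high degree. In the \emph{high-degree case}, suppose some vertex $v$ satisfies $\deg(v)\ge n^{(B-1)/B}$. Then the induced subgraph $G[N(v)]$ has clique number at most $B-1$, since any clique in $N(v)$ together with $v$ would form a strictly larger clique in $G$. Applying the inductive hypothesis to $G[N(v)]$ produces an independent set of size at least $\deg(v)^{1/(B-1)}\ge \left(n^{(B-1)/B}\right)^{1/(B-1)} = n^{1/B}$, which is also independent in $G$. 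In the \emph{low-degree case}, every vertex has degree less than $n^{(B-1)/B}$, so $\Delta(G)<n^{(B-1)/B}$; the greedy independent-set procedure (repeatedly pick any vertex, add it to the independent set, and remove it with its neighbors) then yields $\alpha(G)\ge n/(\Delta(G)+1)\ge n/n^{(B-1)/B}=n^{1/B}$.

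The threshold $n^{(B-1)/B}$ is chosen precisely so that the two regimes meet at $n^{1/B}$. The induction is well-founded because the high-degree case recurses only on the strictly smaller clique bound $B-1$, while the low-degree case invokes no recursion at all. The main technical point to watch is integer rounding in the low-degree case, together with the boundary between the two regimes; these are routine but must be handled so that the inequality is not lost by constant factors. An alternative route, which gives essentially the same bound up to a small multiplicative constant, is to fix a maximum independent set $I$, observe that every $v\notin I$ has a neighbor in $I$, and then apply the inductive hypothesis to each neighborhood $G[N(u)]$ (having clique number at most $B-1$). I do not anticipate any serious obstacle beyond this bookkeeping.
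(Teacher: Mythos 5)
The paper cites this lemma from~\cite{CK04} without reproducing a proof, so there is no in-paper argument to compare against; your high-degree/low-degree dichotomy with induction on $B$ is the standard route and is almost certainly the one intended.

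The substance is right, but the rounding you dismiss as routine is an actual gap: from $\Delta(G)<n^{(B-1)/B}$ you only get $\Delta(G)+1\leq\lceil n^{(B-1)/B}\rceil$, which can strictly exceed $n^{(B-1)/B}$, so the claimed chain $n/(\Delta(G)+1)\geq n/n^{(B-1)/B}=n^{1/B}$ does not follow, and no choice of integer threshold closes the mismatch between the two regimes. Indeed the lemma is false exactly as stated: $C_5$ is triangle-free ($\omega=2$, $n=5$) yet $\alpha(C_5)=2<\sqrt{5}$. What your argument does establish, once the same slack is carried through the induction, is the inequality $\alpha(G)>n^{1/B}-1$, equivalently $n<(\alpha(G)+1)^B$, which also drops out directly from the Erd\H{o}s--Szekeres Ramsey bound $R(B+1,A+1)\leq\binom{A+B}{B}\leq(A+1)^B$. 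This additive slack is harmless everywhere the paper invokes the lemma (\Cref{lem:ramsey} and \Cref{thm:sdp-gap} only need the bound up to constants), so your proof is fine in spirit; but you should state and prove the corrected inequality rather than wave at the boundary, and note that your ``alternative route'' via a maximum independent set $I$ loses a further multiplicative constant on top of the same additive one.
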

\begin{lemma}
	\label{lem:ramsey}
	For a graph $G=(V,E)$ with $|V|=n$, if $\omega(G)\leq B$, then $\chi(G)\leq O(n^{1-\frac 1B} \log n)$.
\end{lemma}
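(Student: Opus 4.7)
My plan is to construct a proper coloring of $G$ by iteratively peeling off large independent sets, applying Lemma~\ref{lem:ramsey-independent} repeatedly. Since every induced subgraph of $G$ still has clique number at most $B$, Lemma~\ref{lem:ramsey-independent} tells us that any induced subgraph on $m$ vertices contains an independent set of size at least $m^{1/B}$. The procedure is the natural greedy one: set $G_0 := G$ with $n_0 = n$ vertices, and at step $i$ extract an independent set $I_i \subseteq V(G_i)$ of size at least $n_i^{1/B}$, assign all of $I_i$ a fresh color, and continue with $G_{i+1} := G_i \setminus I_i$ and $n_{i+1} := n_i - |I_i|$ until no vertices remain.

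To bound the number of colors used, I would group the iterations into geometric phases. Phase $j$ (for $j = 0, 1, \ldots, \lceil \log_2 n\rceil$) consists of those steps during which the remaining vertex count $n_i$ lies in the interval $(n/2^{j+1},\, n/2^j]$. Throughout phase $j$ each extracted independent set has size at least $(n/2^{j+1})^{1/B}$, while the total number of vertices removed in that phase is at most $n/2^{j+1}$. Hence the number of steps in phase $j$ is at most $(n/2^{j+1})^{1-1/B}$. Summing over $j$ yields a geometric series with common ratio $2^{-(1-1/B)} < 1$ (for $B \ge 2$), so the total number of colors is $O(n^{1-1/B})$, which is certainly $O(n^{1-1/B}\log n)$.

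I do not expect a serious obstacle here; this is a standard Ramsey-style extraction computation and the only delicate point is the summation. It is worth noting that a naive single-threshold version of the argument, peeling off independent sets of a fixed size $s$ until only $s^B$ vertices remain, gives a total of $n/s + s^B$ colors, which optimizes to $O(n^{1-1/(B+1)})$ and is not in general dominated by $n^{1-1/B}\log n$; so the phased/iterative version above is genuinely needed. On the other hand, the extra $\log n$ slack in the claimed bound is comfortable, which means even a lossy execution of the phase summation would suffice.
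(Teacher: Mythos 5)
Your argument is correct, and in fact gives the stronger bound $\chi(G) = O(n^{1-1/B})$ without the $\log n$ factor, as you yourself observe. The paper imports this lemma from \cite{CK04} and does not supply a self-contained proof, so there is nothing to contrast against; the phased Ramsey-peeling you describe is the standard argument for it. One minor accounting slip: the vertices removed during phase $j$ can total up to $n/2^j$, not $n/2^{j+1}$ as you state, because the final step of a phase may overshoot well below $n/2^{j+1}$. The clean fix is to bound the removals in all steps of phase $j$ \emph{except the last} by $n/2^{j+1}$ (the count just before the last step is still above $n/2^{j+1}$, and the count entering the phase is at most $n/2^j$), giving at most $(n/2^{j+1})^{1-1/B} + 1$ steps in phase $j$; the extra $+1$ per phase contributes only $O(\log n)$ colors overall, which is dominated by $n^{1-1/B}$ for constant $B \geq 2$. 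Either way the claimed bound follows.
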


We can use the above lemmas to prove that if the chromatic number of a graph is large enough, then in any assignment of $k$ colors to the vertices of the graph, there is a large monochromatic clique. 
\begin{lemma}
	\label{lem:color-clique}
	For every constant $\epsilon>0$, if a graph $G=(V,E)$ with $|V|=n$ satisfies $\chi(G)\geq k\frac{n}{2^{\left( \log n\right)^{\alpha}}}$ for some integer $k$ and $0<\alpha <1$, then in any assignment of $k$ colors to $V$, there is a monochromatic clique of size $B=\Omega\left((\log n)^{1-\alpha-\epsilon}\right)$.
\end{lemma}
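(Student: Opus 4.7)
The plan is to argue by contrapositive: assume some $k$-coloring of $G$ has every color class inducing a subgraph with clique number less than $B$, and deduce an upper bound on $\chi(G)$ that contradicts the hypothesis. Fix a $k$-coloring with color classes $V_1, \ldots, V_k$, and suppose $\omega(G[V_i]) < B$ for every $i \in [k]$. Since $\chi(G) \leq \sum_{i=1}^{k} \chi(G[V_i])$, it suffices to upper bound each $\chi(G[V_i])$.

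For each $i$, I would invoke the Ramsey-type bound from \Cref{lem:ramsey} on $G[V_i]$, which has at most $n$ vertices and clique number at most $B$, to get
\[
\chi(G[V_i]) \leq O\!\left( |V_i|^{1-\frac{1}{B}} \log |V_i| \right) \leq O\!\left( n^{1-\frac{1}{B}} \log n \right).
\]
Summing over $i$ yields $\chi(G) \leq k \cdot O\!\left( n^{1-\frac{1}{B}} \log n \right)$.

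Now set $B = c\,(\log n)^{1-\alpha-\epsilon}$ for a sufficiently small constant $c > 0$ (depending on the hidden constants in \Cref{lem:ramsey}). The key calculation is to verify that, for $n$ large enough,
\[
O\!\left( n^{1-\frac{1}{B}} \log n \right) < \frac{n}{2^{(\log n)^{\alpha}}},
\]
which after taking logarithms reduces to
\[
\frac{\log n}{B} \;>\; (\log n)^{\alpha} + \log \log n + O(1).
\]
Plugging in $B = c(\log n)^{1-\alpha-\epsilon}$, the left-hand side is $\tfrac{1}{c}(\log n)^{\alpha+\epsilon}$, which for all sufficiently large $n$ dominates the right-hand side because $\epsilon > 0$. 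This gives $\chi(G) < k \cdot n/2^{(\log n)^{\alpha}}$, contradicting the assumption $\chi(G) \geq k \cdot n/2^{(\log n)^{\alpha}}$. Therefore some color class must contain a clique of size at least $B = \Omega((\log n)^{1-\alpha-\epsilon})$.

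The main conceptual step is the decomposition $\chi(G) \leq \sum_i \chi(G[V_i])$ combined with the Ramsey-type bound; the remainder is a routine calculation of the exponent, whose only subtlety is the slack of $\epsilon$ in the exponent, which is precisely what absorbs the logarithmic factor $\log n$ and the lower-order term $\log \log n$ that appear alongside $(\log n)^{\alpha}$.
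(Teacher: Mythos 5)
Your proof is correct and follows essentially the same route as the paper: apply \Cref{lem:ramsey} to each color class, use $\chi(G) \leq \sum_i \chi(G[V_i])$, and derive a contradiction with the assumed lower bound on $\chi(G)$; the only difference is that you spell out the exponent calculation a bit more explicitly than the paper does.
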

\begin{proof}
	Suppose for contradiction that there is an assignment of $k$ colors $V$ without a monochromatic clique of size $B$. Using~\Cref{lem:ramsey}, the subgraphs corresponding to each of the $k$ color classes has chromatic number at most 
	\[
	O(n^{1-\frac 1B} \log n) = \frac{n}{2^{\Omega\left((\log n)^{\alpha+\epsilon}\right)}}\log n<\frac{n}{2^{(\log n)^{\alpha}}}
	\]
	colors. Thus, the whole graph has chromatic number at most $k\frac{n}{2^{(\log n)^{\alpha}}}$ colors, a contradiction.
\end{proof}

Khot~\cite{Khot01} proved that assuming $\textsf{NP} \nsubseteq \textsf{ZPTIME}\left( n^{(\log n)^{O(1)}}\right)$, the chromatic number of graphs is hard to approximate within a factor of $\frac{n}{2^{\left(\log n\right)^{1-\gamma}}}$ for an absolute constant $\gamma>0$. More formally, he proved the following:
\begin{theorem}(~\cite{Khot01})
	\label{thm:chromatic-hardness}
	There exists a constant $\gamma > 0$, a function $k=k(n)$, and a randomized reduction that takes as input a $3$-SAT instance $I$ on $n$ variables and outputs a graph $G=(V,E)$ with $|V|=N=2^{\log n ^{O(1)}}$ such that  
	\begin{enumerate}
		\item (Completeness) If $I$ is satisfiable, $\chi(G)\leq k$.
		\item (Soundness) If $I$ is not satisfiable, with probability at least $\frac 12$, $\chi(G)> k\frac{N}{2^{\left(\log N\right)^{1-\gamma}}}$.
	\end{enumerate}
	Futhermore, the reduction runs in time $\emph{\textsf{poly}}(N)=2^{ (\log n)^{O(1)}}$.
\end{theorem}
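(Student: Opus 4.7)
The plan is to follow the two-step paradigm that is standard for chromatic-number inapproximability: first produce a Label Cover instance with extremely small soundness, then compose it with a carefully designed graph gadget whose completeness/soundness gap translates into a chromatic-number gap. Since the target ratio $N/2^{(\log N)^{1-\gamma}}$ is sub-polynomial but exceeds any $N^{1-\epsilon}$, both the Label Cover soundness and the gadget parameters must depend (mildly) on $n$; this is why the reduction runs in quasi-polynomial time and the conclusion is conditional on $\textsf{NP} \not\subseteq \textsf{ZPTIME}(n^{(\log n)^{O(1)}})$.

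For the first step I would invoke \Cref{thm:lc-parallel} with $\epsilon = 2^{-(\log n)^{\beta}}$ for a small constant $\beta > 0$, yielding a Label Cover instance of size $n^{(\log n)^{O(1)}}$ with alphabet size $|\Sigma_L| = 2^{(\log n)^{O(1)}}$ and soundness $\epsilon$. Parallel repetition is applied with super-constantly many rounds, so everything stays polynomial in the resulting Label Cover size, hence quasi-polynomial in $n$. The $k$ in the theorem statement will be tied to $|\Sigma_L|$, so that a satisfying labeling in the completeness case naturally gives rise to a $k$-coloring of the output graph.

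For the second step I would use an FGLSS-style gadget adapted for chromatic number. For each Label Cover vertex $v$ introduce a cloud of vertices indexed by a suitable subset of $\Sigma_L$ (for instance by large independent sets in a constraint graph, or by a long-code-like encoding), and add edges to enforce (a) the natural assignment coming from a satisfying labeling yields a proper $k$-coloring, and (b) any proper coloring that uses substantially fewer than $k \cdot N / 2^{(\log N)^{1-\gamma}}$ colors can be decoded, by sampling a uniformly random color class and reading off the induced labeling, into a labeling of the Label Cover instance satisfying more than an $\epsilon$-fraction of its constraints. This decoding procedure is the source of the randomness in the reduction and of the $\tfrac{1}{2}$ success probability.

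The main obstacle is balancing two opposing quantities: one wants the soundness $\epsilon$ as small as possible (for a large gap), but shrinking $\epsilon$ forces $|\Sigma_L|$ to grow, which inflates $N$ and weakens the ratio when re-expressed in $N$ rather than in $n$. The gadget must be tuned so that the random-color-restriction decoding loses only a multiplicative $2^{(\log N)^{1-\gamma}}$ factor; this in turn pins down the admissible value of $\gamma$ via the exponents appearing in Raz's parallel repetition and in the concentration bound used for the decoding. Verifying this numerology, and making sure that the completeness $k$ and the soundness bound match exactly as in the theorem statement, is the delicate part; the hardness of graph coloring then feeds directly into \Cref{lem:color-clique} to produce the Monochromatic Clique gap needed in the rest of the section.
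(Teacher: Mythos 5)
This statement is cited by the paper directly from Khot's 2001 work and is not reproved in the paper, so the relevant benchmark is Khot's actual construction, and your sketch diverges from it in an important way. Khot's argument has three ingredients: (i) a PCP verifier with extremely low \emph{amortized free bit complexity} (not just standard parallel-repetition Label Cover as in \Cref{thm:lc-parallel} --- the free-bit parameter is precisely what controls the exponent $1-\gamma$); (ii) the FGLSS graph built from that verifier, which yields a \emph{fractional} chromatic number gap; and (iii) the Feige--Kilian \emph{randomized graph product}, which converts the fractional-chromatic / clique-cover gap into an ordinary chromatic number gap. The randomness in the reduction lives entirely in step (iii): the graph itself is sampled at random, and the soundness guarantee holds with probability at least $\tfrac 12$ over that sampling.

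Your proposal misplaces the source of randomness. You describe the reduction as deterministic graph construction followed by a \emph{randomized decoding} (``sampling a uniformly random color class and reading off the induced labeling''), with the $\tfrac 12$ coming from the decoding. But a randomized decoder gives a randomized soundness \emph{analysis}, not a ``randomized reduction'' in the sense the theorem statement requires --- the theorem asserts that the output graph $G$ itself has $\chi(G)$ large with probability $\tfrac 12$, which is a statement about a random $G$, not about a random extraction of a labeling from a fixed $G$. Without the randomized graph product you have no mechanism for the output graph to be probabilistic. You also do not mention why ordinary Label Cover with soundness $2^{-(\log n)^\beta}$ would yield an FGLSS graph whose independence structure supports the $N/2^{(\log N)^{1-\gamma}}$ ratio; Khot's free-bit-efficient PCP is exactly what makes the numerology close, and replacing it by \Cref{thm:lc-parallel} alone would leave a gap there as well.
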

We observe that Khot's chromatic number hardness immediately gives $(\log n)^{\Omega(1)}$ hardness of Monochromatic Clique.
\begin{lemma}
	\label{lem:monochromatic-clique-hard-basic}
	There exists a constant $\gamma >0$, a function $k=k(n)$ such that the following holds. Assuming $\textsf{NP} \nsubseteq \textsf{ZPTIME}\left(n^{ (\log n)^{O(1)}}\right)$, given a graph $G=([n],E)$, there is no $n^{(\log n)^{O(1)}}$ time algorithm for \emph{\mc}$(k,B)$ when $B=\Omega\left((\log n)^{\gamma}\right)$.
\end{lemma}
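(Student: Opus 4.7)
The plan is to directly chain Khot's chromatic number hardness (Theorem on $\chi$-hardness above) with the Ramsey-theoretic Lemma relating chromatic number to guaranteed monochromatic cliques. Fix the constant $\gamma > 0$ and function $k=k(n)$ supplied by Khot's reduction, and fix a small $\epsilon > 0$ (say $\epsilon = \gamma/2$) to be used when invoking the monochromatic-clique lemma. Set $\gamma' = \gamma - \epsilon > 0$; this will be the final exponent in the monochromatic clique size.

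First, I would take an arbitrary $3$-SAT instance $I$ on $n$ variables and apply Khot's randomized reduction to produce a graph $G=(V,E)$ with $|V|=N = 2^{(\log n)^{O(1)}}$ in time $\mathrm{poly}(N) = n^{(\log n)^{O(1)}}$. In the completeness case (when $I$ is satisfiable), $\chi(G) \le k$, so $G$ trivially admits a $k$-coloring with no monochromatic edge, hence no monochromatic $B$-clique for any $B \ge 2$: this is the YES instance of $\smc(k,B)$. In the soundness case (when $I$ is unsatisfiable), with probability at least $1/2$ we have $\chi(G) > k \cdot N / 2^{(\log N)^{1-\gamma}}$. Apply the earlier Lemma with $\alpha := 1-\gamma$ and this $\epsilon$: the hypothesis of the lemma is exactly met, so in any assignment of $k$ colors to $V$ there is a monochromatic clique of size
\[
B \;=\; \Omega\!\left( (\log N)^{1-\alpha-\epsilon}\right) \;=\; \Omega\!\left((\log N)^{\gamma - \epsilon}\right) \;=\; \Omega\!\left((\log N)^{\gamma'}\right).
\]
This puts us in the NO instance of $\smc(k,B)$ for this value of $B$, again with probability at least $1/2$.

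Composing, any algorithm for $\smc(k,B)$ running in time $N^{(\log N)^{O(1)}}$ would decide $3$-SAT in randomized time $n^{(\log n)^{O(1)}} \cdot N^{(\log N)^{O(1)}} = n^{(\log n)^{O(1)}}$ with one-sided error $\le 1/2$, which places $\mathsf{NP}$ inside $\mathsf{ZPTIME}(n^{(\log n)^{O(1)}})$ after standard error reduction, contradicting our assumption. Relabeling $N$ as the input size $n$ of the $\smc$ instance (and noting $\log N = \Theta((\log n)^{O(1)})$-related to $\log n$ only up to a polynomial of $\log$, which does not affect the $\Omega((\log n)^{\gamma'})$ form of $B$ when we rename $\gamma'$ appropriately) yields the lemma.

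There is no serious obstacle here: the only thing to watch is bookkeeping between the input size to Khot's reduction ($n$) and the output graph size ($N$), and ensuring that when we restate the bound as a function of the $\smc$ input size we still get $B = \Omega((\log n)^{\gamma})$ for some positive constant $\gamma$. Since $\log N$ is polynomially bounded in $\log n$, a bound of the form $(\log N)^{\gamma'}$ translates to $(\log n)^{\gamma''}$ for some possibly smaller positive constant $\gamma'' > 0$, which we take to be the final constant in the lemma statement. The probabilistic nature of Khot's reduction is absorbed by the $\mathsf{ZPTIME}$ assumption via standard success amplification.
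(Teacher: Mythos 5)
Your argument is correct and matches the paper's proof: both combine Khot's chromatic hardness (Theorem~\ref{thm:chromatic-hardness}) with Lemma~\ref{lem:color-clique} applied with $\alpha = 1-\gamma$, concluding that in the soundness case any $k$-coloring of the output graph has a monochromatic clique of size $\Omega\left((\log N)^{\gamma-\epsilon}\right)$. Two small nits: you write \smc{} where you mean \mc{} (the argument concerns $k$-colorings, not $k^C$-colorings), and the final paragraph over-complicates the relabeling --- once $N$ is renamed to $n$ (the graph's vertex count, which is what the lemma statement's $n$ refers to), the bound $\Omega((\log n)^{\gamma-\epsilon})$ is already in the desired form with no further exponent loss.
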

\begin{proof}
	Using Khot's reduction, we get that there exists an absolute constant $\gamma>0$ such that assuming $\textsf{NP} \nsubseteq \textsf{ZPTIME}\left(n^{ (\log n)^{O(1)}}\right)$, given a graph $G=([n],E)$ and a parameter $k(n)$, there is no $n^{(\log n)^{O(1)}}$ time algorithm to distinguish between the following: 
	\begin{enumerate}
		\item (Completeness) $\chi(G)\leq k$.
		\item (Soundness) $\chi(G) > k \frac{n}{2^{\left(\log n\right)^{1-\gamma}}}$.
	\end{enumerate}
	Using~\Cref{lem:color-clique}, the Soundness condition implies that in any assignment of $k$ colors to $G$, there is a monochromatic clique of size $\Omega\left( (\log n)^{\gamma -\epsilon}\right)$, for any constant $\epsilon>0$. Thus, given a graph $G$ and a parameter $k$, assuming $\textsf{NP} \nsubseteq \textsf{ZPTIME}\left(n^{ (\log n)^{O(1)}}\right)$, there is no $n^{(\log n)^{O(1)}}$ time algorithm to distinguish between the following: 
	\begin{enumerate}
		\item (Completeness) $\chi(G)\leq k$.
		\item (Soundness) In any assignment of $k$ colors to the vertices of $G$, there is a monochromatic clique of size $\Omega( (\log n)^{\gamma'})$.
	\end{enumerate}
	for any constant $\gamma' < \gamma$.
\end{proof}

\subsubsection{Amplification using Lexicographic Product}
We cannot directly amplify the hardness of the Monochromatic-Clique problem by taking graph products as we cannot preserve the chromatic number and also amplify the largest clique in an assignment of $k$ colors at the same time.
We get around this issue by defining a harder variant of Monochromatic Clique called Strong Monochromatic Clique and then amplifying it. 

\begin{definition}(\emph{\smc}$(k,B,C)$)
	In the \emph{\smc}$(k,B,C)$, given a graph $G$ and parameters $k(n),B(n),C$, the goal is to distinguish between the following two cases:
	\begin{enumerate}
		\item (YES case) The chromatic number of $G$ is at most $k$.
		\item (NO case) In any assignment of $k^C$ colors to the vertices of $G$, there is a monochromatic clique of size $B$.
	\end{enumerate}
\end{definition}

We now observe that the chromatic number hardness of Khot~\cite{Khot01} implies the same hardness as~\Cref{lem:monochromatic-clique-hard-basic} for Strong Monochromatic Clique as well. 
\begin{lemma}
	\label{lem:smc-hardness}
	There exists a constant $\gamma >0$ and a function $k=k(n)$ such that for every constant $C\geq 1$, the following holds. Assuming $\emph{\textsf{NP}} \nsubseteq \emph{\textsf{ZPTIME}}\left(n^{ (\log n)^{O(1)}}\right)$, there is no $n^{(\log n)^{O(1)}}$ time algorithm for \emph{\smc}$(k,B,C)$ when $B=\Omega( (\log n)^\gamma)$.
\end{lemma}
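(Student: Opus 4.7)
The plan is to mimic the proof of Lemma~\ref{lem:monochromatic-clique-hard-basic} almost verbatim, but applying Lemma~\ref{lem:color-clique} with $k^C$ colors in place of $k$. The point is that the chromatic gap $\frac{N}{2^{(\log N)^{1-\gamma}}}$ produced by Khot's theorem is super-polynomial in $N$, so it can absorb a multiplicative factor of $k^{C-1}$ as long as $k$ is at most quasi-polynomial in $N$; after this absorption the remaining gap is still of the form $\frac{N}{2^{(\log N)^{\alpha}}}$ for some $\alpha < 1$, which by Lemma~\ref{lem:color-clique} forces a monochromatic clique of size $(\log N)^{\Omega(1)}$ in every $k^C$-coloring.

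In more detail, I would apply Theorem~\ref{thm:chromatic-hardness} to a $3$-SAT instance on $n$ variables, obtaining in time $2^{(\log n)^{O(1)}}$ a graph $G$ on $N = 2^{(\log n)^{O(1)}}$ vertices with parameter $k = k(n)$, such that the YES case gives $\chi(G) \le k$ and the NO case gives
\[
\chi(G) > k \cdot \frac{N}{2^{(\log N)^{1-\gamma_0}}}
\]
for an absolute constant $\gamma_0 > 0$. In Khot's construction $k(n) \le n^{O(1)}$, and since $\log n = (\log N)^{\Omega(1)}$ we have $\log k \le (\log N)^{1-\gamma_1}$ for some constant $\gamma_1 > 0$. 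Setting $\gamma' = \min(\gamma_0, \gamma_1)/2$, for all sufficiently large $N$ we then have
\[
k^{C-1} \cdot 2^{(\log N)^{1-\gamma_0}} \;\le\; 2^{(C-1)(\log N)^{1-\gamma_1} + (\log N)^{1-\gamma_0}} \;\le\; 2^{(\log N)^{1-\gamma'}},
\]
where the constant $C-1$ is absorbed into the exponent since $1-\gamma_1 < 1-\gamma'$. Consequently, in the NO case the chromatic number satisfies
\[
\chi(G) \;>\; k^{C} \cdot \frac{N}{2^{(\log N)^{1-\gamma'}}}.
\]

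Now I apply Lemma~\ref{lem:color-clique} with the integer $k^C$ in place of $k$ and with $\alpha = 1-\gamma'$: in any assignment of $k^C$ colors to $V(G)$, there must be a monochromatic clique of size
\[
B = \Omega\!\left((\log N)^{\gamma' - \epsilon}\right)
\]
for any constant $\epsilon > 0$. Setting $\gamma = \gamma'/2$ (say) gives a monochromatic clique of size $\Omega((\log N)^{\gamma})$, which is exactly the NO case of \smc$(k, B, C)$. Since Khot's reduction is randomized and runs in time $2^{(\log n)^{O(1)}} = N^{(\log N)^{O(1)}}$, an $N^{(\log N)^{O(1)}}$-time algorithm for \smc$(k, B, C)$ would yield an algorithm for $3$-SAT in $\textsf{ZPTIME}(n^{(\log n)^{O(1)}})$, contradicting the hypothesis.

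The only nontrivial point I expect to have to verify is the bound $\log k(n) = O((\log N)^{1-\Omega(1)})$ on the color parameter arising in Khot's reduction; this is what makes the absorption of the $k^{C-1}$ factor possible and is ultimately what limits how large one can take the exponent $C$ relative to $\gamma$. As long as $k$ is at most quasi-polynomial in $N$, the constant $C \ge 1$ enters only as a constant prefactor inside the exponent $(\log N)^{1-\gamma_1}$ and can be absorbed, so the same constant $\gamma$ works uniformly in $C$.
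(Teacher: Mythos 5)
Your proof is correct and takes essentially the same route as the paper: both invoke Khot's chromatic-number hardness (Theorem~\ref{thm:chromatic-hardness}), observe that $k$ is small enough relative to $2^{(\log N)^{1-\gamma}}$ that the factor $k^{C-1}$ can be absorbed into the denominator of the chromatic gap, rewrite the NO-case chromatic number as $k^C\cdot N/2^{(\log N)^{\alpha}}$ for some $\alpha<1$, and then apply Lemma~\ref{lem:color-clique} to extract a monochromatic clique of size $(\log N)^{\Omega(1)}$ in every $k^C$-coloring. The paper phrases the absorption simply as ``$k=o(2^{(\log N)^{1-\gamma}})$, so replace soundness by $\chi(G)\ge k^C\,N/2^{C(\log N)^{1-\gamma}}$,'' while you justify the needed bound on $\log k$ via $k(n)\le n^{O(1)}$ and the relation $\log n = (\log N)^{\Omega(1)}$ — the exact intermediate bound $k\le n^{O(1)}$ is not what the paper uses and is not strictly necessary (all that matters is $\log k = O((\log N)^{1-\Omega(1)})$, which the paper asserts directly), but the conclusion and structure are the same.
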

\begin{proof}
	Note that the function $k$ in~\Cref{thm:chromatic-hardness} satisfies $k=o\left(2^{(\log N)^{1-\gamma}}\right)$. Thus, we can replace the soundness condition in~\Cref{thm:chromatic-hardness} with $\chi(G)\geq k^C\frac{N}{2^{C\left(\log N\right)^{1-\gamma}}}$. Using ~\Cref{lem:color-clique}, this implies that in any assignment of $k^C$ colors to the vertices of $G$, there is a monochromatic clique of size $\Omega( (\log N)^{\gamma -\epsilon})$, where $\epsilon >0$ is an absolute constant. The hardness of Strong Monochromatic Clique then follows along the same lines as~\Cref{lem:monochromatic-clique-hard-basic}.
\end{proof}

We amplify the hardness of \smc$(k,B,C)$ to \mc$(k^C,B^C)$ using the lexicographic product of graphs. 
First, we define lexicographic product and prove some properties of it. 
\begin{definition}(Lexicographic product of graphs)
	Given two graphs $G$ and $H$, the Lexicographic graph product $G \cdot H$ has vertex set $V(G) \times V(H)$, and two vertices $(u_1,v_1), (u_2,v_2)$ are adjacent if either $(u_1, u_2)\in E(G)$ or $u_1=u_2$ and $(v_1, v_2)\in E(H)$. 
\end{definition}
The lexicographic product can be visualized as replacing each vertex of $G$ with a copy of $H$ and forming complete bipartite graphs between copies of vertices adjacent in $G$.
For ease of notation, we let $G^2 = G \cdot G$. More generally, for an integer $n$ that is a power of $2$, we define $G^n$ as taking the above lexicographic product of $G$ with itself recursively $\log n$ times. 
\begin{lemma}
	\label{lem:chromatic-lexi}
	Let $n\geq 2$ be a power of $2$. If $\chi(G)\leq k$, then $\chi(G^n)\leq k^n$. 
\end{lemma}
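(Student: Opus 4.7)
The approach is a direct induction on $n$ (viewed as a power of $2$), reducing everything to the single-step fact that $\chi(G \cdot H) \leq \chi(G)\,\chi(H)$ for any two graphs. To prove that single-step bound, I would take proper colorings $c_G : V(G) \to [\chi(G)]$ and $c_H : V(H) \to [\chi(H)]$ and define the product coloring $c : V(G \cdot H) \to [\chi(G)] \times [\chi(H)]$ by $c(u,v) = (c_G(u), c_H(v))$. To verify that $c$ is proper, consider any edge $(u_1,v_1) \sim (u_2,v_2)$ in $G \cdot H$. By the definition of the lexicographic product, either $u_1 u_2 \in E(G)$, in which case $c_G(u_1) \neq c_G(u_2)$ and hence the first coordinates of $c$ differ, or else $u_1 = u_2$ and $v_1 v_2 \in E(H)$, in which case the second coordinates differ. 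In either case $c(u_1,v_1) \neq c(u_2,v_2)$, proving the bound $\chi(G \cdot H) \leq \chi(G)\,\chi(H)$.

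With the single-step bound in hand, I would induct on $\log n$. The base case $n = 2$ is immediate by setting $H = G$: $\chi(G^2) \leq k^2$. For the inductive step, assume $\chi(G^{n/2}) \leq k^{n/2}$. The recursive definition of $G^n$, together with the (straightforward) associativity of the lexicographic product, lets me write $G^n = G^{n/2} \cdot G^{n/2}$, so
\[
\chi(G^n) \leq \chi(G^{n/2}) \cdot \chi(G^{n/2}) \leq k^{n/2} \cdot k^{n/2} = k^n.
\]

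Alternatively, I can bypass the induction entirely by observing that $G^n$ admits a clean explicit description: its vertex set is $V(G)^n$, and two tuples $(u_1, \ldots, u_n)$ and $(v_1, \ldots, v_n)$ are adjacent in $G^n$ iff $u_i v_i \in E(G)$ at the smallest index $i$ where the two tuples differ. Then fixing a proper $k$-coloring $c_G$ of $G$ and defining $c(u_1, \ldots, u_n) = (c_G(u_1), \ldots, c_G(u_n))$ already uses only $k^n$ color tuples, and adjacency in $G^n$ forces the $i$-th entries of the coloring to differ at the first discrepancy index, making $c$ proper. I do not anticipate any real obstacle: the lemma is a standard multiplicativity property of chromatic number under the lexicographic product, and both the inductive and the direct argument are essentially one-line verifications once the adjacency rule is unpacked.
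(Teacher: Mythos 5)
Your proof is correct and takes essentially the same approach as the paper: prove the single-step bound $\chi(G\cdot H)\leq \chi(G)\chi(H)$ via the product coloring $c(u,v)=(c_G(u),c_H(v))$, then induct over $\log n$ using $G^n = G^{n/2}\cdot G^{n/2}$. You simply spell out the induction more explicitly than the paper does, and your alternative direct argument via the unfolded adjacency rule for $G^n$ is a valid (if unnecessary) bonus.
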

\begin{proof}
	We prove that $\chi(G^2)\leq k^2$, and the statement follows by induction on $n$.
	If $f:G\rightarrow [k]$ is a proper $k$-coloring of $G$, then the coloring $f'(u,v)=(f(u),f(v))$ is a proper $k^2$-coloring of $G \times G$.
\end{proof}
\begin{lemma}
	\label{lem:monochromatic-lexi}
	Let $n \geq 2$ be a power of $2$. Suppose that in any assignment of $k$ colors to the vertices of $G$, there is a monochromatic clique of size $B$. Then, in any assignment of $k$ colors to the vertices of $G^n$, there is a monochromatic clique of size $B^n$.
\end{lemma}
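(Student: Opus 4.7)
I would prove the statement by induction on $n$, viewed as a power of $2$. For the base case $n=1$ the claim is exactly the hypothesis. For the inductive step, suppose the statement has been established for some power $n$ and consider $G^{2n} = G^{n} \cdot G^{n}$, whose vertex set is $V(G^n) \times V(G^n)$, together with an arbitrary $k$-coloring $c : V(G^n) \times V(G^n) \to [k]$.

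The plan is to apply the induction hypothesis twice: once to each ``fiber'' over an outer vertex, and then once more to a derived coloring of the outer copy. Concretely, for each $u \in V(G^n)$ the map $v \mapsto c(u,v)$ is a $k$-coloring of $\{u\}\times V(G^n)$, which is isomorphic to $G^n$. By the inductive hypothesis this fiber contains a monochromatic clique $C_u \subseteq V(G^n)$ of size $B^n$; let $\phi(u) \in [k]$ denote its color. The assignment $\phi : V(G^n) \to [k]$ is itself a $k$-coloring of $G^n$, so applying the inductive hypothesis a second time yields vertices $u_1,\ldots,u_{B^n}\in V(G^n)$ that form a clique in $G^n$ and all receive the same $\phi$-color, call it $\alpha$.

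Now I would form the set $W = \{(u_i, v) : i \in [B^n],\ v \in C_{u_i}\} \subseteq V(G^{2n})$. Since the $u_i$ are distinct, $|W| = B^n \cdot B^n = B^{2n}$. To check $W$ is a clique in $G^{2n}$, take two of its vertices $(u_i,v)$ and $(u_j,v')$: if $i \neq j$ then $(u_i,u_j)\in E(G^n)$ because the $u_i$'s form a clique, so $(u_i,v)$ and $(u_j,v')$ are adjacent in $G^n \cdot G^n$ by the first clause of the lexicographic product; if $i = j$ then $v,v'\in C_{u_i}$, which is a clique in the inner copy, so adjacency follows from the second clause. Finally, every vertex $(u_i,v) \in W$ has color $c(u_i,v) = \phi(u_i) = \alpha$, so $W$ is monochromatic. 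This produces the required clique of size $B^{2n}$ and completes the induction.

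There is no real obstacle; the only point requiring care is the bookkeeping around ``the induction hypothesis is used twice at different scales.'' In particular, one must notice that the derived outer coloring $\phi$ is obtained by an arbitrary (not necessarily proper) choice of a monochromatic clique in each fiber, so the second application of the hypothesis is to $\phi$ viewed purely as a $k$-coloring of the vertices of $G^n$, not as a proper coloring. Once this is in place, the two-clause definition of the lexicographic product is exactly what lets the inner and outer cliques be combined into a single clique of $G^{2n}$ of the correct color.
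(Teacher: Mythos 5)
Your proof is correct and follows essentially the same approach as the paper: decompose $G^{2n}=G^n\cdot G^n$, apply the hypothesis once per fiber to get a monochromatic inner clique and its color, apply it again to the resulting ``outer'' coloring, and combine via the two clauses of the lexicographic product. The only difference is presentational --- the paper proves the $n=2$ squaring step for $G$ and $B$ and says the rest ``follows by induction,'' whereas you explicitly carry out that induction with the step applied to $G^n$ and $B^n$; the substance is identical.
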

\begin{proof}
	We prove the statement for $n=2$ and the lemma follows by induction on $n$. Let $f:V(G^2)\rightarrow [k]$ be a given assignment. For a vertex $v \in G$, consider the assignment $g_v:V(G)\rightarrow[k]$ defined as $g_v(u)=f(v,u)$. As every assignment of $k$ colors to the vertices of $G$ has a monochromatic clique of size $B$, there is a color $\alpha(v) \in [k]$ and a clique $S(v) \subseteq V(G)$ with $|S(v)|\geq B$ such that $g_v(u)=\alpha(v)$ for all $u \in S(v)$, or in other words, $f(v,u)=\alpha(v)$ for all $u \in S(v)$. Note that such a set $S(v)$ and $\alpha(v)$ exist for $v \in V(G)$. The function $\alpha : V(G)\rightarrow [k]$ can also be visualized as an assignment of $k$ colors to the vertices of $G$, and thus there is a monochromatic clique $T$ of size at least $B$ with respect to this assignment. The set 
	\[
	\{ S(v) : v \in T \}
	\]
	is a monochromatic clique of size $B^2$ with respect to $f$ in $G$. 
\end{proof}

By using the lexicographic product, we can get a polynomial time reduction from Strong Monochromatic Clique to Monochromatic Clique. 
\begin{lemma}
	\label{lem:smc-reduction}
	For every constant $C \geq 1$ that is a power of $2$, there exists a polynomial time reduction from \emph{\smc}$(k,B,C)$ to \emph{\mc}$(k^C,B^C)$.
\end{lemma}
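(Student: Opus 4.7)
The plan is to take the reduction to be the map $G \mapsto G^C$, where $G^C$ denotes the $C$-fold lexicographic self-product (defined recursively, as in the excerpt, since $C$ is a power of $2$). Because $C$ is a constant, $|V(G^C)| = |V(G)|^C$ is polynomially bounded in $|V(G)|$, and $G^C$ can plainly be constructed in polynomial time. So there is nothing to show regarding running time; all that remains is completeness and soundness.

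For completeness, suppose we are in the YES case of \smc$(k,B,C)$, so $\chi(G) \leq k$. Then \Cref{lem:chromatic-lexi} (applied with $n = C$) immediately gives $\chi(G^C) \leq k^C$, which is exactly the YES condition of \mc$(k^C,B^C)$.

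For soundness, suppose we are in the NO case of \smc$(k,B,C)$, so every assignment of $k^C$ colors to $V(G)$ contains a monochromatic clique of size $B$. This is precisely the hypothesis of \Cref{lem:monochromatic-lexi} with the parameter ``$k$'' in that lemma instantiated as $k^C$ and $n = C$; its conclusion then says that every $k^C$-coloring of $(G)^C = G^C$ admits a monochromatic clique of size $B^C$. That is exactly the NO condition of \mc$(k^C, B^C)$.

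Hence both directions follow as essentially one-line corollaries of the two product lemmas already established, and the whole argument is assembled by observing that the extra gap between $k$ and $k^C$ built into the strong variant is exactly what is needed to absorb the $k \to k^C$ blow-up from Lemma~\ref{lem:chromatic-lexi} on the completeness side while still applying Lemma~\ref{lem:monochromatic-lexi} with palette size $k^C$ on the soundness side. There is no real obstacle here; the only subtlety worth flagging is the bookkeeping on which role $k$ plays in Lemma~\ref{lem:monochromatic-lexi}, namely that the lemma is quantified over \emph{any} palette size, so it can be invoked with the larger palette $k^C$ rather than the coloring number $k$ used in the completeness side.
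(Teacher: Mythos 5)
Your proposal is correct and follows exactly the paper's own argument: the reduction is $G \mapsto G^C$, completeness is an immediate application of Lemma~\ref{lem:chromatic-lexi}, and soundness is an immediate application of Lemma~\ref{lem:monochromatic-lexi} with the palette parameter instantiated as $k^C$. The bookkeeping point you flag is indeed the only subtlety, and you handle it correctly.
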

\begin{proof}
	Given a graph $G$ as an instance of \smc$(k,B,C)$, we compute the graph $G'=G^C$. We claim that solving \mc $(k^C,B^C)$ on $G'$ solves the original Strong Monochromatic Clique problem. 
	\begin{enumerate}
		\item (Completeness.) Suppose that $\chi(G)\leq k$. Then, by~\Cref{lem:chromatic-lexi}, $\chi(G')\leq k^C$.
		\item (Soundness.) Suppose that in any assignment of $k^C$ colors to the vertices of $G$, there is a monochromatic clique of size $B$. Then, by~\Cref{lem:monochromatic-lexi}, in any assignment of $k^C$ colors to the vertices of $G'$, there is a monochromatic clique of size $B^C$.\qedhere
	\end{enumerate}
\end{proof}

Putting everything together, we obtain the following hardness of Monochromatic Clique.
\begin{theorem}
	\label{thm:monochromatic-clique-hard}
	For every constant $C> 0$, there exists a function $k=k(n)$ such that the following holds. Assuming $\textsf{NP} \nsubseteq \textsf{ZPTIME} \left( n^{(\log n)^{O(1)}} \right) $, there is no $n^{ (\log n)^{O(1)}}$ time algorithm for \emph{\mc}$(k,B)$ when $B=\Omega\left((\log n)^C\right)$.
\end{theorem}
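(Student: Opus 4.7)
The plan is to combine \Cref{lem:smc-hardness}, which gives hardness of \smc$(k,B,C')$ with $B=\Omega((\log n)^\gamma)$ for an absolute constant $\gamma>0$ and any constant $C'\geq 1$, with the reduction from Strong Monochromatic Clique to Monochromatic Clique in \Cref{lem:smc-reduction}, which amplifies both the number of colors and the clique size by a power.

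Given a constant $C>0$, first choose $C'$ to be the smallest power of $2$ with $C'\gamma\geq C$; note $C'$ depends only on $C$ and $\gamma$ and is itself a constant. By \Cref{lem:smc-hardness} applied with parameter $C'$, under the assumption $\textsf{NP}\nsubseteq \textsf{ZPTIME}(n^{(\log n)^{O(1)}})$, there is no $n^{(\log n)^{O(1)}}$-time algorithm for \smc$(k,B_1,C')$ with $B_1=\Omega((\log n)^\gamma)$, for some function $k=k(n)$. Next, apply the polynomial-time reduction of \Cref{lem:smc-reduction} to this instance (this is where we need $C'$ to be a power of $2$). We obtain a graph $G'$ on $n'=n^{C'}$ vertices on which solving \mc$(k^{C'},B_1^{C'})$ solves the original \smc$(k,B_1,C')$ problem.

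It remains to check the parameters. Since $C'$ is a constant, $\log n'=C'\log n=\Theta(\log n)$, so an $n'^{(\log n')^{O(1)}}$-time algorithm for Monochromatic Clique on $G'$ would yield an $n^{(\log n)^{O(1)}}$-time algorithm for \smc$(k,B_1,C')$, contradicting the hypothesis. The clique size in the resulting \mc\ instance is
\[
B_1^{C'}=\Omega\bigl((\log n)^{\gamma C'}\bigr)=\Omega\bigl((\log n')^{\gamma C'}\bigr)=\Omega\bigl((\log n')^C\bigr),
\]
using $\gamma C'\geq C$ and that $C'$ is constant. Setting the color parameter to $k^{C'}$ (which is a function of $n'$), this gives the claimed hardness of \mc$(k,B)$ with $B=\Omega((\log n)^C)$.

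The only nontrivial point is the bookkeeping around $C'$: we have to pick it large enough that $\gamma C'\geq C$ so that the amplified clique bound $B_1^{C'}$ reaches $(\log n)^C$, but also a power of $2$ so that \Cref{lem:smc-reduction} applies; since both constraints are satisfied by rounding $C/\gamma$ up to the next power of $2$, there is no real obstacle here, and the rest is a direct composition of the two lemmas.
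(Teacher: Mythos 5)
Your proof is correct and follows exactly the route the paper takes: the paper's own proof is the one-liner ``combine~\Cref{lem:smc-hardness} and~\Cref{lem:smc-reduction},'' and you have simply made the composition explicit. In particular, you correctly identify the two constraints on $C'$ (it must be a power of $2$ for~\Cref{lem:smc-reduction} to apply, and $\gamma C'\geq C$ for the amplified clique bound to reach $(\log n)^C$), you correctly observe that $\gamma$ and $k$ in~\Cref{lem:smc-hardness} are fixed before $C'$ is chosen so there is no circularity, and you correctly track how the change of variable from $n$ to $n'=n^{C'}$ affects both the running-time bound and the $(\log n)^C$ threshold — all of which hinges on $C'$ being a constant. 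The only cosmetic point worth noting (not an error) is that the composition actually yields hardness at $B=\Omega((\log n')^{\gamma C'})$, which may overshoot $(\log n')^C$; this is harmless because $\mc(k,B)$ only gets easier as $B$ grows, so hardness at larger $B$ immediately implies hardness at any $B'=\Omega((\log n')^C)$ with $B'\leq B$.
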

\begin{proof}
	The proof follows directly by combining~\Cref{lem:smc-hardness} and~\Cref{lem:smc-reduction}.
\end{proof}

\subsection{From Monochromatic Clique to Vector Scheduling}
We now prove~\Cref{thm:main-vs} using the above hardness of Monochromatic Clique.
\begin{proof}[Proof of Theorem ~\ref{thm:main-vs}]
	The reduction from \mc$(k,B)$ to Vector Scheduling is (implicitly) proved in~\cite{CK04}. We present it here for the sake of completeness. Given a graph $G=(V=[n],E)$, parameters $k$ and $B$, we order all the $B$-sized cliques of $G$ as $T_1, T_2, \ldots, T_d$ with $d\leq n^B$.
	We define a set of $n$ vectors $v_1, v_2, \ldots, v_n$ of dimension $d$ with 
	\[
	(v_i)_j = \begin{cases}
		1& \text{if }i \in T_j \\ 
		0& \text{otherwise.}
	\end{cases}
	\]
	The instance of the Vector Scheduling has these $n$ vectors as the input and the number of machines is equal to $k$. 
	
	We analyze the reduction. 
	\begin{enumerate}
		\item (Completeness.) Suppose that there exists a proper $k$-coloring of $G$, $c:V\rightarrow [k]$. We assign the vector $v_i$ to the machine $c(i)$. For every $j \in [d]$, all the $B$ vectors that have $1$ in the $j$th dimension are assigned to distinct machines.
		Thus, the makespan of the scheduling is at most $1$. 
		\item (Soundness.) Suppose that in any assignment of $k$ colors to the vertices of $G$, there is a monochromatic clique of size $B$. In this case, the makespan of the scheduling is at least $B$.
	\end{enumerate}
	We set $B=(\log n)^C$ for a large constant $C$ to be set later. 
	We choose $k$ from~\Cref{thm:monochromatic-clique-hard} such that assuming $\textsf{NP} \nsubseteq \textsf{ZPTIME} \left( n^{(\log n)^{O(1)}} \right) $, there is no $n^{(\log n)^{O(1)}}$ time algorithm for \mc$(k,B)$. By the above reduction, we can conclude that there is no polynomial time algorithm that approximates the resulting Vector Scheduling instances within a factor of $B=(\log n)^C$. 
	As $d \leq n^B$, we get that $\log d \leq (\log n)^{C+1}$, and $B \geq (\log d)^{1-\frac{1}{C+1}}$. Setting $C=\frac{1}{\epsilon}-1$, we get that $d$-dimensional Vector Scheduling has no polynomial time $\Omega( (\log d)^{1-\epsilon})$ approximation algorithm assuming $\textsf{NP} \nsubseteq \textsf{ZPTIME}\left(n^{(\log n)^{O(1)}}\right)$, for every constant $\epsilon >0$. 
\end{proof}
\begin{remark}
	In~\cite{IKKP19}, Im, Kell, Kulkarni, and Panigrahi also study the $\ell_r$-norm minimization of Vector Scheduling where the objective is to minimize 
	\[
	\max_{k \in [d]} \left( \sum_{i=1}^m (L_i^k)^r \right)^{\frac{1}{r}}
	\]
	where $L_i^k$ denotes the load on the machine $i$ on the $k$th dimension. 
	They gave an algorithm with an approximation ratio $O\left(\left( \frac{\log d}{\log\log d}\right)^{1-\frac{1}{r}}\right)$. Our reduction from Monochromatic Clique gives almost optimal hardness for this variant as well: we get the hardness of $\Omega\left(\left( \log d\right)^{1-\frac{1}{r}-\epsilon}\right)$ assuming $\textsf{NP}\nsubseteq \textsf{ZPTIME}\left(n^{(\log n)^{O(1)}}\right)$, for every constant $\epsilon>0$.
\end{remark}

\subsection{Hardness of Vector Scheduling via Balanced Hypergraph Coloring}
Observe that the resulting Vector Scheduling instances in the above reduction satisfy a stronger property: the vectors are from $\{ 0,1\}^d$. In the setting where the vectors are from $\{ 0,1\}^d$, the Vector Scheduling problem is closely related to the Balanced Hypergraph Coloring problem. In this problem, given a hypergraph $H$ and an integer $k$, the objective is to assign $k$ colors to the vertices of $H$ minimizing the maximum number of monochromatic vertices in an edge. 
More formally, we study the following decision version of the problem. 
\begin{definition} (Balanced Hypergraph Coloring.)
	In the Balanced Hypergraph Coloring problem, given a $s$-uniform hypergraph $H$ and parameters $k$ and $c<s$, the objective is to distinguish between the following: 
	\begin{enumerate}
		\item There is an assignment of $k$ colors to the vertices of $H$ such that in every edge, each color appears at most $c$ times. 
		\item The hypergraph $H$ has no proper coloring with $k$ colors i.e., in any assignment of $k$ colors to the vertices of $H$, there is an edge all of whose $s$ vertices are assigned the same color. 
	\end{enumerate}
\end{definition}
We give a simple reduction from Balanced Hypergraph Coloring to Vector Scheduling. 
\begin{lemma}
	\label{lem:coloring-vs}
	Given a $s$-uniform hypergraph $H=(V'=[n'],E')$ and parameters $k, c$, there is a polynomial time reduction that outputs a Vector Scheduling instance $I$ over $n'$ vectors $v_1, v_2,\ldots, v_{n'} \in \{0,1\}^d$ on $m'$ machines with $m'=k, d=|E'|$ such that 
	\begin{enumerate}
		\item (Completeness.) If there is an assignment of $k$ colors to the vertices of $H$ such that each color appears at most $c$ times in every edge, then there is a scheduling of $I$ with makespan at most $c$. 
		\item (Soundness.) If $H$ has no proper coloring with $k$ colors, then in any scheduling of $I$, the makespan is at least $s$. 
	\end{enumerate}
\end{lemma}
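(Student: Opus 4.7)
The plan is to use the direct encoding hinted at in the introduction: each edge of $H$ becomes a coordinate, and each vertex becomes an indicator vector recording the edges it lies in. Concretely, order the edges $E' = \{e_1, e_2, \ldots, e_d\}$ with $d = |E'|$, and for every vertex $i \in V'$ define $v_i \in \{0,1\}^d$ by
\[
(v_i)_j = \begin{cases} 1 & \text{if } i \in e_j \\ 0 & \text{otherwise.} \end{cases}
\]
The Vector Scheduling instance $I$ consists of these $n'$ vectors on $m' = k$ machines. The reduction is clearly polynomial time.

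For completeness, suppose $c : V' \to [k]$ is a coloring such that every color appears at most $c$ times in every edge. Assign vector $v_i$ to machine $c(i)$. For any machine $t \in [k]$ and any coordinate $j \in [d]$, the load is $|\{i : c(i) = t, i \in e_j\}|$, which is at most $c$ by hypothesis. Hence the makespan is at most $c$.

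For soundness, suppose that in any scheduling $A_1, \ldots, A_k$ of the vectors to the $k$ machines the makespan is less than $s$. Interpret the assignment as a coloring $c : V' \to [k]$ by $c(i) = t$ iff $v_i$ is placed on machine $t$. For any edge $e_j$, consider the contribution to coordinate $j$: the load on machine $t$ in that coordinate is exactly $|\{i \in e_j : c(i) = t\}|$, which must be at most $s - 1$. Since the edge $e_j$ has exactly $s$ vertices distributed across the $k$ color classes and no class contains all $s$ of them, $e_j$ is not monochromatic under $c$. This holds for every edge, so $c$ is a proper $k$-coloring of $H$, contradicting the assumption. Hence if $H$ has no proper $k$-coloring, every scheduling has makespan at least $s$.

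There is no real obstacle here; the only things to double-check are that the dimension count $d = |E'|$ matches, the machine count $m' = k$ is correct, and that the two directions of the correspondence between colorings and schedulings are exact (which they are, since assigning a vector to a machine is the same data as assigning a color to its vertex). The contrapositive phrasing in the soundness direction is the cleanest way to write it.
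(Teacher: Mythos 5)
Your reduction, the coloring/scheduling correspondence, and the contrapositive soundness argument are exactly the paper's proof. Nothing to add.
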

\begin{proof}
	Let $d=|E'|$.
	Order the edges of the hypergraph $H$ as $e_1, e_2, \ldots, e_d$. 
	We define the set of vectors $v_1, v_2, \ldots, v_{n'} \in \{0,1\}^d$ as follows: 
	\[
	(v_i)_j = \begin{cases}
		1& \text{if }i \in e_j \\ 
		0& \text{ otherwise.}
	\end{cases}
	\]
	We set the number of machines $m'$ to be equal to the number of colors $k$. There is a natural correspondence between the assignment of $k$-colors to the vertices of $H$ $f:V'\rightarrow [k]$, and the scheduling where we assign the vector $v_i$ to the machine $f(i)$. We now analyze our reduction. 
	\begin{enumerate}
		\item (Completeness.) If there exists an assignment of $k$ colors $f:V' \rightarrow [k]$ where each color appears at most $c$ times in each edge, we assign the vector $v_i, i \in [n']$ to the machine $f(i)$. In any dimension $j \in [d]$, at most $c$ vectors $v_i$ with $(v_i)_j = 1$ are scheduled on any machine. Thus, in any machine, the total load in each dimension is at most $c$. 
		\item (Soundness.) If there exists a vector scheduling $f:[n]\rightarrow [m']$ with makespan strictly smaller than $s$, assign the color $f(i)$ to the $i$th vertex of the hypergraph. In any edge of the hypergraph, each color appears fewer than $s$ times as the makespan is smaller than $s$. Thus, $f:V' \rightarrow [k]$ is a proper $k$-coloring of the hypergraph $H$. \qedhere
	\end{enumerate}
\end{proof}

We prove the hardness results for Vector Scheduling, namely~\Cref{thm:main-vs-np} and~\Cref{thm:main-vs-intermediate} by combining this reduction with the hardness of Balanced Hypergraph Coloring.
Note that the dimension of the resulting instances in the above reduction is equal to $m$, the number of edges in the hypergraph $H$, and the ratio of the makespans in the completeness and soundness is equal to $\frac{s}{c}$. Thus, our goal is to prove the hardness of the Balanced Hypergraph Coloring problem where $\frac{s}{c}$ is as large as possible, as a function of $m$, the number of edges in the underlying hypergraph. 

Towards this, we first give a reduction from the Label Cover problem to the Balanced Hypergraph Coloring problem.
\begin{lemma}
	\label{lem:lc-hypergraph}
	Fix an odd prime number $k \geq 3$ and let $\epsilon = \frac{1}{k^8}$. Given a Label Cover instance $G=(V=L\cup R,E,\Sigma_L, \Sigma_R,\Pi)$, there is a polynomial time reduction that outputs a $k^2$ uniform hypergraph $H=(V',E')$ with $|V'|\leq |L|k^{|\Sigma_L|}$ such that 
	\begin{enumerate}
		\item(Completeness) If $G$ is satisfiable, there is an assignment of $k$ colors to the vertices of $H$ such that in every edge, each color occurs at most $2k$ times. 
		\item(Soundness) If no labeling to $G$ can satisfy an $\epsilon$ fraction of the constraints, then $H$ has no proper $k$-coloring, that is, in any assignment of $k$ colors to the vertices of $H$, there is an edge all of whose vertices are assigned the same color.
	\end{enumerate}
	Furthermore, $|E'|$ is at most $|R|\Delta^kk^{|\Sigma_L|k^2}$ where $\Delta$ is the maximum degree of a vertex $v \in R$.
\end{lemma}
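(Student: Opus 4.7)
\textbf{Proof plan for Lemma~\ref*{lem:lc-hypergraph}.}
The plan is to follow the Label Cover + Long Code template, using the $k$-ary Long Code over the left alphabet and fitting the test into a $k^2$-uniform hyperedge so that the combinatorial lemma of Austrin, Bhangale, and Potukuchi can be applied in the soundness analysis. For each $u \in L$, introduce a Long Code block of $k^{|\Sigma_L|}$ vertices, indexed by the functions $f: \Sigma_L \to [k]$; this yields $|V'| = |L|\,k^{|\Sigma_L|}$ as required. For each $v \in R$, enumerate all $k$-tuples of neighbors $(u_1,\ldots,u_k)$ of $v$ in $G$ together with the corresponding projections $\pi_1,\ldots,\pi_k:\Sigma_L\to\Sigma_R$, and for each such tuple form hyperedges of size $k^2$ whose vertices are $k$ functions $f_{i,1},\ldots,f_{i,k}$ drawn from the Long Code of $u_i$ for each $i\in[k]$. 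The test functions are further constrained by a projection-consistency condition across the $k$ rows (so that labels projecting to a common $b\in\Sigma_R$ line up column-by-column), which is what allows the ABP combinatorial lemma to be invoked in soundness. Counting hyperedges, for each $v\in R$ there are at most $\Delta^k$ neighbor tuples and at most $(k^{|\Sigma_L|})^{k^2}$ choices of test functions, giving the stated bound $|E'|\le |R|\Delta^k k^{|\Sigma_L|k^2}$.

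\textbf{Completeness.} Given a labeling $\sigma$ of $G$ satisfying every constraint, color $(u,f)$ by $f(\sigma(u))$. On a hyperedge associated to $(v;u_1,\ldots,u_k;f_{i,j})$, the colors observed are $f_{i,j}(\sigma(u_i))$ for $i,j\in[k]$. To control color multiplicities, the hyperedge test is built so that, for every fixed $a\in\Sigma_L$, the column values $(f_{i,j}(a))_{j\in[k]}$ form a permutation of $[k]$. Then for each row $i\in[k]$, the $k$ colors contributed by $\{(u_i,f_{i,j})\}_{j\in[k]}$ is exactly a permutation of $[k]$, so each color appears at most once per row and hence at most $k\le 2k$ times in the whole hyperedge. (The slack factor of $2$ absorbs the additional flexibility needed to satisfy the projection-consistency constraint simultaneously across rows.)

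\textbf{Soundness (contrapositive).} Assume $H$ admits a proper $k$-coloring $\chi$, i.e.\ no hyperedge is monochromatic. For each $u\in L$, $\chi$ restricts to a $k$-coloring $\chi_u:[k]^{\Sigma_L}\to [k]$ of the Long Code of $u$. The Austrin–Bhangale–Potukuchi lemma (proved via a Borsuk–Ulam–type argument) implies that any such $\chi_u$ must be ``close to a short list of dictators'': one can extract a candidate label set $L(u)\subseteq \Sigma_L$ of size bounded polynomially in $k$ such that, whenever a hyperedge across $k$ Long Codes fails to be monochromatic, the underlying test functions must be consistent with dictators drawn from these lists. Aggregating this over all $k$-tuples of neighbors of a typical $v\in R$, together with the projection-consistency built into the test, forces the existence of a $b\in\Sigma_R$ such that $\pi_{(u,v)}^{-1}(b)\cap L(u)\neq\emptyset$ for a positive fraction of neighbors $u$ of $v$. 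Choosing $\sigma(v)=b$ and sampling $\sigma(u)$ uniformly from $L(u)$ then satisfies more than an $\epsilon = k^{-8}$ fraction of Label Cover constraints, contradicting the soundness assumption.

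\textbf{Main obstacle.} The delicate step is the second one: aligning the $k^2$ Long Code vertices into rows and columns whose structure is both (a) balanced enough to guarantee each color appears at most $2k$ times in the completeness case, and (b) rigid enough that the ABP combinatorial lemma can be invoked row-wise and then glued across the $k$ rows through the projection-consistency constraint to yield a usable random labeling for Label Cover. Getting the column permutation condition and the cross-row projection condition to coexist, while keeping the candidate lists $L(u)$ small enough that the resulting random labeling beats the $k^{-8}$ soundness threshold, is where the bulk of the work lies.
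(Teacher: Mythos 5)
Your high-level architecture matches the paper: Long Codes $[k]^{\Sigma_L}$ on the left side, $k^2$-uniform hyperedges formed from $k$ distinct neighbors of each $v\in R$ with $k$ Long-Code vectors each, dictator coloring for completeness, and the Austrin--Bhangale--Potukuchi theorem for soundness. The vertex and edge counts also come out right. But there are two genuine problems in the details.

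First, your edge constraint is stated too strongly. You require that for every row $i$ and every coordinate $a\in\Sigma_L$, the tuple $(f_{i,j}(a))_{j\in[k]}$ is a permutation of $[k]$. This gives completeness with the cleaner bound of $k$, but it is fatal for soundness: the ABP theorem only produces, for each $u\in L$, $k$ vectors of the same color whose coordinate-wise union is $[k]$ on all but at most $k^3$ exceptional coordinates $S(u)$; on the exceptional coordinates the vectors can be arbitrary and need not cover $[k]$. So the $k^2$ vectors you would assemble from ABP outputs will not, in general, satisfy your per-coordinate permutation condition, and hence need not form a hyperedge at all. The paper's constraint is a global one over all pairs $(i,j)$, counted only along coordinates $\alpha_1,\ldots,\alpha_k$ that project to a common $\beta$, with the weaker bound of $2k$ repetitions. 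The slack of $2k$ (rather than $k$) is not a cosmetic margin: it is precisely what permits one of the $k$ rows to be entirely unconstrained (the row whose $\alpha_i$ happens to land in its exceptional set), since $k-1$ clean rows contribute at most $k-1$ repetitions and the one bad row contributes at most $k$, for a total under $2k$.

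Second, your soundness sketch glosses over the actual combinatorial argument. The ABP theorem does not say the coloring is ``close to a short list of dictators''; it extracts a monochromatic $k$-tuple with a small exceptional coordinate set $S(u)$. The decoding works as follows: for each $u\in L$, set the candidate labels to be $S(u)$, $|S(u)|\le k^3$. For a right vertex $v\in R$ with neighbors $u_1,\ldots,u_\ell$ and projections $\pi_i$, one argues that at most $k(k-1)$ of the sets $\pi_i(S(u_i))$ can be pairwise disjoint: otherwise, by pigeonhole on the $k$ colors assigned to the monochromatic $k$-tuples, one finds $k$ neighbors all of the same color with pairwise disjoint projected exceptional sets, and the disjointness ensures (via the $2k$-slack constraint) that the resulting $k^2$ vectors form a legitimate hyperedge, which is monochromatic --- contradiction. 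A maximal disjoint subfamily then gives a set $S(v)\subseteq\Sigma_R$ of size at most $k^5$ hitting every $\pi_i(S(u_i))$, and random decoding over $S(u)\times S(v)$ yields $\ge k^{-8}$ agreement. This pigeonhole-plus-disjointness step is the crux of the proof, and it is what determines both the $2k$ slack in the test and the $\epsilon=k^{-8}$ soundness threshold; your sketch replaces it with an unexplained aggregation over neighbor tuples and a claim about a common $b\in\Sigma_R$ that does not follow from the ingredients you list.
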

We defer the proof of~\Cref{lem:lc-hypergraph} to~\Cref{subsection:lc-hypergraph}. 

\medskip 
Using~\Cref{lem:lc-hypergraph}, we can prove the hardness of Balanced Hypergraph Coloring via Label Cover hardness results.
We obtain two different hardness results for the Balanced Hypergraph Coloring problem, one under $\textsf{NP}\nsubseteq \textsf{DTIME}\left( n^{O(\log \log n)}\right)$ and another NP-hardness result, by using two different hardness results for the Label Cover problem.
These two hardness results prove~\Cref{thm:main-vs-intermediate} and~\Cref{thm:main-vs-np} respectively, using~\Cref{lem:coloring-vs}.

First, using the standard Label Cover hardness obtained using PCP Theorem~\cite{ALMSS98} combined with Raz's Parallel Repetition theorem~\cite{Raz98}, we get the following hardness of Balanced Hypergraph Coloring. 
\begin{theorem}
	\label{thm:bhc-parallel}
	Assuming $\textsf{NP} \nsubseteq \textsf{DTIME}\left( n^{O(\log \log n)}\right)$, there is no polynomial time algorithm for the following problem. 
	Given a $k^2$-uniform hypergraph $H=(V',E')$ with $m =|E'|$ and $k=(\log m)^{\Omega(1)}$, distinguish between the following: 
	\begin{enumerate}
		\item There is an assignment of $k$ colors to the vertices of $H$ such that in any edge of the hypergraph, each color appears at most $2k$ times. 
		\item The hypergraph $H$ has no proper $k$ coloring. 
	\end{enumerate}
\end{theorem}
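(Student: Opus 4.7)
The plan is to compose the parallel-repeated PCP-based hardness of Label Cover (Theorem~\ref{thm:lc-parallel}) with the Label Cover to Balanced Hypergraph Coloring reduction of Lemma~\ref{lem:lc-hypergraph}, tuning the Label Cover soundness parameter $\epsilon$ so that the whole pipeline runs in quasi-polynomial time $n^{O(\log\log n)}$ and produces a hypergraph with $k=(\log m)^{\Omega(1)}$.

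Given a 3-SAT instance $I$ on $n$ variables, I would fix a small constant $\gamma>0$ (to be pinned down at the end) and pick an odd prime $k$ with $(\log n)^{\gamma}\leq k\leq 2(\log n)^{\gamma}$; such a prime exists by Bertrand's postulate. Setting $\epsilon=1/k^{8}$ and applying Theorem~\ref{thm:lc-parallel} yields a Label Cover instance $G$ with $|V(G)|\leq n^{O(\log(1/\epsilon))}=n^{O(\gamma\log\log n)}$, $|\Sigma_L|\leq (1/\epsilon)^{c}=k^{8c}$, and vertex degrees at most $\mathrm{poly}(k)$; the reduction runs in $\mathrm{poly}(n)$ time. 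Its completeness and soundness guarantees say that if $I$ is satisfiable then $G$ is satisfiable, and otherwise no labeling of $G$ satisfies more than an $\epsilon$ fraction of its constraints.

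Feeding $G$ into Lemma~\ref{lem:lc-hypergraph} with this $k$ and $\epsilon$ yields a $k^{2}$-uniform hypergraph $H=(V',E')$ whose completeness and soundness are precisely the two properties required by the theorem: if $I$ is satisfiable, $H$ admits a $k$-coloring in which every color appears at most $2k$ times in each edge, and if $I$ is unsatisfiable, $H$ has no proper $k$-coloring. The critical estimate is the size of $H$. Lemma~\ref{lem:lc-hypergraph} gives $|V'|+|E'|\leq |V(G)|\cdot\Delta^{k}\cdot k^{|\Sigma_L|\,k^{2}}\leq n^{O(\gamma\log\log n)}\cdot k^{k^{8c+2}}$. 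Writing $k^{k^{8c+2}}=2^{k^{8c+2}\log k}$ and substituting $k=(\log n)^{\gamma}$ gives $k^{8c+2}\log k=(\log n)^{(8c+2)\gamma}\cdot O(\log\log n)$, so choosing $\gamma<1/(8c+2)$ makes this at most $2^{O(\log n\cdot\log\log n)}=n^{O(\log\log n)}$. Consequently both $|V'|+|E'|$ and the full reduction time are bounded by $n^{O(\log\log n)}$.

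Finally, since $m=|E'|\leq n^{O(\log\log n)}$ gives $\log m=O(\log n\cdot\log\log n)$, we have $k\geq (\log n)^{\gamma}\geq (\log m)^{\gamma/2}$ for all large $n$, verifying $k=(\log m)^{\Omega(1)}$. A polynomial-time algorithm for the stated Balanced Hypergraph Coloring problem on $H$ would then decide 3-SAT in time $n^{O(\log\log n)}$, contradicting $\textsf{NP}\nsubseteq\textsf{DTIME}(n^{O(\log\log n)})$. The main delicacy is the parameter balancing of the middle step: the doubly-exponential blow-up $k^{|\Sigma_L|\,k^{2}}$ from Lemma~\ref{lem:lc-hypergraph} must be absorbed inside the quasi-polynomial budget afforded by Theorem~\ref{thm:lc-parallel}, which is precisely what forces $\gamma$ to be a sufficiently small constant below $1/(8c+2)$.
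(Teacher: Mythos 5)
Your proposal is correct and follows the same route as the paper: set $\epsilon = 1/k^{8}$ in Theorem~\ref{thm:lc-parallel}, feed the resulting Label Cover instance into Lemma~\ref{lem:lc-hypergraph}, and tune $k=(\log n)^{\gamma}$ so that the doubly-exponential term $k^{|\Sigma_L|k^2}$ stays within the quasi-polynomial budget $n^{O(\log\log n)}$. Your version merely spells out the parameter arithmetic (the explicit constraint $\gamma<1/(8c+2)$ and the Bertrand's-postulate choice of an odd prime $k$) that the paper leaves implicit.
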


\begin{proof}
	By setting $\epsilon = \frac{1}{k^8}$ in~\Cref{thm:lc-parallel}, we have a reduction from the $3$-SAT problem on $n$ variables to the Label Cover problem $G=(V= L \cup R, E, \Sigma_L,\Sigma_R,\Pi)$ with soundness $\epsilon$ and $|V|\leq n^{O(\log k)}$, $|\Sigma_L|\leq k^{O(1)}$ and $\Delta \leq k^{O(1)}$.
	Using~\Cref{lem:lc-hypergraph}, we can reduce this Label Cover instance to a Balanced Hypergraph Coloring instance $H=(V',E')$ with $|V'|\leq n^{O(\log k)}2^{k^{O(1)}}$ and $|E'|\leq n^{O(\log k)}2^{k^{O(1)}}$.
	We set $k = (\log n)^{\Omega(1)}$ such that $|V'| = n ^ {O(\log \log n)}$ and $|E'| = n^{O(\log \log n)}$ to obtain the required hardness of Balanced Hypergraph Coloring. 
\end{proof}

The proof of~\Cref{thm:main-vs-intermediate} follows immediately from~\Cref{thm:bhc-parallel} and~\Cref{lem:coloring-vs}.

\medskip 

Next, using the hardness of near linear sized Label Cover due to Moshkovitz and Raz~\cite{MR10}, we obtain the following NP-hardness of Balanced Hypergraph Coloring. 
\begin{theorem}
	\label{thm:bhc-nearlinear}
	For any constant $C \geq 1$, given a $k^2$ uniform hypergraph $H=(V',E')$ with $m=|E'|$ and $k=(\log \log m)^C$, it is NP-hard to distinguish between the following:
	\begin{enumerate}
		\item There is an assignment of $k$ colors to the vertices of $H$ such that in any edge of the hypergraph, each color appears at most $2k$ times. 
		\item The hypergraph $H$ has no proper $k$ coloring. 
	\end{enumerate}
\end{theorem}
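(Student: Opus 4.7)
The plan is to follow the same template as the proof of Theorem~\ref{thm:bhc-parallel} but swap in the near-linear Label Cover of Moshkovitz and Raz (Theorem~\ref{thm:linear-lc}) as the starting point in place of the standard PCP + parallel repetition hardness. Applying Lemma~\ref{lem:lc-hypergraph} directly to the resulting Label Cover instance already gives the required completeness-soundness gap ($k$-colorable with each color appearing at most $2k$ times per edge vs.\ no proper $k$-coloring); the only question is whether the parameters can be chosen so that (i) the total reduction runs in polynomial time and (ii) the uniformity $k$ of the output hypergraph is at least $\Omega((\log \log m)^C)$, where $m = |E'|$.

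Concretely, given a 3-SAT instance on $n$ variables and a target constant $C$, I would set $k = \Theta((\log \log n)^C)$ and invoke Theorem~\ref{thm:linear-lc} with $\epsilon = 1/k^8$ to produce a Label Cover instance with $|L|, |R| \leq n^{1+o(1)} \cdot (1/\epsilon)^{O(1)}$, alphabet $|\Sigma_L| \leq 2^{(1/\epsilon)^{O(1)}} = 2^{k^{O(1)}}$, and maximum right-degree $\Delta \leq k^{O(1)}$. Feeding this instance into Lemma~\ref{lem:lc-hypergraph} then yields a $k^2$-uniform hypergraph $H = (V', E')$ for which the desired completeness and soundness conditions are inherited directly from the lemma's guarantees.

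The step I expect to require the most care is the parameter accounting that translates $k$ as a function of $n$ into $k$ as a function of $m$. From Lemma~\ref{lem:lc-hypergraph},
\[
\log m \,\leq\, \log |R| + k \log \Delta + |\Sigma_L|\, k^2 \log k \,\leq\, (1+o(1)) \log n + 2^{(\log \log n)^{O(C)}}.
\]
The crucial observation is that $(\log \log n)^{O(C)} = o(\log n)$ for every fixed constant $C$, so $2^{(\log \log n)^{O(C)}} = n^{o(1)}$. This yields $m \leq n^{1+o(1)}$, and therefore $\log \log m \leq \log \log n + O(1)$, which gives $(\log \log m)^C \leq (1+o(1))(\log \log n)^C = O(k)$ for all sufficiently large $n$, as required. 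The delicate point here is that this argument breaks down if $|\Sigma_L|$ grows any faster than singly exponential in $\mathrm{poly}(1/\epsilon)$, so the proof is tightly coupled to the specific $|\Sigma_L|$ bound in Theorem~\ref{thm:linear-lc}.

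Finally, the overall reduction runs in polynomial time: the Moshkovitz-Raz construction is polynomial in $n$ and $1/\epsilon$, and here $1/\epsilon = k^8$ is polylogarithmic in $n$; the reduction of Lemma~\ref{lem:lc-hypergraph} is polynomial in its output size $|V'| + |E'| = n^{1+o(1)}$. Since 3-SAT is NP-hard, this polynomial-time chain of reductions establishes the NP-hardness of Balanced Hypergraph Coloring with the parameters claimed in the theorem statement.
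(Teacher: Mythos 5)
There is a genuine gap in the parameter accounting. Your chain of inequalities gives
\[
\log m \;\leq\; (1+o(1))\log n \;+\; |\Sigma_L|\,k^2\log k \;\leq\; (1+o(1))\log n \;+\; 2^{(\log\log n)^{O(C)}},
\]
and you correctly observe that $2^{(\log\log n)^{O(C)}} = n^{o(1)}$. But this does \emph{not} yield $m \leq n^{1+o(1)}$: the term $n^{o(1)}$ is appearing additively in the bound on $\log m$, not on $m$ itself. Concretely, if the hidden constant in $(\log\log n)^{O(C)}$ exceeds $1$, then $2^{(\log\log n)^{O(C)}} = \omega(\log n)$, so $\log m$ can be as large as $n^{o(1)}$, which makes $m = 2^{n^{o(1)}}$ --- a quantity that is in general superpolynomial in $n$. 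That breaks both the polynomial running-time claim and the statement $\log\log m = \log\log n + O(1)$.

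The hidden constant is the crux. With Theorem~\ref{thm:linear-lc} one only has $|\Sigma_L| \leq 2^{(1/\epsilon)^{c'}}$ for a fixed absolute constant $c'$; setting $\epsilon = 1/k^8$ gives $|\Sigma_L| \leq 2^{k^{8c'}}$. For $|\Sigma_L|\,k^2\log k = O(\log n)$ (which is what you actually need for the reduction to have polynomial output), you need $k^{8c'} = O(\log\log n)$, i.e.\ $k = O\bigl((\log\log n)^{1/(8c')}\bigr)$. So the Moshkovitz--Raz theorem alone gives $k = (\log\log m)^{\Omega(1)}$ with a fixed exponent $1/(8c')$, \emph{not} $(\log\log m)^C$ for arbitrary $C$. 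The paper's proof invokes the Dinur--Steurer improvement, which replaces the alphabet bound with $|\Sigma_L| \leq 2^{(1/\epsilon)^{\gamma}}$ for \emph{every} constant $\gamma > 0$; choosing $\gamma < 1/(8C)$ then makes $8\gamma C < 1$, so $k^{8\gamma} = o(\log\log n)$ and the output size stays polynomial. Your writeup omits this step, and without it the theorem as stated (arbitrary constant $C$) does not follow.
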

\begin{proof}
	By setting $\epsilon = \frac{1}{k^8}$ in~\Cref{thm:linear-lc}, we can reduce a $3$-SAT instance over $n$ variables to a Label Cover instance $G=(V = L \cup R, E, \Sigma_L, \Sigma_R, \Pi)$ with soundness $\epsilon$ and $|V|\leq n^{1+o(1)}k^{O(1)}, |\Sigma_L|\leq 2^{k^{O(1)}}$, $\Delta=k^{O(1)}$.
	By using~\Cref{lem:lc-hypergraph}, we can reduce the Label Cover instance to a Balanced Hypergraph Coloring instance $H=(V',E')$ with $|V'|\leq n^{1+o(1)}2^{2^{k^{O(1)}}}$ and $|E'|$ at most $ n^{1+o(1)}2^{2^{k^{O(1)}}} $. We set $k = (\log \log n)^{\Omega(1)}$ to obtain $|V'|=O(n^2), |E'|=O(n^2)$. 
	
	Dinur and Steurer~\cite{DinurS14} gave an improvement to~\cite{MR10}--in the new Label Cover hardness, the alphabet size $|\Sigma_L|$ can be taken to be $2^{\left( \frac{1}{\epsilon}\right)^\gamma}$ for \textit{every} constant $\gamma>0$. Using this improved Label Cover hardness, we can set $k = (\log \log n)^{C}$ for any constant $C\geq 1$ in the hardness of Balanced Hypergraph Coloring. 
\end{proof}

The proof of~\Cref{thm:main-vs-np} follows immediately from~\Cref{thm:bhc-nearlinear} and~\Cref{lem:coloring-vs}. 

\medskip 

Finally, we remark that if the structured graph version of the Projection Games Conjecture~\cite{Moshkovitz15} holds,~\Cref{lem:lc-hypergraph} and~\Cref{lem:coloring-vs} together prove that $d$-dimensional Vector Scheduling is NP-hard to approximate within a factor of $(\log d)^{\Omega(1)}$. 

\subsection{Proof of~\Cref{lem:lc-hypergraph}}
\label{subsection:lc-hypergraph}
We follow the standard Label Cover-Long Code framework--see e.g.,\cite{ABP20}.

\smallskip \noindent \textbf{Reduction.}
For ease of notation, let $n = |\Sigma_L|$. 
For every node $v\in L$ of the Label Cover instance, we have a set of $k^{n}$ vertices denoted by $f_v=\{v\}\times [k]^{n}$. The vertex set of the hypergraph is $V'=\bigcup_{v \in L}f_v$.

For every $u \in R$, and $k$ distinct neighbors of $u$, $v_1, v_2, \ldots, v_k \in L$ with projection constraints $\pi_i:[\Sigma_L]\rightarrow[\Sigma_R], i \in [k]$, consider the set of $k^2$ vectors $\textbf{x}^{i,j}$ for $i\in [k], j \in [k]$ which satisfy the following: For every $\beta \in \Sigma_R$, and for all $\alpha_1, \alpha_2, \ldots, \alpha_k \in \Sigma_L$ such that $\pi_i(\alpha_i)=\beta$ for all $i \in [k]$, we have 
	\begin{equation}
		\label{eq:lc-constraint}
		\left| \{ (i,j) | \textbf{x}^{i,j}_{\alpha_i} =p \}\right| \leq 2k \, \, \forall p \in [k] 
	\end{equation}
For every such set of $k^2$ vectors, we add the edge $\{(v_i, \textbf{x}^{i,j}):1 \leq i,j \leq k\}$ to $E'$.
We can observe that $|V'|\leq |L|k^{|\Sigma_L|}$ and 
\[
|E'|\leq |R|\binom{\Delta}{k}\binom{k^{|\Sigma_L|}}{k}^k\leq |R|\Delta^kk^{|\Sigma_L|k^2}.
\]

\smallskip \noindent \textbf{Completeness.}
Suppose that there exists an assignment $\sigma : V \rightarrow \Sigma$ that satisfies all the constraints of the Label Cover instance $G$. We color the set of vertices $f_v$ in the long code corresponding to the vertex $v \in L$ with the dictator function on the coordinate $\sigma(v)$ i.e. for every $\textbf{x} \in f_v$, we assign the color 
\[
c\left(\{v,\textbf{x}\}\right)=\textbf{x}_{\sigma(v)}
\]

We can observe that this coloring satisfies the property that in every edge $e \in E'$, each color appears at most $2k$ times. 

\smallskip \noindent \textbf{Soundness.}
Suppose that there is a proper $k$-coloring $c:V'\rightarrow [k]$ of the hypergraph $H$ i.e. in every edge $e=\{v_1, v_2, \ldots, v_{k^2}\}$, we have 
\[
\left| \{ c(v_1), c(v_2), \ldots, c(v_{k^2})\} \right| >1
\]
Our goal is to prove that there is a labeling to the Label Cover instance that satisfies at least $\epsilon = \frac{1}{k^8}$ fraction of constraints.

We need the following lemma proved by Austrin, Bhangale, Potukuchi~\cite{ABP20} using a generalization of Borsuk-Ulam theorem.
\begin{lemma}(Theorem $5.2$ of~\cite{ABP20})
	\label{lem:exceptional-cooridinates}
	For every odd prime $k$ and $n \geq k^3$, in any $k$-coloring of $[k]^n$, $c:[k]^n \rightarrow [k]$, there is a set of $k$ vectors $\textbf{x}^1, \textbf{x}^2, \ldots, \textbf{x}^k$ that are all assigned the same color such that 
	\[
	\{ \textbf{x}^1_i,\textbf{x}^2_i, \ldots, \textbf{x}^k_i  \} = [k]
	\]
	for at least $n-k^3$ distinct coordinates $i \in [n]$.
\end{lemma}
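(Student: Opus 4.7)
The plan is to prove the lemma via a $\mathbb{Z}_k$-equivariant analog of the Borsuk--Ulam theorem (Volovikov / Dold); the primality of $k$ is essential, since it guarantees that $\mathbb{Z}_k$ admits free actions with a well-behaved equivariant index. The central object is the diagonal $\mathbb{Z}_k$-action on $[k]^n$ defined by $\sigma \cdot (x_1, \ldots, x_n) = (x_1 + \sigma, \ldots, x_n + \sigma) \pmod{k}$. This action is free, and every orbit is a set of $k$ vectors that is rainbow in every one of the $n$ coordinates, so a monochromatic orbit would yield the conclusion of the lemma with zero exceptional coordinates. The content of the lemma is that even when $c$ fragments every orbit across multiple color classes, some monochromatic $k$-tuple still projects onto $[k]$ in all but at most $k^3$ coordinates.

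The key steps I would follow are as follows. First, I would model the relevant configurations by a free $\mathbb{Z}_k$-CW complex $X$ of large $\mathbb{Z}_k$-index; the natural candidate is the $n$-fold join $(\mathbb{Z}_k)^{\ast n}$, whose $\mathbb{Z}_k$-index equals $n-1$. Second, from the coloring $c$, I would construct a $\mathbb{Z}_k$-equivariant map $\Phi: X \to Y$, where $Y$ is engineered so that each point encodes a colored $k$-tuple together with the set of coordinates on which it is rainbow; the ``bad'' region $Y_0 \subseteq Y$ corresponds to $k$-tuples failing the conclusion, i.e., rainbow in fewer than $n - k^3$ coordinates. Third, I would show that if the lemma fails, then $\Phi(X) \subseteq Y_0$, and compute that the $\mathbb{Z}_k$-index of $Y_0$ is at most roughly $k^3 - 1$. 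Volovikov's theorem then forbids such an equivariant map once $n - 1 > k^3 - 1$, contradicting the failure assumption for $n \geq k^3$.

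The main obstacle I anticipate is calibrating the target space $Y$ so that its bad region has $\mathbb{Z}_k$-index tight to $k^3$, rather than some weaker polynomial bound. Encoding the rainbow-coordinate count into a $\mathbb{Z}_k$-equivariant simplicial structure while preserving a clean index calculation is delicate: it typically requires an explicit equivariant cellular decomposition or a spectral-sequence argument, and the sharp exponent $k^3$ in the lemma reflects exactly the optimization of this bookkeeping. This is the step I expect to be most technically demanding, and where the strongest quantitative form of the $\mathbb{Z}_k$-Borsuk--Ulam machinery is needed; a purely combinatorial Tucker-type substitute seems unlikely to deliver the sharp bound.
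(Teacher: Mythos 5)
The paper does not prove this lemma; it imports it verbatim as Theorem~5.2 of~\cite{ABP20} and uses it as a black box in the soundness analysis of~\Cref{lem:lc-hypergraph}. So there is no internal proof to compare against, and the relevant benchmark is the proof in~\cite{ABP20}. Your high-level reading of that proof is accurate: the argument is indeed a $\mathbb{Z}_k$-equivariant Borsuk--Ulam argument (Dold/Volovikov type), the primality of $k$ is exactly what makes the free $\mathbb{Z}_k$-action and the index theory behave, and the cyclic-shift action on $[k]^n$ with its everywhere-rainbow orbits is the right starting picture. Your identification of the $n$-fold join $(\mathbb{Z}_k)^{\ast n}$ as the domain complex of $\mathbb{Z}_k$-index $n-1$ is also the standard and correct choice.

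That said, what you have written is a proof \emph{plan}, not a proof, and the gap you flag yourself is exactly the part that carries all the content. You have not said what $Y$ is, how the discrete coloring $c:[k]^n\to[k]$ is turned into a \emph{continuous} $\mathbb{Z}_k$-equivariant map $\Phi:(\mathbb{Z}_k)^{\ast n}\to Y$ (this interpolation step is where the combinatorics of $c$ actually enters and is not a formality), how the ``bad'' subcomplex $Y_0$ encodes ``more than $k^3$ non-rainbow coordinates,'' or why $\mathrm{ind}_{\mathbb{Z}_k}(Y_0)\le k^3-1$. The exponent $3$ in $k^3$ is not a generic artifact of equivariant index bookkeeping; it comes out of a specific construction, and without exhibiting $Y$ and computing its index you have no way to see where $k^3$ (as opposed to, say, $k^2$ or $k^4$) comes from. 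In short: right machinery, right ambient picture, but the decisive construction and index computation -- the entire quantitative core of Theorem~5.2 of~\cite{ABP20} -- is named as an anticipated difficulty rather than resolved. If you want to complete this, the next concrete step is to write down the target complex and the map explicitly and verify the index bound, following~\cite{ABP20}; the rest of your outline should then assemble correctly.
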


Using this lemma, for every $v \in L$, we can identify a set of vectors $\textbf{x}^{v,1},\textbf{x}^{v,2},\ldots,\textbf{x}^{v,k} \in f_v$ such that all these vectors have the same color  i.e. $c(\{v,\textbf{x}^{v,i}\})=c'(v)$ for all $v \in L, i \in [k]$ for some function $c':L\rightarrow [k]$.
Furthermore, there are a set of coordinates $S(v)\subseteq [n]$ with $|S(v)|\leq k^3$ such that 
\[
\{ \textbf{x}^{v,1}_i , \textbf{x}^{v,2}_i, \ldots, \textbf{x}^{v,k}_i \} = [k]
\]
for every $i \in [n]\setminus S(v)$.

For a set $S \subseteq \Sigma_L$ and a function $\pi : \Sigma_L \rightarrow \Sigma_R$, we use $\pi(S)$ to denote the set $\{ \pi(i) : i \in S \}$. 
We now prove a key lemma that helps in the decoding procedure.
\begin{lemma}
	\label{lem:disjoint-union}
	Let $u \in R$ be a node on the right side of the Label Cover instance. There are a set of labels $S(u)\subseteq \Sigma_R$ such that $|S(u)|\leq k^5$, and for every $v\in L$ that is a neighbor of $u$ with projection constraint $\pi :\Sigma_L \rightarrow \Sigma_R$, we have $S(u)\cap \pi(S(v))\neq \phi$. 
\end{lemma}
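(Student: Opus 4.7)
The plan is to construct $S(u)$ by a greedy hitting-set procedure. Starting from $S(u) = \emptyset$, while some neighbor $v \in N(u)$ of $u$ satisfies $S(u) \cap \pi_{uv}(S(v)) = \emptyset$, pick such a $v$ and add every element of $\pi_{uv}(S(v))$ to $S(u)$. By \Cref{lem:exceptional-cooridinates} each $S(v)$ has size at most $k^3$, so each iteration grows $S(u)$ by at most $k^3$ elements, and once the procedure halts the invariant $S(u) \cap \pi_{uv}(S(v)) \neq \emptyset$ holds for every neighbor $v$ of $u$. The whole task thus reduces to bounding the number of iterations by $k^2$, which yields $|S(u)| \leq k^2 \cdot k^3 = k^5$.

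For the bound, I assume toward a contradiction that the greedy procedure performs at least $k^2 + 1$ iterations, thereby producing distinct neighbors $v_1, \ldots, v_{k^2+1}$ of $u$ whose images $\pi_{uv_j}(S(v_j))$ are pairwise disjoint. The decoded coloring $c' : L \to [k]$ uses only $k$ values, so by pigeonhole at least $k$ of the selected neighbors share a common color $q \in [k]$; after relabeling, assume $c'(v_1) = \cdots = c'(v_k) = q$ and the images $\pi_{uv_1}(S(v_1)), \ldots, \pi_{uv_k}(S(v_k))$ remain pairwise disjoint. The plan is to exhibit a hyperedge of $H$ supported on $v_1, \ldots, v_k$ whose $k^2$ vertices are all colored $q$ by $c$, contradicting $c$ being a proper $k$-coloring of $H$.

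Let $\mathbf{x}^{i, l} := \mathbf{x}^{v_i, l}$ be the vectors furnished by \Cref{lem:exceptional-cooridinates} for $i, l \in [k]$; each vertex $(v_i, \mathbf{x}^{i, l})$ is colored $q$, so it suffices to check that $\{(v_i, \mathbf{x}^{i, l}) : i, l \in [k]\}$ is an edge of $H$, i.e.\ the $\mathbf{x}^{i, l}$ satisfy \eqref{eq:lc-constraint}. Fix any $\beta \in \Sigma_R$ and any preimages $\alpha_i \in \pi_{uv_i}^{-1}(\beta)$. Because $\alpha_i \in S(v_i)$ forces $\beta \in \pi_{uv_i}(S(v_i))$, pairwise disjointness restricts this to at most one index $i_0 \in [k]$; for every other $i$ the ABP guarantee gives $|\{l : \mathbf{x}^{i, l}_{\alpha_i} = p\}| = 1$ for each $p \in [k]$, while the single exceptional index contributes at most $k$. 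Summing yields $|\{(i, l) : \mathbf{x}^{i, l}_{\alpha_i} = p\}| \leq (k-1) + k = 2k - 1 \leq 2k$, so \eqref{eq:lc-constraint} holds, the monochromatic hyperedge exists in $E'$, and the contradiction is complete.

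The main obstacle is the count in the final step: threading pairwise disjointness of the images through \eqref{eq:lc-constraint} so that at most one ``bad'' index $i_0$ can occur for any given $\beta$. This is exactly what the slack factor $2k$ in \eqref{eq:lc-constraint} is calibrated for. The two multiplicative losses, $k^3$ from the ABP exceptional-set size and an additional $k$ from the color pigeonhole, combine to give the stated bound $|S(u)| \leq k^5$.
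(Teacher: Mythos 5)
Your proof is correct and follows the paper's argument almost exactly: the greedy hitting-set construction of $S(u)$ is a procedural restatement of the paper's choice of a maximal pairwise-disjoint subfamily of the images $\pi_i(S(v_i))$, and the contradiction via pigeonhole plus a monochromatic hyperedge satisfying \Cref{eq:lc-constraint} is identical, including the count $(k-1) + k = 2k-1 \leq 2k$. The only cosmetic difference is that you bound the number of pairwise-disjoint images by $k^2$ rather than the paper's slightly sharper $k(k-1)$, but both give $|S(u)| \leq k^5$.
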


\begin{proof}
	Fix a node $u \in R$ on the right side of the Label Cover instance. 
	Let $v_1, v_2, \ldots, v_l \in L$ be the neighbors of $u$ in the Label Cover instance corresponding to the projection constraints $\pi_1, \pi_2, \ldots, \pi_l$ respectively. 
	As $|S(v_i)|\leq k^3$ for all $i \in [l]$, and the constraints $\pi_i$ are projections, we have $|\pi_i(S(v_i))|\leq k^3$ for all $i \in [l]$. Among these $l$ subsets $\pi_i(S(v_i))$ of $\Sigma_R$, let the maximum number of pairwise disjoint subsets be denoted by $l'$. Without loss of generality, we can assume that $\mathcal{S}=\{\pi_i(S(v_i)) : i \in [l']\}$ is a pairwise disjoint family of subsets. 
	
	We define the set $S(u)$ as follows: 
	\[
	S(u)= \bigcup_{i \in [l']}\pi_i(S(v_i))
	\]
	As $\mathcal{S}$ is a family of maximum pairwise disjoint subsets, we have $S(u)\cap \pi_i(S(v_i))\neq \phi$ for all $i \in [l]$. Our goal is to bound the size of $S(u)$, which we achieve by bounding $l'$. 
	
	We claim that $l' \leq  k(k-1)$. Suppose for contradiction that $l' > k(k-1)$. This implies that there are $l' > k(k-1)$ nodes $v_1, v_2, \ldots, v_{l'}$ all adjacent to $u$ such that $\pi_i(S(v_i)), i \in [l']$ are all pairwise disjoint. Thus, there exists a color $\ell \in [k]$ and a set of $k$ nodes $w_1, w_2, \ldots, w_k$ adjacent to $u$ corresponding to the projection constraints $\pi'_1, \pi'_2, \ldots, \pi'_k$ such that $c'(w_i)=\ell$ for all $i \in [k]$, and the $k$ sets $\pi'_i(S(w_i))$ are pairwise disjoint. 
	
	Using this, we can construct a set of vectors $\textbf{x}^{i,j}, 1\leq i,j \leq k$ defined as $\textbf{x}^{i,j}=\textbf{x}^{w_i,j}$ which satisfy the following properties: 
	\begin{enumerate}
		\item All these vectors are colored the same: 
		\[
		c(\{w_i,\textbf{x}^{i,j}\})=\ell \,\, \forall 1 \leq i,j \leq k
		\]
		\item For every $i \in [k]$, 
		\[
		\{ \textbf{x}^{i,1}_{i'} , \textbf{x}^{i,2}_{i'}, \ldots, \textbf{x}^{i,k}_{i'} \} = [k]
		\]
		for every $i' \in [n]\setminus S(w_i)$.
	\end{enumerate}
	We claim that these set of vectors satisfy the condition in~\Cref{eq:lc-constraint}. Fix a $\beta \in \Sigma_R$, and $\alpha_1, \alpha_2, \ldots, \alpha_k \in \Sigma_L$ such that $\pi'_i(\alpha_i)=\beta$ for all $i \in [k]$. As the family of subsets $\pi'_i(S(w_i))$ is a pairwise disjoint family, we can infer that there exists at most one $i\in [k]$ such that $\alpha_i \in S(w_i)$. Note that if $\alpha_i \notin S(w_i)$, then 
	\[
	\{ \textbf{x}_{\alpha_i}^{i,j} : j \in [k] \} = [k].
	\]
	Thus, we have 
	\[
	\left| \{ (i,j) | \textbf{x}^{i,j}_{\alpha_i} =p \}\right| \leq 2k \, \, \forall p \in [k].
	\]
	Thus, the set of vectors $\{ (w_i, \textbf{x}^{i,j}): 1\leq i,j\leq k \}$ is indeed an edge of $E'$. As all these vectors are colored the same color $\ell$, we have arrived at a contradiction to the fact that $c$ is a proper $k$-coloring of $H$. 
	
	Hence, we can conclude that $l'\leq k(k-1)$, and thus, $|S(u)|\leq k(k-1)k^3<k^5$. 
\end{proof} 

Now, consider the labeling $\sigma : L \rightarrow \Sigma_L,$ where $\sigma(v), v \in L$ is chosen uniformly at random from $S(v)$. Similarly, let $\sigma : R \rightarrow \Sigma_R$ is chosen uniformly at random from $S(u), u \in R$. Using~\Cref{lem:disjoint-union}, we can infer that for every edge $e=(v,u)$ in the Label Cover, this labeling satisfies the edge $e$ with probability at least $\frac{1}{|S(v)||S(u)|}\geq \frac{1}{k^8}$. By linearity of expectation, this labeling satisfies at least $\frac{1}{k^8}$ fraction of the constraints in expectation. Hence, with positive probability, the labeling satisfies at least $\frac{1}{k^8}$ fraction of the constraints. This concludes the proof of soundness that if $H$ has a proper $k$ coloring, then there exists a labeling to $G$ that satisfies at least $\frac{1}{k^8}$ fraction of the constraints.

\section{Vector Bin Covering}
\label{sec:vbc}

\subsection{Hardness of Vector Bin Covering via Rainbow Coloring}

As is the case with the Vector Scheduling problem, the hard instances for Vector Bin Covering are when the vectors are from $\{0,1\}^d$. In this setting, the Vector Bin Covering problem is closely related to the hypergraph rainbow coloring problem. 
A hypergraph $H=(V,E)$ is said to be $k$-rainbow colorable if there is an assignment of $k$ colors to the vertices of $H$ such that in every edge, all the $k$ colors appear. 
When $k=2$, it is equivalent to the standard $2$-coloring of hypergraphs. 
Unlike the usual (hyper)graph coloring, rainbow coloring gets harder with larger number of colors. 

In the approximate rainbow coloring problem, given a hypergraph that is promised to have a rainbow coloring with a large number of colors, the goal is to find a coloring in polynomial time using fewer number of colors. 
More formally, the computational problem we study is the following.
\begin{definition}(Approximate Rainbow Coloring)
	In the approximate rainbow coloring problem, the input is a hypergraph $H=(V,E)$ and a parameter $k\geq 2$. The objective is to distinguish between the following: 
	\begin{enumerate}
		\item The hypergraph $H$ is $k$-rainbow colorable. 
		\item The hypergraph $H$ has no valid $2$-coloring. 
	\end{enumerate}
\end{definition}

We now give a simple reduction from approximate rainbow coloring to Vector Bin Covering. 
\begin{lemma}
	\label{lem:rainbow-vbc}
	Given a hypergraph $H=(V,E)$ and a parameter $k$, there is a polynomial time reduction that outputs a Vector Bin Covering instance $v_1, v_2, \ldots, v_n \in \{ 0, 1\}^d$ with $n = |V|, d=|E|$ such that 
	\begin{enumerate}
		\item (Completeness.) If $H$ is $k$-rainbow colorable, there is a partition of $[n]$ into $k$ parts $A_1, A_2, \ldots, A_k$ such that 
		\[
			\sum_{j \in A_i}v_j \geq \textbf{1}^d \,\,\forall i \in [k]	
		\]
		\item (Soundness.) If $H$ is not $2$-colorable, there is no partition of $[n]$ into $A_1, A_2$ such that 
		\[
			\sum_{j \in A_i}v_j \geq \textbf{1}^d \,\,\forall i \in [2]	
		\]
	\end{enumerate}
\end{lemma}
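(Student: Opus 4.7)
The plan is to use essentially the same construction as in the Balanced Hypergraph Coloring to Vector Scheduling reduction (\Cref{lem:coloring-vs}), exploiting the natural duality between ``every color appears in every edge'' (rainbow coloring) and ``every coordinate is covered in every bin'' (vector bin covering). Specifically, I would order the edges of $H$ as $e_1, e_2, \ldots, e_d$ with $d = |E|$, and for each vertex $i \in V$ define the indicator vector $v_i \in \{0,1\}^d$ by
\[
(v_i)_l = \begin{cases} 1 & \text{if } i \in e_l \\ 0 & \text{otherwise.} \end{cases}
\]
This is computable in polynomial time and yields an instance of Vector Bin Covering on $n = |V|$ vectors of dimension $d = |E|$.

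For completeness, given a $k$-rainbow coloring $c : V \to [k]$ of $H$, I would set $A_i = \{j \in [n] : c(j) = i\}$ for each $i \in [k]$. For any edge $e_l$ and any color $i$, rainbow-colorability guarantees a vertex $j \in e_l$ with $c(j) = i$, i.e.\ $j \in A_i$ and $(v_j)_l = 1$. Summing over $A_i$ therefore gives $\bigl(\sum_{j \in A_i} v_j\bigr)_l \geq 1$ for every coordinate $l$, which is exactly the covering condition $\sum_{j \in A_i} v_j \geq \mathbf{1}^d$.

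For soundness, I would argue the contrapositive: assume there is a partition $[n] = A_1 \sqcup A_2$ with $\sum_{j \in A_i} v_j \geq \mathbf{1}^d$ for $i \in \{1,2\}$, and define $c : V \to \{1,2\}$ by $c(j) = i$ iff $j \in A_i$. For any edge $e_l$, the coordinate-$l$ covering requirement forces both $A_1$ and $A_2$ to contain some vertex lying in $e_l$, so $e_l$ receives both colors under $c$. Hence $c$ is a proper $2$-coloring of $H$, contradicting the assumption that $H$ is not $2$-colorable.

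There is no real obstacle here; the statement is essentially a dictionary translation. The only point worth verifying carefully is that the partition obtained in the soundness direction is well defined (each vertex lies in exactly one $A_i$, which is automatic from ``partition''), and that the construction indeed yields vectors in $\{0,1\}^d$ rather than $[0,1]^d$, so that the resulting hard instances live in the integer regime highlighted at the start of \Cref{sec:vbc}.
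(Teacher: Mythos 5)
Your proposal is correct and matches the paper's proof essentially verbatim: same indicator-vector construction with $d = |E|$ coordinates, same partition-by-color-class argument for completeness, and the same contrapositive argument (reading a covering partition into two parts as a proper $2$-coloring) for soundness. Nothing to add.
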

\begin{proof}
	Let $n=|V|, d = |E|$. We order the edges $E$ as $E = \{ e_1, e_2, \ldots, e_d\}$. 
	We output a set of vectors $\mathcal{V} = \{ v_1, v_2, \ldots, v_n\}$ where each $v_i \in \{ 0,1\}^d$ is defined as follows: 
	\[
	(v_i)_j = \begin{cases}
		1& \text{if }i \in e_j \\ 
		0& \text{ otherwise.}
	\end{cases}
	\]
	We analyze this reduction: 
	\begin{enumerate}
		\item (Completeness.) Suppose that the hypergraph $H$ has a rainbow coloring with $k$ colors $f : V \rightarrow [k]$. We partition $[n]$ into $k$ parts $A_1, A_2, \ldots, A_k$ such that 
		\[
		A_i = \{ j \in [n] : f(j)=i\}
		\]
		Consider an arbitrary integer $i \in [k]$. Note that for every edge $e$ in $H$, $e \cap A_i \neq \phi$. 
		Thus, 
		\[
		\sum_{j \in A_i} v_j \geq \textbf{1}^d
		\]
		\item (Soundness.) Suppose that the hypergraph $H$ has no proper coloring with $2$ colors. Then, we claim that there is no partition of $[n]$ into two parts $A_1, A_2$ such that 
		\[
		\sum_{j \in A_i} v_j \geq \textbf{1}^d \,\, \forall i \in [2]
		\]
		Suppose for contradiction that there exists $A_1, A_2$ with the above property. Consider the coloring of the hypergraph $f: V \rightarrow [2]$ as 
		\[
		f(v) = \begin{cases}
			1& \text{if }v \in A_1 \\ 
			2& \text{if }v \in A_2
		\end{cases}
		\]
		Consider an arbitrary edge $e_l, l \in [d]$ of the hypergraph $H$. As $\sum_{j \in A_i}(v_j)_l \geq 1$ for all $i \in [2]$, there exist $v_1, v_2 \in e_l$ such that $v_1 \in A_1, v_2 \in A_2$. Thus, the coloring $f$ is a proper $2$ coloring of the hypergraph $H$, a contradiction. \qedhere
		
	\end{enumerate}
	
\end{proof}
We combine this reduction with the hardness of approximate rainbow coloring to prove the hardness of Vector Bin Covering, namely~\Cref{thm:main-vbc}. 
Note that the dimension of the resulting vectors in the Vector Bin Covering instance $\mathcal{V}$ is equal to the number of edges $m=|E|$ of the hypergraph $H$, and the gap in the optimal Bin Covering value of $\mathcal{V}$ is equal to $k$, the number of colors. 
Hence, to obtain better inapproximability results for Vector Bin Covering that grow with $d$, our goal is to show the hardness of approximate rainbow coloring on hypergraphs with $m$ edges where the number of colors $k$ is as large a function of $m$ as possible. Towards this, we prove that it is NP-hard to $2$-color a hypergraph with $m$ edges that is promised to be rainbow colorable with $k=\Omega\left(\frac{\log m}{\log \log m}\right)$ colors. 
\begin{theorem}
	\label{thm:rainbow-coloring}
	Given a hypergraph $H$ with $m$ edges, it is NP-hard to distinguish between the following:
	\begin{enumerate}
		\item (Completeness) $H$ is $k$-rainbow colorable. 
		\item (Soundness) $H$ is not $2$-colorable. 
	\end{enumerate}
	where $k=\Omega\left( \frac{\log m}{\log \log m}\right)$. 
\end{theorem}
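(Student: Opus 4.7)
The plan is to give a polynomial-time reduction from the NP-hard Label Cover with no gap (Theorem~\ref{thm:lc-hardness}), where $|\Sigma_L|=|\Sigma_R|=6$, to the $k$-rainbow-colorable vs.\ not-$2$-colorable promise problem. Because the target soundness is the weakest possible ($2$-coloring), the reduction can start from a no-gap Label Cover, which is what keeps the edge count single-exponential in $k$ — this is the ``factor of $2$'' loss compared to the existing hardness results, which use polynomial-soundness Label Cover and therefore incur doubly-exponential edge counts.

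\textbf{Construction and completeness.} I would follow the Label Cover–Long Code template. For each $v\in L$ introduce a cloud $f_v=\{v\}\times[k]^{\Sigma_L}$ of $k^{|\Sigma_L|}$ vertices, and set $V(H)=\bigcup_{v\in L}f_v$. For each $u\in R$, each $k$-tuple of LC edges incident to $u$, $(u,v_1),\dots,(u,v_k)$ (with repetition allowed) and projections $\pi_1,\dots,\pi_k$, and each tuple $(x^1,\dots,x^k)\in([k]^{\Sigma_L})^k$ satisfying the \emph{covering property}
\[
\forall\,\beta\in\Sigma_R,\ \forall\,(\alpha_1,\dots,\alpha_k)\in\prod_{i=1}^k\pi_i^{-1}(\beta):\quad \{x^i_{\alpha_i}\}_{i=1}^k=[k],
\]
add the $k$-uniform hyperedge $\{(v_i,x^i):i\in[k]\}$ to $H$. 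For the completeness direction, given a satisfying labeling $\sigma$ of the Label Cover, color $(v,x)$ by $x_{\sigma(v)}$. For any hyperedge, setting $\beta=\sigma(u)$ and $\alpha_i=\sigma(v_i)$ gives $\pi_i(\alpha_i)=\beta$, so the covering property yields $\{x^i_{\alpha_i}\}_i=[k]$ and the edge is rainbow colored. The edge count is $|E(H)|\le|R|\cdot\Delta^k\cdot k^{|\Sigma_L|\cdot k}=k^{O(k)}\cdot\mathrm{poly}(n)$ since $\Delta,|\Sigma_L|=O(1)$; choosing $k=\Theta(\log n/\log\log n)$ yields $m=\mathrm{poly}(n)$ and $k=\Omega(\log m/\log\log m)$, as required.

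\textbf{Soundness and main obstacle.} Given any valid $2$-coloring $c$ of $H$, the goal is to extract a \emph{fully} satisfying labeling of $G$ (not merely an $\epsilon$-fraction-satisfying one, since $G$ has no gap), contradicting unsatisfiability. For each $v$, let $c_v:[k]^{\Sigma_L}\to\{0,1\}$ be the restriction and $b_v\in\{0,1\}$ its majority color, so $|c_v^{-1}(b_v)|\ge k^{|\Sigma_L|}/2$. The argument is by contradiction: if no satisfying labeling of $G$ existed, then at some $u\in R$ one could locate $k$ neighbors $v_1,\dots,v_k$ with a common majority color $b$ and vectors $x^i\in c_{v_i}^{-1}(b)$ satisfying the covering property, producing a monochromatic hyperedge and violating $2$-colorability. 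The technical crux is locating such vectors in the large monochromatic classes while simultaneously respecting the LC projection structure. Lemma~\ref{lem:exceptional-cooridinates} of Austrin–Bhangale–Potukuchi is the natural tool, but its hypothesis requires cloud dimension at least $k^3$, whereas ours is the constant $|\Sigma_L|=6$. The main obstacle is therefore either (i) to prove an analogous Borsuk-Ulam-type statement tailored to $2$-colorings of low-dimensional hypercubes, exploiting that only $2$ colors rather than $k$ appear, or (ii) to cheaply enlarge $|\Sigma_L|$ (by a one-shot composition that preserves the no-gap structure) so that the ABP lemma applies directly, while keeping the total edge count single-exponential in $k$.
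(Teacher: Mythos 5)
Your construction-and-completeness half is on the right track: reduce from the no-gap Label Cover of Theorem~\ref{thm:lc-hardness}, build a Long Code gadget, decode via dictators, and get $m=k^{O(k)}\cdot\mathrm{poly}(n)$ so that $k=\Theta(\log n/\log\log n)=\Omega(\log m/\log\log m)$. But you have correctly flagged, and not resolved, the one step that actually carries the proof: how to decode a globally satisfying labeling from a $2$-coloring when the cloud dimension is the constant $|\Sigma|=6$ rather than $\geq k^3$. This is a genuine gap, and the tool you reach for --- Lemma~\ref{lem:exceptional-cooridinates} (ABP's Theorem 5.2, about $k$-colorings of $[k]^n$ with $n\geq k^3$) --- is not the one the paper uses here.

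The paper takes a different route for both the gadget structure and the decoding lemma. First, it puts long codes on \emph{both} sides $L\cup R$ and encodes the projection constraints by \emph{folding} (equality/identification constraints between $K_u$ and $K_v$), so that every hyperedge lies inside a single cloud $K_v$ and is $2k$-uniform: a $2k$-tuple $\{\mathbf{v}^1,\dots,\mathbf{v}^{2k}\}\subseteq K_v$ forms an edge iff $\{\mathbf{v}^j_i: j\in[2k]\}=[k]$ for all $i\in[n]$. This PCSP-style reduction avoids having to coordinate $k$ different clouds across a right-vertex $u$, which is the structural reason your variant needs the high-dimension ABP lemma. Second, and more importantly, the decoding lemma it uses is not about locating a large set of ``exceptional coordinates'' in a $k$-coloring; it is the much cleaner statement (Lemma~\ref{lem:gadget}, proved via Lemma~\ref{lem:abp20}, which says $H^n_{\lfloor k/2\rfloor}[k]$ is not $2$-colorable) that any $f:[k]^n\to\{0,1\}$ with the two-coloring property is \emph{$1$-fixing}: there is a single coordinate $\ell$ and values $\alpha,\beta$ with $f(\mathbf{x})=0$ whenever $\mathbf{x}_\ell=\alpha$ and $f(\mathbf{x})=1$ whenever $\mathbf{x}_\ell=\beta$. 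Crucially, this holds for \emph{every} $n\geq 1$; there is no $n\geq k^3$ hypothesis, which is exactly what makes a constant-alphabet no-gap start possible. The fixed coordinate $\ell$ is then taken as the label $\sigma(v)$, and a standard folding argument shows these labels satisfy every Label Cover edge, yielding a contradiction with unsatisfiability (with no need to settle for only an $\epsilon$-fraction). Your option (i) --- ``prove an analogous Borsuk--Ulam-type statement tailored to $2$-colorings'' --- is indeed the right instinct, but it is not something you would need to invent: the required statement is exactly~\cite{ABP20}'s two-colorability result for $H^n_{\lfloor k/2\rfloor}[k]$, and the nontrivial extra step is extracting $1$-fixing from it. Without that lemma, the soundness of your construction remains unproved, so as written the proposal has a real gap in the soundness direction.
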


We defer the proof of~\Cref{thm:rainbow-coloring} to~\Cref{subsection:rainbow-coloring}.

\medskip 

We now prove the hardness of Vector Bin Covering using~\Cref{thm:rainbow-coloring}. 
\begin{proof}[Proof of Theorem~\ref{thm:main-vbc}]
Using~\Cref{thm:rainbow-coloring} combined with the reduction in~\Cref{lem:rainbow-vbc}, we get that the following problem is NP-hard. Given a set of $n$ vectors $v_1, v_2, \ldots, v_n \in \{0,1\}^d$, distinguish between
\begin{enumerate}
	\item $\mathcal{V}$ can be partitioned into $k=\Omega\left( \frac{\log d}{\log \log d}\right)$ parts such that in each part, the sum of vectors is at least $1$ in every coordinate. 
	\item $\mathcal{V}$ cannot be partitioned into $2$ parts such that in each part, the sum of vectors is at least $1$ in every coordinate. In other words, the maximum number of parts into which $\mathcal{V}$ can be partitioned such that in each part, the sum of vectors is at least $1$ in every coordinate is equal to $1$. 
\end{enumerate}
Thus, it is NP-hard to approximate $d$-dimensional Vector Bin Covering within $k = \Omega\left( \frac{\log d}{\log \log d}\right)$. 
\end{proof}

\subsection{Proof of~\Cref{thm:rainbow-coloring}}
\label{subsection:rainbow-coloring}
Our proof follows by viewing the hypergraph rainbow coloring problem as a promise constraint satisfaction problem(PCSP) and analyzing its polymorphisms~\cite{AGH17,BG16,BKO19}. The idea is to prove that the polymorphisms have a small number of ``important'' coordinates which can then be decoded in the Label Cover-Long Code framework. For the case of the above rainbow coloring PCSP, we prove that the polymorphisms are $1$-fixing in that there is a single coordinate which when set to a certain value fixes the value of the function. This characterization then implies the hardness of the approximate rainbow coloring.

For ease of readability, we skip defining polymorphisms formally, and instead present the proof as a simple gadget reduction from Label Cover.  
We first need a definition. 
\begin{definition}($1$-fixing~\cite{BG16,GS20})
	A function $f:[k]^n \rightarrow \{0,1\}$ is said to be $1$-fixing if there exists an index $\ell \in [n]$ and values $\alpha, \beta \in [k]$ such that 
	\[
	f(\textbf{x})= 0 \, \forall \textbf{x} : \textbf{x}_\ell = \alpha \quad\text{and}\quad f(\textbf{x})= 1 \,\forall \textbf{x} : \textbf{x}_\ell = \beta 
	\]
\end{definition}
In the analysis of the gadget, we need a definition and a lemma from~\cite{ABP20}.
\begin{definition}(The hypergraph $H_r^n[k]$)
	The hypergraph $H_r^n[k] =(V,E)$ is a $k$-uniform hypergraph with vertex set as the set of $n$-dimensional vectors over $[k] $ i.e. $V=[k]^n$. A set of $k$ vectors $\textbf{v}^1, \textbf{v}^2, \ldots, \textbf{v}^k$ form an edge of the hypergraph if 
	\[
	\sum_{i=1}^n \left|[k]\setminus \{ \textbf{v}^j_i : j \in [k] \} \right| \leq r
	\]
\end{definition}
\begin{lemma}
	\label{lem:abp20}
	For every $k \geq 2$, the hypergraph $H_{\lfloor \frac k2 \rfloor}^n[k]$ is not $2$-colorable.
\end{lemma}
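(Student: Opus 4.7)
The plan is to argue by contradiction using a Borsuk--Ulam--type topological obstruction, in the spirit of~\cite{ABP20}. Assume $c \colon [k]^n \to \{0,1\}$ is a proper 2-coloring of $H_{\lfloor k/2\rfloor}^n[k]$, so that every $k$-subset with total missing mass at most $\lfloor k/2 \rfloor$ contains both colors. I would use $c$ to manufacture a continuous $\mathbb{Z}_2$-equivariant map from a free $\mathbb{Z}_2$-space to one of strictly smaller topological index, which is forbidden by Borsuk--Ulam.

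First I would set up the topological model. Take the $n$-fold product $X = (\Delta^{k-1})^n$, whose vertex set is naturally $[k]^n$; a point of $X$ is an $n$-tuple of probability distributions on $[k]$ and so corresponds to a product distribution on $[k]^n$. Equip $X$ with a free $\mathbb{Z}_2$-action by swapping an arbitrary pair of colors (say $1\leftrightarrow 2$) in every coordinate and deleting the subcomplex of fixed points, yielding a free $\mathbb{Z}_2$-complex of known connectivity. Now, given a point $x \in X$, build a $\mathbb{Z}_2$-equivariant map $\Phi \colon X \to \mathbb{R}^d$ by integrating the signed indicator $(-1)^{c(v)}$ against the natural measure supported on the ``closest'' $k$-configurations determined by $x$ and its symmetrizations under coordinatewise color permutations. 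The main claim to verify is: if $\Phi(x)=0$ at some $x$, then the support of the measure must contain $k$ vectors of the same color whose joint coordinate images cover $[k]$ in all but at most $\lfloor k/2 \rfloor$ slots — that is, a monochromatic edge of $H_{\lfloor k/2\rfloor}^n[k]$, contradicting properness of $c$. Hence $\Phi$ never vanishes, and radial projection produces the forbidden $\mathbb{Z}_2$-equivariant map into $S^{d-1}$.

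The main obstacle will be the calibration of the slack parameter $\lfloor k/2 \rfloor$ with the target dimension $d$: the edge threshold $\lfloor k/2 \rfloor$ must correspond exactly to the codimension needed for Borsuk--Ulam to give the contradiction — too much slack would block the argument, too little would prevent extracting a genuine edge from a zero of $\Phi$. This is directly analogous to the bound $k^3$ in Lemma~\ref{lem:exceptional-cooridinates} for the $k$-coloring regime; in the 2-coloring regime the tight threshold drops to $\lfloor k/2 \rfloor$, reflecting that each missing color in a coordinate can be absorbed by one half of the $\mathbb{Z}_2$-symmetric pair swap. Small values of $n$ fall outside the range where the topological argument carries content and would need a separate direct pigeonhole treatment on the color classes of $[k]^n$.
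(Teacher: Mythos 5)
First, a point of reference: the paper does not prove \Cref{lem:abp20} at all --- it is quoted verbatim from~\cite{ABP20} (``we need a definition and a lemma from~\cite{ABP20}''), so there is no in-paper proof to compare against and your argument has to stand on its own. As written, it does not: it is a plan whose decisive steps are all deferred. The most serious gap is the one you yourself flag as ``the main obstacle'': the target dimension $d$, the index/connectivity of the free $\mathbb{Z}_2$-space, and their calibration against the slack $\lfloor k/2\rfloor$ are never specified, and this calibration \emph{is} the lemma. Note that the threshold is exactly tight --- for $r\le \lfloor k/2\rfloor-1$ the hypergraph $H^n_r[k]$ \emph{is} $2$-colorable via $c(\textbf{x})=[\textbf{x}_1\le \lceil k/2\rceil]$, since an edge misses too few values in coordinate $1$ to avoid either half of $[k]$ --- so there is no room for an argument that loses even an additive constant.

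Two further problems are structural rather than quantitative. (i) Your $\mathbb{Z}_2$-action permutes two symbols of the domain alphabet $[k]$, but an arbitrary coloring $c$ has no symmetry under that action, so the map $\Phi$ built by integrating $(-1)^{c(v)}$ will not be equivariant; in Borsuk--Ulam colorability proofs (Lov\'asz--Kneser, or the $\mathbb{Z}_p$-Tucker argument behind \Cref{lem:exceptional-cooridinates}) the group acts on target coordinates indexed by the \emph{colors} of the coloring, not on the domain. (ii) The key claim --- that a zero of $\Phi$ forces a \emph{monochromatic} edge --- is asserted but never argued, and is suspect on its face: a zero of an average of $(-1)^{c(v)}$ says the two colors are \emph{balanced} on the support, which is exactly what a proper $2$-coloring produces; to extract $k$ equal-colored vectors that additionally form an edge of $H^n_{\lfloor k/2\rfloor}[k]$ you would need the support to be a structured configuration of at least $2k-1$ vectors every $k$ of which form an edge, and no such configuration is constructed. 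Finally, the regime the paper actually needs downstream (\Cref{thm:rainbow-coloring}) is $n=|\Sigma|=6$ fixed and $k$ large; your sketch explicitly excludes small $n$ from the topological argument and defers it to an unspecified ``pigeonhole treatment,'' so even a completed version of your plan would miss the case that matters here.
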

We analyze the gadget used in our reduction.
\begin{lemma}
	\label{lem:gadget}
	Fix $k \geq 3$. Suppose $f:[k]^n \rightarrow \{ 0,1\}$ satisfies the below two-coloring property: For every $2k$ vectors $\textbf{v}^1, \textbf{v}^2, \ldots, \textbf{v}^{2k} \in [k]^n$ with 
	\[
	\{ \textbf{v}^j_i : j \in [2k] \}  = [k] \, \forall i \in [n],
	\]
	we have
	\[
	\{ f(\textbf{v}^j) : j \in [2k] \} = \{0,1\}.
	\]
	Then, $f$ is $1$-fixing. 
\end{lemma}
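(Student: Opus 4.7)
The plan is to argue by contrapositive, using \Cref{lem:abp20} to produce a structured monochromatic set and then combining it with the two-coloring hypothesis to either reach a contradiction or deduce $1$-fixing directly. The hypothesis and the conclusion are both symmetric under $f \mapsto 1 - f$, so without loss of generality the monochromatic edge I extract from \Cref{lem:abp20} may be assumed to lie in $A := f^{-1}(0)$. The case $n = 1$ is degenerate and handled immediately: applying the hypothesis to the $2k$-tuple $(1,1,2,2,\ldots,k,k)$ forces $f$ to be non-constant on $[k]$, which is precisely $1$-fixing. Henceforth assume $n \geq 2$.

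First I would apply \Cref{lem:abp20} to obtain $k$ vectors $\textbf{u}^1, \ldots, \textbf{u}^k \in A$ whose total deficiency $\sum_i \bigl(k - |\{u^j_i : j \in [k]\}|\bigr)$ is at most $\lfloor k/2 \rfloor$. Let $M$ be the set of \emph{missing} coordinate-value pairs $(i, v)$ with $v \notin \{u^j_i : j\}$, so $|M| \leq \lfloor k/2 \rfloor$. If for every $(i, v) \in M$ there exists a plug vector $\textbf{a}(i, v) \in A$ with $a(i, v)_i = v$, then the multiset $\{\textbf{u}^j\}_{j=1}^k \cup \{\textbf{a}(i, v)\}_{(i, v) \in M}$ has size at most $k + \lfloor k/2 \rfloor \leq 2k$, lies entirely in $A$, and covers every coordinate-value pair; padding with arbitrary repetitions to exactly $2k$ vectors yields a monochromatic $2k$-multiset with full rainbow coverage, contradicting the hypothesis. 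Hence some pair $(i_0, v_0) \in M$ admits no plug, which means the slice $\{x : x_{i_0} = v_0\}$ is entirely contained in $B := f^{-1}(1)$.

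Next I would ask whether $B$ covers every coordinate-value pair. If so, I manufacture a small covering multiset in $B$ by exploiting the full slice $\{x : x_{i_0} = v_0\} \subseteq B$: take the $k$ vectors $\textbf{w}^1, \ldots, \textbf{w}^k$ defined by $w^j_{i_0} = v_0$ and $w^j_i = j$ for $i \neq i_0$ (which lie in the slice and rainbow-cover every coordinate $i \neq i_0$), together with $k - 1$ plug vectors $\textbf{b}_v \in B$, one for each $v \in [k] \setminus \{v_0\}$, satisfying $b_v{}_{i_0} = v$, which supply the remaining values in coordinate $i_0$. Padding the resulting $2k - 1$ vectors with one repetition produces a monochromatic rainbow $2k$-multiset in $B$, again a contradiction. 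So $B$ fails to cover some pair $(i_1, v_1)$, which yields a slice $\{x : x_{i_1} = v_1\} \subseteq A$. If $i_0 \neq i_1$, the intersection $\{x : x_{i_0} = v_0,\ x_{i_1} = v_1\}$ is non-empty (it contains $k^{n - 2} \geq 1$ vectors since $n \geq 2$) and lies simultaneously in $A$ and $B$, impossible. Thus $i_0 = i_1 =: \ell$ and necessarily $v_0 \neq v_1$, so $f \equiv 0$ on $\{x_\ell = v_1\}$ and $f \equiv 1$ on $\{x_\ell = v_0\}$, establishing $1$-fixing.

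The main obstacle is the careful calibration of the plugging step: \Cref{lem:abp20} supplies a monochromatic $k$-tuple with up to $\lfloor k/2 \rfloor$ missing coordinate-value pairs, while the hypothesis forbids monochromatic $2k$-multisets only at full rainbow coverage. The budget matches almost exactly, since $k + \lfloor k/2 \rfloor \leq 2k$. A secondary subtlety is that \Cref{lem:abp20} supplies a monochromatic edge of only one color, so the symmetric argument in $B$ must work around this by exploiting the full slice $\{x_{i_0} = v_0\} \subseteq B$ to fabricate a near-rainbow $k$-tuple in $B$ by hand rather than through a second invocation of \Cref{lem:abp20}.
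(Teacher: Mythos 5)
Your proof is correct and uses the same core mechanism as the paper's: apply \Cref{lem:abp20} to extract a near-rainbow monochromatic $k$-tuple, then plug the missing coordinate--value pairs to build a forbidden rainbow $2k$-multiset, with the failure of a plug certifying a fully monochromatic slice. The organization differs slightly (you locate the second slice by a global ``does $B$ cover every pair'' contradiction and then argue $i_0 = i_1$, whereas the paper fixes the coordinate $\ell$ from the start and exhibits an opposite-colored slice there), but the key ideas and both $2k$-multiset constructions are essentially identical.
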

\begin{proof}
	We first prove that there exist $\ell \in [n], \alpha \in [k]$, $b \in \{0,1\}$ such that $f(\textbf{x})=b$ for all $\textbf{x} \in [k]^n$ with $\textbf{x}_\ell = \alpha$.
	Suppose for contradiction that this is not the case. Then, for every $i \in [n], j \in [k]$ there exist vectors $\textbf{x}^{i,j}, \textbf{y}^{i,j} \in [k]^n$ such that $\textbf{x}^{i,j}_i = \textbf{y}^{i,j}_i =j$, and $f(\textbf{x}^{i,j})=0$ where as $f(\textbf{y}^{i,j})=1$.
	
	Let $r=\lfloor \frac k2 \rfloor$. We view $f:[k]^n \rightarrow \{0,1\}$ as an assignment of two colors to the vertices of the hypergraph $H_r^n [k]$. As the hypergraph is not two colorable (\Cref{lem:abp20}), we can infer that there is an edge of $H_r^n[k]$ all of whose vertices are assigned the same color.
	In other words, there exist $k$ vectors $\textbf{v}^1, \textbf{v}^2, \ldots, \textbf{v}^k \in [k]^n$ and $b \in \{0,1\}$ such that $f(\textbf{v}^j)=b$ for all $j \in [k]$. Furthermore, there are at most $r$ missing values in these vectors i.e. 
	\[
	\sum_{i=1}^n \left|[k]\setminus \{ \textbf{v}^j_i : j \in [k] \} \right| \leq r
	\]
	Now, we pick $r$ vectors $\textbf{u}^1, \textbf{u}^2,\ldots, \textbf{u}^r$ (with repetitions if needed) by filling the missing values using $\textbf{x}^{i,j},\textbf{y}^{i,j}$ vectors such that 
	\begin{enumerate}
		\item $f(\textbf{u}^j)=b$ for all $j \in [r]$. 
		\item For every $i \in [n]$, 
		\[
		\{ \textbf{v}^j_i : j \in [k] \} \cup \{ \textbf{u}^j_i : j \in [r] \} = [k]
		\]
	\end{enumerate}
	By taking the union of $\{ \textbf{v}^1, \textbf{v}^2, \ldots, \textbf{v}^k\}$ and $\{\textbf{u}^1, \textbf{u}^2, \ldots, \textbf{u}^r\}$, and repeating some vectors, we obtain $2k$ vectors  $  \textbf{w}^1, \textbf{w}^2, \ldots, \textbf{w}^{2k}$ with $f(\textbf{w}^j)=b$ for all $j \in [2k]$, and 
	\[
	\{ \textbf{w}^j_i : j \in [2k] \} = [k] \, \, \forall i \in [n]
	\]
	However, this contradicts the two-coloring property of $f$. Thus, there exist $\ell \in [n], \alpha \in [k], b \in \{0,1\}$ such that $f(\textbf{x})=b$ for all $\textbf{x} \in [k]^n$ with $\textbf{x}_\ell =\alpha$. 
	
	We now claim that there exists $\beta \in [k]$ such that $f(\textbf{x})=1-b$ for all $\textbf{x} \in [k]^n$ with $\textbf{x}_\ell =\beta$. Suppose for contradiction that this is not the case. Then, there exist $k$ vectors $\textbf{v}^1, \textbf{v}^2, \ldots, \textbf{v}^k$ such that $\textbf{v}^j_\ell = j$ for all $j \in [k]$, and $f(\textbf{v}^j)=b$ for all $j \in [k]$. We now pick $\textbf{v}^{k+1}, \textbf{v}^{k+2}, \ldots, \textbf{v}^{2k} \in [k]^n$ such that $\textbf{v}^j_\ell = \alpha$ for all $j \in \{k+1, k+2, \ldots, 2k\}$, and $\textbf{v}^j_i = j -k$ for all $i \in [n]$ with $i \neq \ell$, and $j \in \{k+1, k+2, \ldots, 2k\}$. These $2k$ vectors $\textbf{v}^1, \textbf{v}^2, \ldots, \textbf{v}^{2k}$ satisfy 
	\begin{enumerate}
		\item $f(\textbf{v}^j)=b$ for all $j \in [2k]$. 
		\item For every $i \in [n]$, 
		\[
		\{\textbf{v}^j_i : j \in [2k] \} = [k] 
		\]
	\end{enumerate}
	contradicting the two-coloring property of $f$. Thus, there exists $\beta \in [k]$ such that $f(\textbf{x})=1-b$ for all $\textbf{x} \in [k]^n$ with $\textbf{x}_\ell = \beta$, completing the proof that $f$ is $1$-fixing. 
\end{proof}

We are now ready to prove~\Cref{thm:rainbow-coloring}. Our hardness result is obtained using a reduction from the Label Cover problem. This reduction is standard in the PCSP literature.(See e.g., ~\cite{BKO19})

\smallskip \noindent \textbf{Reduction.}
We start with the Label Cover instance $G=(V=L\cup R),E,\Sigma=\Sigma_L=\Sigma_R,\Pi)$ from~\Cref{thm:lc-hardness} and output a hypergraph $H=(V',E')$.
Let $n$ denote the label size $n = |\Sigma|$. 
For each vertex $v \in L \cup R$, we have a long code containing a set of nodes $K_v$ of size $[k]^n$, indexed by $n$ length vectors. 
\begin{enumerate}
	\item The vertex set of the hypergraph $V'$ is the union of all the long code nodes. 
	\[
	V' =\bigcup_{v \in V}K_v
	\]
	\item Edges of the hypergraph: For every vertex $v \in V$ of the Label Cover instance, we add an edge in $E'$ for each set of $2k$ vectors $\{\textbf{v}^1, \textbf{v}^2, \ldots, \textbf{v}^{2k}\}$ in $K_v$, if 
	\begin{equation}
		\label{eq:constraints}
		\{ \textbf{v}^j_i : j \in [2k] \}  = [k] \, \forall i \in [n].
	\end{equation}
	The number of edges in $H$ is at most 
	\[
	|E'|\leq |V|\binom{k^{|\Sigma|}}{2k}\leq |V|k^{O(k)}
	\]
	\item Equality constraints: For every constraint $\Pi_e : u \rightarrow v$ of the Label Cover, we add a set of equality constraints between nodes $\textbf{x} \in K_u$, $\textbf{y} \in K_v$ if for all $i \in [n]$, $\textbf{x}_i = \textbf{y}_{\Pi_e(i)}$. By adding an equality constraint between two nodes, we identify the two nodes together and treat it as a single node. That is, we compute the connected components of the equality constraints graph and identify a single master node for each component. We then obtain a multi-hypergraph $H_1$ from $H$ by replacing each node with the corresponding master node. However, a vertex could appear multiple times in an edge in $H_1$. We delete such occurrences from $H_1$ by setting each edge to be a simple set of the vertices contained in it, and obtain the final hypergraph $H_2$. We note the following: 
	\begin{enumerate}
		\item There exists a $k$-rainbow coloring of $H$, $f:V' \rightarrow [k]$ that respects the equality constraints i.e. $f(\textbf{x})=f(\textbf{y})$ for all pairs of nodes $\textbf{x},\textbf{y}$ with equality constraints between them if and only if $H_2$ is $k$-rainbow colorable. 
		\item Similarly, there exists a $2$-coloring of $H$ that respects equality constraints if and only if $H_2$ is $2$-colorable.
	\end{enumerate}
	Finally, the number of edges in $H_2$ is at most the number of edges in $H$.
\end{enumerate}

\smallskip \noindent \textbf{Completeness.} Suppose that there is a labeling $\sigma : V \rightarrow \Sigma$ that satisfies all the constraints. We define the coloring $f: V' \rightarrow [k]$ of $H$ as follows. For every node $\textbf{x} \in K_v$, we set the dictatorship function
\[
f(\textbf{x})=\textbf{x}_{\sigma(v)}
\]
By the constraints added in~\Cref{eq:constraints}, the function $f$ is a valid $k$-rainbow coloring of $H$.
As $\sigma$ satisfies all the constraints of the Label Cover, the coloring $f$ satisfies all the equality constraints. 

\smallskip \noindent \textbf{Soundness.} Suppose that there is no labeling $\sigma : V \rightarrow \Sigma$ that satisfies all the constraints in $G$. Then we claim that there is no $2$-coloring of $H$ that respects all the equality constraints. Suppose for contradiction that there is a $2$-coloring $f:V' \rightarrow \{0,1\}$ that respects all the equality constraints. 

Consider a vertex $v \in V$. The function $f_v:[k]^n\rightarrow \{0,1\}$, defined as $f$ on $K_v$ satisfies the conditions in~\Cref{lem:gadget}. Thus, $f_v$ is $1$-fixing for every $v \in V$. Hence, there is a function $L: V \rightarrow \Sigma$ such that for every $v \in V$, $f_v$ is $1$-fixing on the coordinate $L(v)$. We now claim that the labeling $\sigma : V \rightarrow \Sigma$ defined as $\sigma(v) =L(v)$ satisfies all the constraints in $G$. 

Consider an edge $e=(u,v)$, $u\in L, v \in R$ with the projection constraint $\Pi_e : \Sigma \rightarrow \Sigma$. Our goal is to show that $\Pi_e(L(u))=L(v)$. Suppose for contradiction that $\Pi_e(L(u)) \neq L(v)$. 
By the $1$-fixing property of $f_u$, we have $\alpha_u, \beta_u \in [k]$ such that  
\[
f_u(\textbf{x})= 0 \, \forall \textbf{x} \in [k]^n: \textbf{x}_{L(u)} = \alpha_u \quad\text{and}\quad f_u(\textbf{x})= 1 \,\forall \textbf{x} \in [k]^n: \textbf{x}_{L(u)} = \beta_u 
\]
Similarly, we have $\alpha_v, \beta_v \in [k]$ such that 
\[
f_v(\textbf{y})= 0 \, \forall \textbf{y}\in [k]^n : \textbf{y}_{L(v)} = \alpha_v \quad\text{and}\quad f_v(\textbf{y})= 1 \,\forall \textbf{y} \in [k]^n: \textbf{y}_{L(v)} = \beta_v 
\]
By the equality constraints, $f_u(\textbf{x})=f_v(\textbf{y})$ for all $\textbf{x}, \textbf{y}\in [k]^n$ such that $\textbf{x}_i = \textbf{y}_{\Pi_e(i)}\, \forall i \in [n]$. Let $\textbf{y}' \in [k]^n$ be an arbitrary vector with $ \textbf{y}'_{\Pi_e(L(u))} = \alpha_u, \textbf{y}'_{L(v)}=\beta_v$. We choose $\textbf{x}' \in [k]^n$ such that for all $i \in [n]$, $\textbf{x}'_i = \textbf{y}'_{\Pi_e(i)}$. Note that $\textbf{x}'_{L(u)}=\alpha_u$. Thus, $f_u(\textbf{x}')=0$ where as $f_v(\textbf{y}')=1$. However, this contradicts the equality constraints.

\section{Conclusion}
\label{sec:conclusion}

We conclude by mentioning a few open problems.

\begin{enumerate}
	\item A drawback of our hardness result for Vector Bin Packing is that our lower bound is only applicable when $d$ is large enough.  The reason our hardness result needs $d$ to be large enough is that the $k$-set cover hardness itself~\cite{Trevisan01} needs $k$ to be large enough. By starting our reduction with alternate set cover hardness results such as the $3$-dimensional matching problem, we can obtain improved hardness for smaller values of $d$. However, this approach will still not help for $d=2$ as the upper bound on the packing dimension of these set families is greater than $2$. It is an interesting open problem to characterize set families with packing dimension $2$. We believe that such a characterization could help in proving the hardness of approximation results for the set cover problem on set families with packing dimension $2$, which directly gives improved hardness of $2$-dimensional Vector Bin Packing where there is a significant gap between the $1.406$ factor algorithm~\cite{BEK06} and the hardness factor of $1.00167$~\cite{ray2021aptas}.
	\item $d$-dimensional Geometric Bin Packing is another open problem where packing dimension based ideas could help. The best hardness result for the $d$-dimensional Geometric Bin Packing is still the (tiny) hardness factor from the $2$-dimensional setting. A possible avenue to obtain improved inapproximability for this problem is by a reduction from a suitable set cover variant using a notion of packing dimension. However, this is easier said than done as the geometric packings are significantly harder to tame--for example, it is NP-hard to decide if a given set of $n$ rectangles can fit in a unit square, while the corresponding problem for Vector Bin Packing is trivial.
	\item Another interesting open problem is to close the gap between $O(\log d)$ algorithm and $\Omega \left( \frac{\log d}{\log \log d}\right)$ hardness for Vector Bin Covering. For the rainbow coloring, the question is: Given a hypergraph $H$ with $m$ edges that is promised to be $f(m)$-rainbow colorable, can we $2$-color it in polynomial time? The answer to this question when $f(m)$ is equal to $O(\log m)$ is yes, by a simple random $2$-coloring. By~\Cref{thm:rainbow-coloring}, the answer is no when $f(m)=\Omega\left( \frac{\log m}{\log \log m}\right)$. Which of these is tight? Bhattiprolu, Guruswami, and Lee~\cite{BhattiproluGL15} proved that in certain settings, the simple random coloring is optimal for rainbow coloring. This suggests that perhaps even here, the problem is hard when $f(m)=o(\log m)$. On the other hand, obtaining any hardness result beyond $\Omega\left( \frac{\log m}{\log \log m}\right)$ seems to be impossible with the Label Cover-Long Code framework. 
\end{enumerate}

\section*{Acknowledgements}

I am greatly indebted to Venkatesan Guruswami for helpful discussions, for his detailed feedback on the manuscript which significantly improved the presentation, and for his encouragement. 
I also thank Nikhil Bansal, Varun Gupta, Ravishankar Krishnaswamy, and Janani Sundaresan for helpful discussions and feedback on the manuscript. I am especially grateful to Ravishankar Krishnaswamy for pointing me to~\cite{KAR00}.
\bibliography{ref}
\bibliographystyle{alpha}

\appendix
\section{Hardness of simple $k$-set cover}
\label{sec:simple}

The hardness result of Kumar, Arya, and Ramesh~\cite{KAR00} is obtained from the Label Cover problem using a partition gadget along the lines of the reduction of Lund and Yannakakis~\cite{LY94}. The set families in the reduction in ~\cite{LY94} have large intersections. ~\cite{KAR00} get around this by using two main ideas: 
\begin{enumerate}
    \item They use a different partition system wherein each partition is a disjoint union of a large (super constant) number of sets instead of just $2$ sets in ~\cite{LY94}. 
    \item They use multiple sets for each label assignment to a vertex of the Label Cover, unlike a single set corresponding to each label of each vertex in~\cite{LY94}.
\end{enumerate}

As ~\cite{KAR00} were proving a $\Omega(\log n)$ hardness of the set cover, the universe size of the partition system is chosen to be the same as the number of vertices in the Label Cover instance.
This forces the set sizes to be very large. We can get around this issue by simply defining the partition system on a set of size $B$, where $B$ is a large constant. This also has an added benefit that we no longer require sub-constant hardness from the Label Cover instances, thus giving us NP-hardness directly. This observation is used by Trevisan~\cite{Trevisan01} to obtain $\ln B-O(\ln \ln B)$ NP-hardness of set cover on instances where each set has cardinality at most $B$, from Feige's $(1-\epsilon)\ln n$ set cover hardness~\cite{Feige98}. 

We now describe the parameter modifications in full detail. Let $B$ be a large constant. 

We start our reduction from Label Cover instances with soundness $\gamma = \frac{1}{32\beta^2\log^2B}$ where $\beta$ is an absolute constant to be fixed later.
\begin{theorem}(~\cite{ALMSS98,Raz98})
\label{thm:lc-constant-soundness}
Given a Label Cover instance defined on a bipartite graph $G=(V,E)$ with left alphabet $\Sigma_L$ and right alphabet $\Sigma_R$, it is NP-hard to distinguish between the following: 
\begin{enumerate}
    \item (Completeness). There exists a labeling $\sigma : V \rightarrow \Sigma_L \cup \Sigma_R$ that satisfies all the constraints. 
    \item (Soundness). No labeling to $V$ can satisfy more than $\gamma$ fraction of the constraints. 
\end{enumerate}
Furthermore the instances satisfy the following properties:
\begin{enumerate}
\item The alphabet sizes $d=|\Sigma_L|$ and $d'=|\Sigma_R|$ are both upper bounded by $(\log B)^{O(1)}$.
\item The maximum degree $deg$ of $G$ is upper bounded by $(\log B)^{O(1)}$.
\end{enumerate}
\end{theorem}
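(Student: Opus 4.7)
The plan is to derive Theorem~\ref{thm:lc-constant-soundness} as an immediate instantiation of Theorem~\ref{thm:lc-parallel} (the standard Label Cover hardness obtained from the PCP Theorem combined with Raz's parallel repetition), with the soundness parameter $\epsilon$ set to equal $\gamma = \frac{1}{32\beta^2 \log^2 B}$. Since $B$ is treated as a large constant throughout the subsequent set cover reduction, and $\beta$ is another absolute constant fixed later, the quantity $\gamma$ depends only on these constants and not on the input size $n$. Consequently $1/\gamma = O(\log^2 B)$ is a constant, which is the key observation enabling NP-hardness as opposed to a weaker hardness under a stronger complexity assumption.

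Concretely, I would begin the proof by recalling the PCP Theorem~\cite{ALMSS98} in the form of NP-hardness of gap-3SAT with some absolute constant gap, convert it to a base Label Cover instance with constant alphabet and constant soundness via the clauses-versus-variables construction, then apply Raz's parallel repetition~\cite{Raz98} for $r = O(\log \log B)$ rounds to drive the soundness below $\gamma$. The alphabet size after $r$ rounds is $|\Sigma|^r = 2^{O(\log \log B)} = (\log B)^{O(1)}$, and the degree blows up to the same order. Equivalently, and more efficiently, one can simply invoke Theorem~\ref{thm:lc-parallel} with $\epsilon = \gamma$, which packages exactly this construction. The two ``furthermore'' bounds then come for free: the alphabet bound follows from $|\Sigma_L| \leq (1/\epsilon)^c = (32\beta^2 \log^2 B)^c = (\log B)^{O(1)}$ (and $|\Sigma_R| \leq |\Sigma_L|$ by convention), while the degree bound follows from the biregularity clause $\mathrm{poly}(1/\epsilon) = (\log B)^{O(1)}$.

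The one subtle point, rather than a genuine obstacle, is verifying that the reduction actually runs in polynomial time in $n$. Theorem~\ref{thm:lc-parallel} guarantees running time $\mathrm{poly}(n, 1/\epsilon)$, which here is $\mathrm{poly}(n, \log^2 B) = \mathrm{poly}(n)$ because $B$ is a fixed constant. One must also note that the instance-size bound $|V| \leq n^{O(\log(1/\epsilon))}$ collapses to $n^{O(\log \log B)} = n^{O(1)}$ for the same reason. Thus NP-hardness is preserved, and no stronger assumption (such as $\textsf{NP} \nsubseteq \textsf{DTIME}(n^{O(\log \log n)})$) is needed, in contrast to Theorem~\ref{thm:bhc-parallel} where $\epsilon$ depended on $n$ through $k=(\log m)^{\Omega(1)}$.
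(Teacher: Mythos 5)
Your proposal is correct and matches the paper's intended derivation: the paper cites~\cite{ALMSS98,Raz98} for this theorem, which is exactly the constant-soundness instantiation of~\Cref{thm:lc-parallel} with $\epsilon = \gamma = \frac{1}{32\beta^2\log^2 B}$, and you correctly track why the constancy of $B$ keeps $1/\gamma$, the alphabet, the degree, the instance size, and the running time all polynomial, hence preserving NP-hardness. The only cosmetic point worth tightening is that the bound $|\Sigma_L|\leq (1/\epsilon)^c$ already hands you the alphabet bound directly from~\Cref{thm:lc-parallel}, so the explicit parallel-repetition round count $r=O(\log\log B)$ is a sanity check rather than a necessary step, but the arithmetic there is right.
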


Following the convention in~\cite{KAR00}, we assume that the number of vertices on the left side in $G$ is equal to that on the right side of $G$, and we denote this number by $n'$.

We now construct a partition system $\mathcal{P}$ on a universe $N$ of size $B$.
The system $\mathcal{P}$ has $d' \times (deg+1) \times d$ partitions. Each partition has $m=B^{1-\epsilon}$ parts, where $\epsilon$ is a small constant to be fixed later. The partition system is divided into $d'$ groups each containing $(deg +1 )\times d$ partitions. Each group is further organized into $deg +1 $ subgroups each of which contains $d$ partitions. Let $P_{g,s,p}$ denote the $p$th partition in the $s$th subgroup of the $g$th group and $P_{g,s,p,k}$ denote the $k$th set in $P_{g,s,p}$ where $g\in [d'],s\in[deg +1],p\in[d],k\in[m]$.
The partition system satisfies the four properties in Section $4$ of~\cite{KAR00}, the only difference being that the universe $N$ now has size $B$ instead of $n'$. Thus, the covering property (Property $4$ in~\cite{KAR00}) now states that any covering of $N$ with $\beta m \log B$ sets should contain at least $\frac{3m}{4}$ sets from the same partition. 
Such a partition system is shown to exist for large enough $B$ in~\cite{KAR00} using a randomized construction. They also derandomize the construction. But for our setting, as $B$ is a constant, we just need to show the existence of such a partition system. 

We reduce the Label Cover instance in~\Cref{thm:lc-constant-soundness} to a set cover instance $\mathcal{SC}$ by the same construction as in~\cite{KAR00}: we have a partition system corresponding to each edge of the Label Cover instance, and the union of the elements in the partition systems is the element set of $\mathcal{SC}$. The sets in $\mathcal{SC}$, $C_k(v,a)$ are defined exactly as in~\cite{KAR00}. The cardinality of each set is at most $B' =deg \times B \leq B^2$. Each element is present in at most $md =O(B)$ sets. The fact that $\mathcal{SC}$ is a simple set system follows from Lemma 1 of~\cite{KAR00}. By Lemma 2 in~\cite{KAR00}, if there is a labeling of the Label Cover instance, then there is a set cover of size $n'm$ in $\mathcal{SC}$. If there is a set cover of size $\frac{\beta}{2}n'm\log B$ in $\mathcal{SC}$, then there is a labeling of $G$ that satisfies $\gamma$ fraction of constraints. The proof of this soundness follows along the same lines as Lemma $3$ of~\cite{KAR00}, with the only difference being that we now define the $\textit{good}$ edges as edges having $\#(e)\leq \beta m \log B$. 
\section{SDP Relaxation of Monochromatic-Clique}
\label{sec:appendix-vs}
We consider the following SDP relaxation of the graph coloring problem on $G=(V,E)$: 
\begin{align*}
    \text{Minimize } &k \\ 
    \langle u_i, u_i \rangle &= 1 \,\, \forall i \in V \\
    \langle u_i, u_j \rangle &\leq \frac{-1}{k-1} \, \, \forall (i,j)\in E
\end{align*}
The optimal solution to this SDP is referred to as the vector chromatic number $\chi_v(G)$ of the graph $G$. It is equivalent to the Lovasz theta function of the complement of $G$. We have the following sandwich property due to~\cite{Knuth94}:
\[
\omega(G) \leq \chi_v(G) \leq \chi(G)
\]

\subsection{Algorithm when $B> \sqrt{n}$}
There is a simple algorithm for the \mc$(k, B)$ problem when $B>k$: 
We compute $\chi_v(G)$ in polynomial time, and we check if $\chi_v(G)\leq k $. In this case, there is no clique of size $B$ in $G$, and we output YES. 
If $\chi_v(G)> k$, then the graph cannot be colored with $k$ colors, and in this case, we output NO. 

Note that if $k(B-1) \geq n$, there is always an assignment of $k$ colors to the vertices of the graph without a clique of size $B$, thus the problem is trivial.

\subsection{Integrality gap}

The above algorithm proves that in any graph with vector chromatic number at most $k$, there is an assignment of $k$ colors to the vertices that has monochromatic clique of size at most $\sqrt{n}$. We now prove that this cannot be significantly improved: 
\begin{theorem}
\label{thm:sdp-gap}
For $n$ large enough, there exists a graph $G=(V,E)$ with $n$ vertices, and a parameter $k$ such that
\begin{enumerate}
    \item $\chi_v(G)\leq k$. 
    \item In any assignment of $k$ colors to the vertices of $G$, there is a monochromatic clique of size $n^{\Omega(1)}$.
\end{enumerate}
\end{theorem}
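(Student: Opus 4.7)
The plan is to realize the integrality gap via a dense random graph coupled with a Ramsey-type argument inside color classes.

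Fix a small constant $\delta \in (0,1)$ and take $G \sim G(n, 1-n^{-\delta})$, so that the complement $\bar G$ is a sparse Erd\H{o}s--R\'enyi graph with edge probability $n^{-\delta}$. I would show that with high probability, $G$ simultaneously satisfies two properties: (a) $\alpha(G) = \omega(\bar G) \le c$ for a constant $c = O(1/\delta)$, which follows from a standard first-moment estimate on $s$-cliques of $\bar G$ (expected number of $s$-cliques is $\binom{n}{s} n^{-\delta\binom{s}{2}}$, which vanishes once $s$ exceeds $2/\delta + 1$); and (b) $\chi_v(G) \le k$ where $k := O(n^{(1+\delta)/2})$, which would follow from the SDP inequality $\chi_v(G) \le \vartheta(\bar G)$ combined with the asymptotic $\vartheta(G(n,p)) = \Theta(\sqrt{n/p})$ for sparse random graphs in the regime $np \to \infty$ (Coja-Oghlan's analysis of the Lov\'asz theta function), applied with $p = n^{-\delta}$.

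Granting (a) and (b), the remainder of the proof is a short pigeonhole-plus-Ramsey argument. Fix any $k$-coloring of $V(G)$; some color class $S$ has $|S| \ge n/k = \Omega(n^{(1-\delta)/2})$. Since $\alpha(G[S]) \le \alpha(G) \le c$, the classical Ramsey bound $R(c+1, s+1) \le \binom{c+s}{c}$ implies that $G[S]$ must contain a clique of size $s = \Omega(|S|^{1/c}) = n^{\Omega(\delta/c)} = n^{\Omega(\delta^{2})} = n^{\Omega(1)}$, which is the desired monochromatic clique.

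The main obstacle is establishing the upper bound $\chi_v(G) \le O(n^{(1+\delta)/2})$ with high probability; this reduces to estimating $\vartheta(\bar G)$ for a sparse random graph and leans on the (delicate) spectral/SDP analysis of sparse Erd\H{o}s--R\'enyi graphs. A natural alternative that sidesteps random-graph concentration is to use an explicit pseudorandom construction (e.g., a Paley-type graph or a norm graph on $\mathbb{F}_q^d$) where both $\alpha$ and $\vartheta$ can be controlled deterministically via eigenvalue bounds, at the cost of more algebraic bookkeeping.
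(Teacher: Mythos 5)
Your proposal is correct and takes essentially the same route as the paper: the paper samples $\bar G \sim G(n,n^{-1/2})$ (the case $\delta=1/2$ of your construction), bounds $\chi_v(G)=\vartheta(\bar G)=O(n^{3/4})$ via Juh\'asz's estimate for random graphs, notes $\alpha(G)=\omega(\bar G)\le 5$ w.h.p.\ by a first-moment computation, and extracts an $n^{\Omega(1)}$-clique from a color class of size $\ge n/k$ using the same Ramsey-type bound. The only differences are cosmetic --- you keep $\delta$ general and phrase the construction on $G$ directly, while the paper fixes $\delta=1/2$ and works with the complement, complementing only at the end.
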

\begin{proof}
We first prove the following: for large enough $n$, there exists a graph $G$ on $n$ vertices, and an integer $k$ such that 
\begin{enumerate}
    \item $\upvartheta(G) \leq k$. 
    \item In any assignment of $k$ colors to the vertices of the graph $G$, there exists a monochromatic independent set of size $B=n^{\Omega(1)}$.
\end{enumerate}
Our construction is a probabilistic one: we sample $G$ from $G(n,p)$ with $p=\frac{1}{\sqrt{n}}$. It has been proved~\cite{Juhasz82} that the Lovasz theta function of this random graph satisfies 
\[
\upvartheta(G) \leq 2 n^{\frac 34} + \Tilde{O}(n^{\frac 13}\log n)
\]
with high probability. 
We set $k = 3n^{\frac 34}$. For large enough $n$, with high probability, we have $\upvartheta(G)\leq k$. 

Furthermore, the random graph $G(n,p)$ with $p=o(n^{-\frac{2}{5}})$ has no $K_6$ with high probability(See e.g., \cite{FK15}). Thus, using~\Cref{lem:ramsey-independent}, we can infer that in any subset of size $\frac{n^{\frac{1}{4}}}{3}$, there is an independent set of size at least $\frac{n^{\frac{1}{24}}}{2}$. Hence, in any assignment of $k$ colors to the vertices of the graph $G$, there is a monochromatic independent set of size $n^{\Omega(1)}$. Taking the complement, we get a graph with the required properties. 
\end{proof}

\end{document}